\documentclass[letterpaper,11pt]{article}
 
\usepackage{microtype}

\title{Limits on All Known (and Some Unknown) \\Approaches to Matrix Multiplication}
\author{Josh Alman\footnote{MIT CSAIL and EECS, jalman@mit.edu. Supported by two NSF Career Awards.} \and Virginia Vassilevska Williams\footnote{MIT CSAIL and EECS, virgi@mit.edu. Partially supported by an NSF Career Award, a Sloan Fellowship, NSF Grants CCF-1417238, CCF-1528078 and CCF-1514339, and BSF Grant BSF:2012338.}}

\usepackage[utf8]{inputenc}
\usepackage{amsmath,amsfonts,amssymb,amsthm,fullpage}
\usepackage{graphicx}
\usepackage{caption}
\usepackage{subcaption}
\usepackage{enumerate}
\usepackage{url}

\newtheorem{theorem}{Theorem}[section]

\newtheorem{corollary}{Corollary}[section]
\newtheorem{lemma}{Lemma}[section]
\newtheorem{claim}{Claim}[section]
\newtheorem{definition}{Definition}[section]

\theoremstyle{remark}
\newtheorem{remark}{Remark}[section]
\newtheorem{example}{Example}[section]

\def \C {{\mathbb C}}
\def \Z {{\mathbb Z}}
\def \N {{\mathbb N}}
\def \F {{\mathbb F}}
\def \R {{\mathbb R}}
\def \E {{\mathbb E}}

\def\eps{\varepsilon}

\begin{document}
\date{}
\maketitle

\begin{abstract}
We study the known techniques for designing Matrix Multiplication algorithms. The two main approaches are the Laser method of Strassen, and the Group theoretic approach of Cohn and Umans. We define a generalization based on zeroing outs which subsumes these two approaches, which we call the Solar method, and an even more general method based on monomial degenerations, which we call the Galactic method.

We then design a suite of techniques for proving lower bounds on the value of $\omega$, the exponent of matrix multiplication, which can be achieved by algorithms using many tensors $T$ and the Galactic method. Some of our techniques exploit `local' properties of $T$, like finding a sub-tensor of $T$ which is so `weak' that $T$ itself couldn't be used to achieve a good bound on $\omega$, while others exploit `global' properties, like $T$ being a monomial degeneration of the structural tensor of a group algebra.

Our main result is that there is a universal constant $\ell>2$ such that a large class of tensors generalizing the Coppersmith-Winograd tensor $CW_q$ cannot be used within the Galactic method to show a bound on $\omega$ better than $\ell$, for any $q$. We give evidence that previous lower-bounding techniques were not strong enough to show this. We also prove a number of complementary results along the way, including that for any group $G$, the structural tensor of $\C[G]$ can be used to recover the best bound on $\omega$ which the Coppersmith-Winograd approach gets using $CW_{|G|-2}$ as long as the asymptotic rank of the structural tensor is not too large.
\end{abstract}

\thispagestyle{empty}
\newpage
\setcounter{page}{1}

\section{Introduction}
A fundamental problem in theoretical computer science is to determine the time complexity of Matrix Multiplication (MM), one of the most basic linear algebraic operations. The question typically translates to determining the {\em exponent of matrix multiplication}: the smallest real number $\omega$ such that the product of two $n\times n$ matrices over a field $\F$ can be determined using $n^{\omega+o(1)}$ operations over $\F$. Trivially, $2\leq \omega\leq 3$.
Many have conjectured over the years that $\omega=2$. This conjecture is extremely attractive:  a near-linear time algorithm for MM would immediately imply near-optimal algorithms for many problems. 

Almost $50$ years have passed since Strassen~\cite{strassen} first showed that $\omega\leq 2.81<3$. Since then, an impressive toolbox of techniques has been developed to obtain faster MM algorithms, culminating in the current best bound $\omega<2.373$ \cite{legall,v12}.
Unfortunately, this bound is far from $2$, and the current methods seem to have reached a standstill. Recent research has turned to proving limitations on the two main MM techniques: the Laser method of Strassen~\cite{laser} and the Group theoretic method of Cohn and Umans~\cite{cohn2003group}.

Both Coppersmith and Winograd~\cite{coppersmith} and Cohn et al.~\cite{cohn2005group} proposed conjectures which, if true, would imply that $\omega=2$. The first conjecture works in conjunction with the Laser method, and the second with the Group-theoretic method.
The first ``technique limitation'' result was by Alon, Shpilka and Umans~\cite{sunfl} who showed that both conjectures would contradict the widely believed Sunflower conjecture of Erd\"os and Rado.

Ambainis, Filmus and Le Gall~\cite{ambainis} formalized the specific implementation of the Laser method proposed by Coppersmith and Winograd~\cite{coppersmith} which is used in the recent papers on MM. They gave limitations of this implementation, and in particular showed that the exact approach used in \cite{coppersmith,stothers,legall,v12} cannot achieve a bound on $\omega$ better than $2.3078$. The analyzed approach, the ``Laser Method with Merging'', is a bit more general than the approaches in \cite{coppersmith,stothers,legall,v12}: in a sense it corresponds to a dream implementation of the exact approach.

Blasiak et al.~\cite{blasiak} considered the group theoretic framework for developing MM algorithms proposed by Cohn and Umans~\cite{cohn2003group}, and showed that this approach cannot prove $\omega=2$ using any fixed abelian group. In follow-up work, Sawin \cite{sawin2017bounds} extended this to any fixed non-abelian group, and Blasiak et al. \cite{blasiak2017groups} extended it to a host of families of non-abelian groups.

Alman and Vassilevska W.~\cite{almanitcs} considered a generalization of the Laser method and proved limitations on this generalization when it is applied to any tensor which is a {\em monomial degeneration} of the structure tensor of the group algebra $\C[C_q]$ of the cyclic group $C_q$ of order $q$. (See Section~\ref{sec:prelims} for the definitions.) 
The bounds on $\omega$ achieved by known implementations of the Laser method \cite{laser,coppersmith,stothers,legall,v12} can all be obtained from tensors of this form. 
The formalization also subsumes the group theoretic approach  applied to $C_q$. The main result of~\cite{almanitcs} is that this generalized approach cannot achieve $\omega=2$ for any fixed $q$. 

All limitations proven so far suffer from several weaknesses:

\begin{itemize}
\item 
All three of \cite{blasiak}, \cite{blasiak2017groups} and \cite{almanitcs} show how {\em some} approach that can yield the current best bounds on $\omega$ cannot give $\omega=2$. None of the three works actually prove that one cannot use the particular tensor $CW_q$ used in recent work~\cite{coppersmith,stothers,v12,legall} to show $\omega=2$. \cite{almanitcs} proved this limitation for a rotated version of $CW_q$, but only for small $q$. 
Although \cite{blasiak} and \cite{blasiak2017groups} do not say which version their proofs apply to, in this paper we give evidence that $CW_q$ does not embed easily in a group tensor, and so it is likely that their proofs could also only apply to a rotated version of $CW_q$, and not to $CW_q$ itself. Moreover, even for the Coppersmith-Winograd-like tensors for which the known limitations do apply, it is only shown that for a {\em fixed} $q$ one cannot derive $\omega=2$. In particular, so far the lower bounds $\omega_q$ on what $\omega$ one can achieve for a value $q$ approached $2$. This left open the possibility to prove $\omega=2$ by analyzing $CW_q$ in the limit as $q\rightarrow\infty$.

\item All limitations proven so far are for very specific attacks on proving $\omega=2$. While the proofs of \cite{ambainis} apply directly to $CW_q$, they only apply to the restricted Laser Method with Merging, and no longer apply to slight changes to this. The proofs in \cite{blasiak} and \cite{blasiak2017groups} are tailored to the group theoretic approach and do not apply (for instance) to the Laser method on ``non-group'' tensors. While the limits in \cite{almanitcs} do apply to a more general method than both the group theoretic approach and the Laser method, they only work for specific types of tensors, which in particular do not include $CW_q$.
\end{itemize}

\paragraph{Our results.}
All known approaches to matrix multiplication follow the following outline. First, obtaining a bound on $\omega$ corresponds to determining the {\em asymptotic rank} of the matrix multiplication tensor $\langle N,N,N\rangle$ (see the Preliminaries for a formal definition). Because getting a handle on this asymptotic rank seems difficult, one typically works with a tensor $t$ (or a tensor family) whose asymptotic rank $r$ is known. Then, to analyze the asymptotic rank of matrix multiplication, one considers large tensor powers $t^{\otimes n}$ of $t$ and attempts to ``embed'' $\langle N, N,N\rangle$ into $t^{\otimes n}$ for large $N$ without increasing the asymptotic rank. In effect, one is showing that the recursive $O(r^n)$ time algorithm for computing $t^{\otimes n}$ can be used to multiply $N\times N$ matrices. This gives a bound on $\omega$ from $N^\omega\leq r^n$. The larger $N$ is in terms of $n$, the smaller the bound on $\omega$.

When embedding matrix multiplication into a tensor power $t^{\otimes n}$, we would like the embedding to have the property that if $a$ embeds in $b$, then the asymptotic rank of $a$ is upper bounded by the asymptotic rank of $b$. This way, our embedding gives an upper bound on the asymptotic rank of matrix multiplication, and hence on $\omega$. The most general type of embedding that preserves asymptotic rank in this way is a so called {\em degeneration} of the tensor $t^{\otimes n}$. A more restricted type of rank-preserving embedding is a so called {\em monomial degeneration}. The embeddings used in all known approaches for upper bounding $\omega$ so far are even more restricted {\em zeroing outs}. The laser method is a restricted type of zeroing out that has only been applied so far to tensors that look like matrix multiplication tensors or to ones related to the Coppersmith-Winograd tensor. The group theoretic approach gives clean definitions that imply the existence of a zeroing out of a group tensor into a matrix multiplication tensor.
(See the preliminaries for formal definitions.)

We define three very general methods of analyzing tensors. There are {\em no known} techniques to analyze tensors in this generality.
\begin{itemize}
\item The {\bf Solar} Method applied to a tensor $t$ of asymptotic rank $r$ considers $t^{\otimes n}$ for large $n$, then considers all possible ways to {\em zero out} $t^{\otimes n}$ into a disjoint sum $\langle a_1, b_1, c_1 \rangle \oplus \cdots \oplus \langle a_m, b_m, c_m \rangle$ of matrix multiplication tensors, giving a bound on $\omega$ from the asymptotic sum inequality of $\sum_{i=1}^m (a_i b_i c_i)^{\omega/3} \leq r^n$, and then takes the minimum (or $\liminf$) of all bounds on $\omega$ which can be achieved in this way. This method already subsumes both the group theoretic method and the laser method. It is also much more general, as it is unclear whether the two known techniques produce the best possible zeroing outs even for specific tensors. 

\item The {\bf Galactic} Method replaces the zeroing out in the Solar Method with more powerful {\em monomial degenerations}. Since monomial degenerations are strictly more powerful than zeroing outs in general, this leads to even more possible embeddings of disjoint sums of matrix multiplication tensors.

\item The {\bf Universal} Method again replaces the monomial degenerations of the Galactic Method with the even more powerful {\em degenerations}.
\end{itemize}

We note that the methods only differ when they are applied to the same tensor $t$. Trivially, any one of the methods can find the best bound on $\omega$ if it is ``applied'' to $t=\langle n,n,n\rangle$ itself. Starting with the same tensor $t$, however, the Universal  method can in principle give much better bounds on $\omega$ than the Solar or Galactic methods applied to the same $t$.

For a tensor $T$, let $\omega_g(T)$ be the best bound on $\omega$ that one can obtain by applying the Galactic method to $T$. We define a class of {\em generalized $CW_q$} tensors that contain $CW_q$ and many more tensors related to it, such as the rotated tensor used in \cite{almanitcs}. Our {\bf main result} is:
\begin{theorem}[Informal]
There is a universal constant $\ell>2$ independent of $q$ so that for every one of the generalized $CW_q$ tensors $T$, $\omega_g(T)\geq \ell$.
\end{theorem}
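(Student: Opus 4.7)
The plan is to establish a uniform lower bound on $\omega_g(T)$ by exploiting the rigid combinatorial structure shared by all generalized $CW_q$ tensors, and then showing that this structure is rigid enough to survive arbitrary monomial degeneration of $T^{\otimes n}$.

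First I would pin down the class of generalized $CW_q$ tensors. Each such $T$ has its three variable sets naturally partitioned into three ``levels'' (in $CW_q$ these are the groups $\{x_0\}$, $\{x_1,\dots,x_q\}$, $\{x_{q+1}\}$, and symmetrically for $y$ and $z$), with support confined to triples of levels whose labels satisfy a fixed linear constraint. A normalization lemma should show that, up to the equivalence relations under which the Galactic method is invariant, every tensor in the class inherits this layered support pattern. This is the ``global'' structural ingredient alluded to in the abstract.

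Second, I would prove a structural invariance lemma for monomial degenerations: if $T'$ is a monomial degeneration of $T^{\otimes n}$ and $T' = \langle a_1,b_1,c_1\rangle \oplus \cdots \oplus \langle a_m,b_m,c_m\rangle$ is a disjoint sum of matrix multiplication tensors, then each summand must live inside a single level-type pattern of $T^{\otimes n}$, because a monomial degeneration acts diagonally on the coordinate directions and therefore preserves the level labels up to a global shift. The value $(a_i b_i c_i)^{\omega/3}$ of each block is then bounded by the matrix multiplication content of the corresponding level slice of $T^{\otimes n}$, which in turn is controlled by a strict subset of the variables of $T^{\otimes n}$.

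Third, I would combine this with the asymptotic sum inequality and an entropy/counting argument in the spirit of Strassen's support functional: the sum over matrix multiplication blocks distributes over the possible level-type patterns, and each pattern contributes a value-to-rank ratio bounded by a slice rank of $T$. This gives an inequality of the form $\sum_i (a_i b_i c_i)^{\omega/3} \leq r^n \cdot f(\omega, q)$, and matching it against the matrix multiplication requirement $\sum_i (a_i b_i c_i)^{\omega/3} \geq N^\omega$ yields an implicit lower bound $\omega \geq \omega_q$. This is the ``local'' step, using a sub-tensor that is too weak for the Galactic method to push $\omega$ below $\omega_q$.

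The hard part will be uniformity in $q$. All previous limitation results, including those of \cite{ambainis} for the Laser method and of \cite{blasiak,blasiak2017groups,almanitcs} for the group-theoretic approach, yield bounds that approach $2$ as $q \to \infty$, because the central level of $CW_q$ grows with $q$ and dominates any naive application of Schönhage-type inequalities. To extract a universal gap $\ell > 2$, I would look for a specific weak sub-tensor that is present in every generalized $CW_q$ and whose asymptotic value-to-rank ratio is bounded by a $q$-independent constant; the local-property machinery would then transfer the weakness of this sub-tensor into the desired uniform lower bound on $\omega_g(T)$. Identifying the right sub-tensor and showing it dominates the slice-rank calculation even as $q \to \infty$ is the central technical challenge.
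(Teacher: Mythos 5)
Your proposal has the right shape in outline---exploit the combinatorial structure of $CW_q$, combine with the asymptotic sum inequality, and worry about uniformity in $q$---but two of the concrete ingredients you propose do not work, and the one that matters most is left unsolved.

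First, the ``structural invariance lemma'' you sketch is false as stated. A monomial degeneration of $T^{\otimes n}$ is governed only by integer weight functions $a,b,c$ on variable tuples satisfying $a(i)+b(j)+c(k)\geq 0$ with equality on the surviving support; nothing forces the weights to respect the level-type blocks of $T^{\otimes n}$, so a matrix multiplication summand in the degenerated tensor need not ``live inside a single level-type pattern.'' Indeed the whole point of the Galactic method as defined in the paper is that it permits degenerations strictly more general than the block-respecting zeroings of the Laser method; building your lower bound on a block-preservation claim would only bound the Laser method, not the Galactic method. The paper avoids this entirely: its Step 1 (Theorem~\ref{thm:IROmega} and Corollary~\ref{cor:omegaandi}) reduces lower bounds on $\omega_g(T)$ to \emph{upper} bounds on the asymptotic independence number $\tilde{I}(T)$, i.e.\ to controlling how large an independent tensor any monomial degeneration of $T^{\otimes n}$ can produce. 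That reduction uses only counting of $x,y,z$-variables and Strassen's degeneration of $\langle a,b,c\rangle$ into a large independent tensor, and it makes no assumption about the shape of the degeneration.

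Second, you route the quantitative step through slice rank. The paper explicitly notes that this route is blocked for Coppersmith--Winograd tensors: $CW_q$ has slice rank only $3$, yet slice rank is not submultiplicative, and in fact $\text{slice-rank}(CW_q^{\otimes n})$ grows like $5.15^n$ or larger (forced by the known upper bound $\omega_s(CW_5)\leq 2.373$), and no useful \emph{upper} bound on $\text{slice-rank}(CW_q^{\otimes n})$ is known. So ``bounding the value-to-rank ratio by a slice rank of $T$'' would not yield anything nontrivial here. This is exactly why the paper introduces the partitioning tools (Theorems~\ref{thm:removeanx}, \ref{thm:probs}, \ref{thm:measures}) as a replacement for slice-rank-style arguments.

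Finally, you correctly identify the uniformity-in-$q$ obstruction but do not resolve it. The paper's resolution is quite short once you have the right lemma: any generalized $CW_q$ tensor partitions as $T = T_1 + T_2 + T_3$ where $T_1$ has only one $x$-variable, $T_2$ only one $y$-variable, and $T_3$ only one $z$-variable, so each $T_i$ has measure $\mu(T_i)=O(q^2)$, and Theorem~\ref{thm:measures} gives $\tilde{I}(T)\leq 3\cdot O(q^{2/3})$, which is polynomially smaller than $\tilde{R}(T)=q+2$ uniformly for all large $q$. (Small $q$ is handled separately via the corner-term argument, Corollary~\ref{cor:corners}.) Your proposal to ``look for a specific weak sub-tensor'' points in roughly the right direction---the weakness being exploited is indeed a local structural property---but the specific mechanism you propose (a single sub-tensor of bounded value-to-rank ratio together with a slice-rank majorization) is not the one that works, and you acknowledge you have not found the right object. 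The missing idea is the passage through $\tilde{I}$, which is multiplicative-friendly enough to admit the partitioning arguments and is directly linked to $\omega_g$ by Theorem~\ref{thm:IROmega}.
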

{\bf Thus, if one uses a generalized CW tensor, even in the limit and even if one uses the Galactic method subsuming all known approaches, one cannot prove $\omega=2$.}

To prove this result, we develop several tools for proving lower bounds on $\omega_g(T)$ for structured tensors. Most are relatively simple combinatorial arguments but are still powerful enough to show strong lower bounds on $\omega_g(T)$.

We also study the relationship between the generalized $CW$ tensors and the structure tensors of group algebras. We show several new results:
\begin{enumerate}
\item {\bf A Limit on the Group-Theoretic Approach.} The original $CW_q$ tensor is not a sub-tensor (and hence also not a monomial degeneration) of the structure tensor $T_G$ of $\C[G]$ for any $G$ of order $<2q$ when (a) $G$ is abelian and $q$ arbitrary, or (b) $G$ is non-abelian and $q \in \{3,4,5,6,7,8,9\}$.
Note that $CW_q$ for these small values of $q$ are of particular interest: the best known  bounds on $\omega$ have been proved using $q<7$. This shows that lower bound techniques based on tri-colored sum-free sets and group tensors cannot be easily applied to $CW_q$.

\item {\bf All Finite Groups Suffice for Current $\omega$ Bounds.} Every finite group $G$ has a monomial degeneration to some generalized CW tensor of parameter $q=|G|-2$. Thus, applying the Galactic method on $T_G$ for {\em every} $G$ (with sufficiently small asymptotic rank, i.e. $\tilde{R}(T_G)=|G|$) can yield the current best bounds on $\omega$.

\item {\bf New Tri-Colored Sum-Free Set Constructions.} For every finite group $G$, there is a constant $c_{|G|} > 2/3$ depending only on $|G|$ such that its $n$th tensor power $G^n$ has a tri-colored sum-free set of size at least $|G|^{c_{|G|}n - o(n)}$. For moderate $|G|$, the constant $c_{|G|}$ is quite a bit larger than $2/3$. To our knowledge, such a general result was not known until now.
\end{enumerate}
For more details on our results, see Section~\ref{sec:overview} below.

\section{Overview of Results and Proofs}
\label{sec:overview}
In this section, we give an outline of our techniques which are used to prove our main result: that there exists a universal constant $c>2$ such that the Galactic method, when applied to any generalized Coppersmith-Winograd tensor, cannot prove a better upper bound on $\omega$ than $c$. We will assume familiarity with standard notions and notation about tensors related to matrix multiplication algorithms in this section; we refer the reader to the Preliminaries, in Section~\ref{sec:prelims}, where these are defined. For a tensor $T$, we will write $\omega_g(T)$ to denote the best upper bound on $\omega$ which can be achieved using the Galactic method applied to $T$.

\paragraph{Step 1: The Relationship Between Matrix Multiplication and Independent Tensors.}

In Section~\ref{sec:independent}, we begin by laying out the main framework for proving lower bounds on $\omega_g(T)$. 
The key is to consider a different property of $T$, the \emph{asymptotic independence number of $T$}, denoted $\tilde{I}(T)$. Loosely, $\tilde{I}(T)$ gives a measure of how large of an independent tensor $T^{\otimes n}$ can monomial degenerate into for large $n$. From the definition, we will get a simple upper bound $\tilde{I}(T) \leq \tilde{R}(T)$, the \emph{asymptotic rank} of $T$.
By constructing upper bounds on $\tilde{I}(T)$, we will show in Corollary~\ref{cor:omegaandi} that:
\begin{itemize}
    \item For any tensor $T$, if $\omega_g(T)=2$, then $\tilde{I}(T) = \tilde{R}(T)$, and moreover,
    \item For every constant $s<1$, there is a constant $w>2$ (which is increasing as $s$ decreases), such that if $\tilde{I}(T) < \tilde{R}(T)^s$, then $\omega_g(T) \geq w$.
\end{itemize}
Hence, upper bounds on $\tilde{I}(T)$ give lower bounds on $\omega_g(T)$. We will thus present a number of different ways to prove upper bounds on $\tilde{I}(T)$ in the next steps.

\paragraph{Step 2: Partitioning Tools for Upper Bounding $\tilde{I}$.}

In Section~\ref{sec:partitioning}, we present our first suite of tools for proving upper bounds on $\tilde{I}(T)$. These tools are based on finding `local' combinatorial properties of the tensor $T$ which imply that $\tilde{I}(T)$ can't be too large. They are loosely summarized as follows; in the below, let $T$ be a tensor over $X,Y,Z$:

\begin{itemize}
    \item Theorem~\ref{thm:removeanx}: Let $S$ be any subset of the $X$-variables of $T$, and let $A$ be the tensor $T$ restricted to $S$ (i.e. $T$ with all the variables in $X \setminus S$ zeroed out). If $\tilde{I}(A)$ is sufficiently smaller than $|S|$, then $\tilde{I}(T) < |X| \leq \tilde{R}(T)$. 
    
    In other words, if $A$ has a sufficiently small $\tilde{I}(A)$ so that it is relatively far away from being able to prove $\omega_g(A)=2$, then no matter how we complete $A$ to get to $T$, the tensor $T$ will still not be able to prove $\omega_g(T)=2$.
    
    \item Theorem~\ref{thm:probs}: If $T$ is a tensor such that $\tilde{I}(T)$ is close to $\tilde{R}(T)$, then there is a probability distribution on the terms of $T$ such that each $X$, $Y$, and $Z$ variable is assigned almost the same probability mass.
    
    For many tensors of interest, one or more of the variables `behave differently' from the rest, and this can be used to prove that such a probability distribution cannot exist. For one example, we prove in Corollary~\ref{cor:corners} that if $T$ is a tensor with two `corner terms' -- terms $x_q y_1 z_1, x_1 y_q z_1 \in T$ such that no other term in $T$ contains either $x_q$ or $y_q$ -- then, $\tilde{I}(T)<\tilde{R}(T)$.
    
    These `corner terms' are actually quite common in tensors which have been analyzed with the Laser Method. For instance, one of the main improvements of Coppersmith-Winograd~\cite{coppersmith} over Strassen~\cite{laser} was noticing that the border rank expression of Strassen could be augmented by adding in three  corner terms, resulting in the Coppersmith-Winograd tensor.
    
    \item Theorem~\ref{thm:measures}: For a tensor $T$ over variables $X,Y,Z$, where each of these variables appears in the support of $T$, we define the \emph{measure} of $T$, denoted $\mu(T)$, by $\mu(T) := |X| \cdot |Y| \cdot |Z|$. Suppose the terms of $T$ can be partitioned\footnote{We mean `partitioned' as in a set partition, not any restricted notion like the `block partitions' of the Laser Method.} into tensors $T_1, \ldots, T_k$. Then, $\tilde{I}(T) \leq (\mu(T_1))^{2/3} + \cdots + (\mu(T_k))^{2/3}$.
    
    This gives a generalization of the basic inequality that $\tilde{I}(T) \leq \min\{ |X|, |Y|, |Z|\}$. Whenever $T$ can be partitioned up into parts which each do not have many of one or more type of variable, we can get a nontrivial upper bound on $\tilde{I}(T)$. Many natural border rank expressions naturally give rise to such partitions, as do the `blockings' used in the Laser method.
\end{itemize}

As we will see, $\tilde{I}$ is neither additive nor multiplicative, i.e. there are tensors $A$ and $B$ such that $\tilde{I}(A+B) \gg \tilde{I}(A) + \tilde{I}(B)$, and tensors $C$ and $D$ such that $\tilde{I}(C \otimes D) \gg \tilde{I}(C) \cdot \tilde{I}(D)$. One of the main components of the proofs of correctness of each of the three tools above will be narrowing in on classes of tensors $A$ and $B$ such that $\tilde{I}(A+B)$ is not too much greater than $\tilde{I}(A)+\tilde{I}(B)$, or classes of tensors $C$ and $D$ such that $\tilde{I}(C \otimes D)$ is not too much greater than $\tilde{I}(C) \cdot \tilde{I}(D)$. Our proofs will then manipulate our tensors using partitionings so that they fall into these classes.

\paragraph{The Main Result.}
The three partitioning tools are designed to be useful for proving nontrivial upper bounds on $\tilde{I}$ for general classes of tensors. They are especially well-suited to tensors which have structures that make them amenable to known techniques like the Laser Method. In particular, we will ultimately show that any generalized Coppersmith-Winograd tensor has \emph{all three} of these properties. Indeed, our main result, Theorem~\ref{thm:main}, follows from these tools: For any generalized CW tensor $T$, a lower bound on $\omega_g(T)$ for small $q$ will follow from Corollary~\ref{cor:corners}, and a lower bound on $\omega_g(T)$ as $q$ gets large (but such that the bound gets larger as $q$ increases, not smaller) will follow from either Theorem~\ref{thm:removeanx} or Theorem~\ref{thm:measures}.

\paragraph{Bounds on $\tilde{I}$ for Group Tensors.} 
In addition to the above, we also study group tensors. For a finite group $G$, we call the structural tensor $T_{\C[G]}$ of the group algebra $\C[G]$ the \emph{group tensor $T_G$ of $G$}. We are able to achieve both nontrivial upper bounds and lower bounds on $\tilde{I}(T_G)$ for any finite group $G$, including non-abelian groups.

\paragraph{Upper Bounds on $\tilde{I}(T_G)$.}
We first show that for any finite group $G$, we have $\tilde{I}(T_G) < |G| \leq \tilde{R}(T_G)$, and hence $\omega_g(T_G)>2$. In other words, no fixed group $G$ can yield $\omega=2$ by using the Galactic method applied to $T_G$. By comparison, the Group Theoretic approach for $G$ can be viewed as analyzing $T_G$ using a particular technique within the Solar method (see Section \ref{sec:gtm} for more details). This therefore generalizes a remark which is already known within the Group Theoretic community \cite{blasiak2017groups}: that the Group Theoretic approach (using the so-called `Simultaneous Triple Product Property') cannot yield $\omega=2$ using any fixed finite group $G$. It does not, however, rule out using a sequence of groups whose lower bounds approach $2$.

Our proof begins by proving a generalization of a remark from \cite{almanitcs}: that lower bounds on $\tilde{I}(T_G)$ give rise to constructions of `tri-colored sum-free sets' in $G^n$ for sufficiently large integer $n$ (\cite{almanitcs} proved this when $G$ is a cyclic group, although our proof is almost identical). Tri-colored sum-free sets are objects from extremal combinatorics which have been studied extensively recently. We will, in particular, use a recent result of Sawin \cite{sawin2017bounds}, who showed that for any finite group $G$, there is a sufficiently large $n$ such that $G^n$ does not have particularly large tri-colored sum-free sets. 

We give this proof in Section~\ref{sec:grouptools}. In that section, we also show that there are natural tensors, like the Coppersmith-Winograd tensors used to give the best known upper bounds on $\omega$, which \emph{cannot} even be written as sub-tensors of relatively small group tensors. In other words, the high-powered hammer that $\omega_g(T_G)>2$ cannot be used to give lower bound for every tensor of interest, and other techniques like the combinatorial partitioning techniques from step 2 above are needed.

\paragraph{Lower Bounds on $\tilde{I}(T_G)$}
Although our main framework involves proving upper bounds on $\tilde{I}(T)$ for tensors $T$ in order to prove lower bounds on $\omega_g(T)$, step 1 of our proof actually involves constructing lower bounds on $\tilde{I}(T)$ when $T$ has a monomial degeneration to a matrix multiplication tensor. In Section~\ref{sec:tcsfslb}, we use this to give lower bounds on $\tilde{I}(T_G)$ for any finite group $G$. 

We show in Theorem~\ref{thm:anygroupcw} that for \emph{any} finite group $G$, there is a monomial degeneration of $T_G$ into a \emph{generalized} Coppersmith-Winograd tensor of parameter $|G|-2$. We will see that the Laser method applies just as well to any generalized Coppersmith-Winograd tensor of parameter $|G|-2$ as it does to the original $CW_{|G|-2}$, and so the best-known approach for finding matrix multiplication tensors as monomial degenerations of a tensor can be applied to \emph{any} group tensor $T_G$ as well. Two important consequences of this are:
\begin{enumerate}
    \item For any group $G$ such that $\tilde{R}(T_G) = |G|$, we can use the Galactic method to achieve the best known upper bound on $\omega$ (that is known from $CW_{|G|-2}$) by using $T_G$ as the underlying tensor instead of the Coppersmith-Winograd tensor. We think this has exciting prospects for designing new matrix multiplication algorithms; see Remark~\ref{rem:TGomegaUB} for further discussion.
    \item Once $T_G$ has been monomial degenerated into a Coppersmith-Winograd tensor, and thus a matrix multiplication tensor, we can then apply the tools from step 1 above to show that $T_G$ has a monomial degeneration to a relatively large independent tensor. In particular, we show that for any group $G$, $\tilde{I}(T_G) \geq |G|^{c_{|G|}}$ for some constant $c_{|G|}>2/3$ which depends only on $|G|$. Combining this with the connection between $\tilde{I}(T_G)$ and tri-colored sum-free sets in $G$, we see that for any finite group $G$, $G^n$ has a tri-colored sum-free set of size at least $|G|^{c_{|G|} n - o(n)}$. See Theorem~\ref{thm:CWIlb} and the remainder of Section~\ref{sec:tcsfslb} for the details. We will find that $c_{|G|}$ is much bigger than $2/3$ for reasonable $|G|$; for instance, that $c_{|G|} > 3/4$ for $|G| < 250$.
\end{enumerate}

\section{Preliminaries} \label{sec:prelims}

\subsection{Tensor Notation and Definitions}

Let $X = \{x_1, \ldots, x_q\}$, $Y = \{y_1, \ldots, y_r\}$, and $Z = \{z_1, \ldots, z_s \}$ be three sets of formal variables. A \emph{tensor over $X,Y,Z$} is a trilinear form
$$T = \sum_{x_i \in X, y_j \in Y, z_k \in Z} T_{ijk} x_i y_j z_k,$$
where the $T_{ijk}$ coefficients come from an underlying field $\F$. 
One writes $T\in \F^q\otimes \F^r\otimes \F^s$, and the triads $x_i y_j z_k$ are typically written $x_i\otimes y_j\otimes z_k$; we omit the $\otimes$ for ease of notation. 
When $X,Y$, and $Z$ are clear from context, we will just call $T$ a \emph{tensor}. 
The \emph{support} of a tensor $T$ are all triples $(i,j,k)$ for which $T_{ijk}\neq 0$.
The \emph{size} of a tensor $T$, denoted $|T|$, is the size of its support.
We will write $x_i y_j z_k \in T$ to denote that $(i,j,k)$ is in the support of $T$, and in this case we call $x_i y_j z_k$ a \emph{term} of $T$. We will call elements of $X$ the `$x$-variables of $T$', and similarly for $Y$ and $Z$.

If $A\in \F^k\otimes \F^m\otimes \F^n$ and $B\in \F^{k'}\otimes \F^{m'}\otimes \F^{n'}$, then the \emph{tensor product (or Kronecker product) of $A$ and $B$}, denoted $A \otimes B$, is a tensor in $\F^{k\times k'}\otimes \F^{m\times m'}\otimes \F^{n\times n'}$ over new variables $\bar{X},\bar{Y},\bar{Z}$ 
given by
$$A \otimes B = \sum_{\substack{(i,i') \in [k] \times [k'] \\ (j, {j'}) \in [m] \times [m'] \\ (k, {k'}) \in  [n] \times [n']}} A_{ijk} B_{i'j'k'} \bar{x}_{ii'} \bar{y}_{jj'} \bar{z}_{kk'}.$$

The $n$th tensor power of a tensor $A$, denoted $A^{\otimes n}$, is the result of tensoring $n$ copies of $A$ together, so $A^{\otimes 1}= A$, and $A^{\otimes n} = A \otimes A^{\otimes (n-1)}$.

Intuitively, if $A$ is over $X,Y,Z$ and $B$ is over $X',Y',Z'$, then the variables $\bar{x}_{ii'}, \bar{y}_{jj'},\bar{z}_{kk'}$ of $A\otimes B$ can be viewed as pairs of the original variables $(x_i, x'_{i'}) (y_j, y'_{j'}) (z_k, z'_{k'})$. We will use this view in some of our proofs. For instance, when considering $A^{\otimes n}$ we will often view the $x$,$y$ and $z$ variables of $A^{\otimes n}$  as ordered $n$-tuples of $x$,$y$ and $z$ variables of $A$. Then we can discuss for instance, in how many positions of an $x$ variable of $A^{\otimes n}$, the variable $x_i$ of $A$ appears.

\subsubsection{Tensor Rank}

A tensor $T$ has \emph{rank one} if there are values $a_i \in \F$ for each $x_i \in X$, $b_j \in \F$ for each $y_j \in Y$, and $c_k \in \F$ for each $z_k \in Z$, such that $T_{ijk} = a_i b_j c_k$, or in other words,
$$T=\sum_{x_i \in X, y_j \in Y, z_k \in Z} a_i b_j c_k \cdot x_i y_j z_k = \left( \sum_{x_i \in X} a_i x_i \right) \left( \sum_{y_j \in Y} b_j y_j \right) \left( \sum_{z_k \in Z} c_k z_k \right).$$
More generally, the \emph{rank of $T$}, denoted $R(T)$, is the smallest nonnegative integer $m$ such that $T$ can be written as the sum of $m$ rank-one tensors.

Let $\lambda$ be a formal variable, and suppose $T$ is a tensor over $X,Y,Z$. 
The \emph{border rank of $T$}, denoted by\footnote{Much of the literature uses $\underline{R}$ for border rank; we instead use $\bar{R}$ for ease of notation.} $\bar{R}(T)$, is the smallest $r$ such that there is a tensor $\mathcal{T}$ with coefficients $\mathcal{T}_{ijk}$ in $\F[\lambda]$ (polynomials in $\lambda$), so that for every setting of $\lambda\in \F$, $\mathcal{T}$ evaluated at $\lambda$ has rank $r$, and so that 
there is an integer $h\geq 0$ for which:
\[\lambda^h T = \mathcal{T} + O(\lambda^{h+1}).\]
The above notation means that for every $i,j,k$, the polynomial $\mathcal{T}_{ijk}$ over $\lambda$ has no monomials with $\lambda^j$ with $j<h$, and the coefficient in front of $\lambda^h$ in $\mathcal{T}_{ijk}$ is exactly $T_{ijk}$. In a sense, the family of rank $r$ tensors $\mathcal{T}\lambda^{-h}$ for $\lambda\neq 0$ can get arbitrarily close to $T$ -- if $\F=\R$, then we could think of taking $\lambda\rightarrow 0$ and then $\mathcal{T}\lambda^{-h}\rightarrow T$.

The {\em asymptotic rank} of a tensor $T$ is defined as $\tilde{R}(T):=\lim_{n\rightarrow \infty} (R(T^{\otimes n}))^{1/n}$. The limit exists and equals $\inf_{n\rightarrow \infty} (R(T^{\otimes n}))^{1/n}$.
It is known that for any tensor $T$, $$R(T) \geq \bar{R}(T) \geq \tilde{R}(T),$$
and that each of these inequalities can be strict\footnote{For example, the first inequality is strict for the Coppersmith-Winograd tensor, and the second inequality is strict for the $2 \times 2 \times 2$ matrix multiplication tensor. Both of these tensors will be defined shortly.}. One of the most common ways to show asymptotic rank upper bounds is to give border rank upper bounds, frequently using a tool called a `monomial degeneration' which we will define shortly.

The tensor $\langle r\rangle$ in $\F^r\otimes \F^r\otimes \F^r$ is defined as follows: for all $i\in \{1,\ldots,r\}$, $\langle r\rangle_{i,i,i}=1$ and for all other entries $\langle r\rangle_{i,j,k}=0$. $\langle r\rangle$ clearly has rank $r$; it is the natural generalization of an identity matrix. If a tensor $T$ is equivalent to $\langle r\rangle$ up to permutation of the indices, we say that $T$ is an independent tensor of size $|T|=r$.

\subsubsection{Sub-Tensors and Degenerations}

We call a tensor $t$ a {\em sub-tensor} of a tensor $t'$, denoted by $t\subseteq t'$, if $t$ can be obtained from $t'$ by removing triples from its support, i.e. for every $i,j,k$, either $t_{i,j,k}=t'_{i,j,k}$, or $t_{i,j,k}=0$.

A tensor $t\in \F^k\otimes \F^m\otimes \F^n$ is a {\em restriction} of a tensor $t'\in \F^{k'}\otimes \F^{m'}\otimes \F^{n'}$, written $t\leq t'$, if there are homomorphisms $\alpha:\F^k\mapsto \F^{k'}$, $\beta:\F^m\mapsto \F^{m'}$, and $\gamma:\F^n\mapsto \F^{n'}$, so that $t=(\alpha\otimes\beta\otimes\gamma)t'$.\footnote{The notation $(\alpha\otimes\beta\otimes\gamma)t$ means the following. Let $t=\sum_{\ell=1}^r (\sum_i a^\ell_i x_i)(\sum_j b^\ell_j y_j) (\sum_k c^\ell_k z_k) = \sum_{\ell=1}^r (a^\ell\cdot x)(b^\ell\cdot y)(c^\ell\cdot z)$ be any decomposition of $t$ into a sum of rank $1$ tensors, where $a^\ell=(a^\ell_1,\ldots,a^\ell_k)\in \F^k,b^\ell=(b^\ell_1,\ldots,b^\ell_m)\in\F^m,c^\ell=(c^\ell_1,\ldots,c^\ell_n)\in \F^n$. Then $(\alpha\otimes\beta\otimes\gamma)t := \sum_{\ell=1}^r (\alpha(a^\ell)\cdot x)(\beta(b^\ell)\cdot y) (\gamma(c^\ell)\cdot z)$ is well-defined.}
The rank of $t$ is $\leq r$ if and only if $t\leq \langle r\rangle$.

A special type of restriction is the so called {\em zeroing out} (also called {\em combinatorial restriction}): let $t$ be a tensor over $X,Y,Z$; $t'$ is a zeroing out of $t$ if it is obtained by selecting $X'\subseteq X, Y'\subseteq Y, Z'\subseteq Z$ and setting to zero all $x_i\in X\setminus X', y_j\in Y\setminus Y', z_k\in Z\setminus Z'$; thus, $t'$ is a tensor over $X',Y',Z'$ and it equals $t$ on all triples over these sets.

A {\em degeneration} $t'\in \F^{k'}\otimes \F^{m'}\otimes \F^{n'}$ of a tensor $t\in \F^k\otimes \F^m\otimes \F^n$, written $t'\trianglelefteq t$, is obtained as follows. Similarly to the definition of border rank, let $\lambda$ be a formal variable.
We say that  $t'\trianglelefteq t$ if there exist
$q\in\N$, $A(\lambda)\in\F^{k'\times k}, B(\lambda)\in\F^{m'\times m}, C(\lambda)\in\F^{n'\times n}$ matrices with entries which are polynomials in $\lambda$ (i.e. in $\F[\lambda]$), so that 
\[\lambda^q t' = (A(\lambda)\otimes B(\lambda)\otimes C(\lambda)) t + O(\lambda^{q+1}).\]
Similarly to the relationship between rank and restriction, the border rank of $t$ is at most $r$ if and only if $t\trianglelefteq \langle r\rangle$.

A special type of degeneration is the so called {\em monomial degeneration} (also called {\em combinatorial degeneration} or {\em toric degeneration}), in which the matrices $A(\lambda),B(\lambda),C(\lambda)$ have entries that are monomials in $\lambda$. An equivalent definition of monomial degeneration \cite{almanitcs} is as follows: suppose that $t'$ is a tensor over $\F^k\otimes \F^m\otimes \F^n$, $t\subseteq t'$ is a sub-tensor,  and there are functions $a : [k] \to \Z$, $b : [m] \to \Z$, and $c : [n] \to \Z$ such that (1) whenever $t'_{ijk} \neq 0$,  $a(i) + b(j) + c(k) \geq 0$, (2) if $a(i) + b(j) + c(k) = 0$, then $t_{i,j,k}=t'_{i,j,k}$, and (3) if $t_{ijk} \neq 0$, then $a(i) + b(j) + c(k) = 0$.

\subsubsection{Structural Properties of Tensors}

We say that a tensor $T$ is {\em partitioned} into tensors $T_1,\ldots, T_\ell$, if $T=T^1+\ldots+T^\ell$, and for every triple $i,j,k$, there is a $w$ such that $T^w_{i,j,k}= T_{i,j,k}$ and for all $w'\neq w$,  $T^w_{i,j,k}=0$. In other words, the triples in the support of $T$ are partitioned into $\ell$ parts, forming $\ell$ tensors summing to $T$.\footnote{Note that this notion of partitioning is more general than `block partitioning' from the Laser Method (which we define shortly), although `block partitioning' is occasionally referred to as just `partitioning' in the literature.}

A {\em direct sum} of two tensors $t$ and $t'$ over disjoint variable sets $X,Y,Z$ and $X',Y',Z'$, $t\oplus t'$ is the tensor on variable sets $X\cup X',Y\cup Y',Z\cup Z'$ which is exactly $t$ on triples in $X\times Y\times Z$, exactly $t'$ on triples in $X'\times Y'\times Z'$, and is $0$ on all other triples. In contrast, a regular sum $t+t'$ could have $t$ and $t'$ share variables.

Similar to how a $k \times m$ matrix in $\F^k \otimes \F^m$ can be viewed as a linear map from $\F^k$ to $\F^m$, a tensor $t=\sum_{i,j,k} t_{i,j,k} x_iy_jz_k$ in $\F^k\otimes \F^m\otimes \F^n$ can be viewed as a linear map $T_X : \F^k \to  \F^m \otimes \F^n$ which maps $x_i$ to $\sum_{y_j \in Y, z_k \in Z} t_{ijk} y_j z_k\in \F^m\otimes \F^n$.
One can also exchange the roles of the $x,y$ and $z$ variables, so that $t$ can also be viewed as a linear map $T_Y : \F^m \to  \F^k \otimes \F^n$, or a linear map $T_Z: \F^n \to  \F^k \otimes \F^m$. The tensor $t$ is called {\em concise} if $T_X,T_Y,T_Z$ are injective.
It is not hard to see that $\bar{R}(t)\geq \max\{Rank(T_X),Rank(T_Y),Rank(T_Z)\}$, so that for concise tensors, $\bar{R}(t)\geq \max\{|X|,|Y|,|Z|\}$. All the explicit tensors we will discuss throughout this paper, including the tensor of matrix multiplication, and the Coppersmith-Winograd tensor, are concise.

\subsection{The Matrix Multiplication Tensor and Methods for Analyzing $\omega$}
Let $m,n,p\geq 1$ be integers. The tensor of $m\times n$ by $n\times p$ matrix multiplication over a field $\F$, denoted by $\langle m,n,p\rangle$, lies in $\F^{m\times n}\otimes \F^{n\times p}\otimes \F^{p\times m}$, and in trilinear notation looks like this:
\[\langle m,n,p\rangle = \sum_{i=1}^m\sum_{j=1}^n\sum_{k=1}^p x_{ij}y_{jk}z_{ki}.\]

The theory of matrix multiplication algorithms is concerned with determining the value $\omega$, defined as $\omega:=\inf \{c\in \R~|~R(\langle n,n,n\rangle)\leq O(n^c)\}$. (As shown by Coppersmith and Winograd~\cite{CoppersmithW82}, $\omega$ is a limit point that cannot be achieved by any single algorithm.)

Getting a handle on $\omega$ has been difficult. Over the years various methods have been developed to obtain better understanding of the rank of $\langle n,n,n\rangle$. The basic idea of all methods is as follows: Although we do not know what the true rank of $\langle n,n,n\rangle$ is, as $n$ grows, there are many other tensors for which we know their rank and even their asymptotic rank exactly. Hence, the approach is, take a tensor $t$ whose asymptotic rank $\tilde{R}(t)$ we understand, take a large tensor power $t^{\otimes N}$ of $t$, and ``embed'' $\langle f(N),f(N),f(N)\rangle$ into $t^{\otimes N}$ so that the embedding shows that  $\tilde{R}(\langle f(N),f(N),f(N)\rangle)\leq \tilde{R}(t)^N$. From this inequality we can get a bound on $\omega$, by taking $N$ to $\infty$. More generally, by Sch\"onhage's Asymptotic Sum Inequality (Theorem~\ref{thm:tauthm} below), it is actually sufficient to embed the direct sum of many smaller copies of matrix multiplication tensors into $t^{\otimes N}$ to get a similar bound on $\omega$.

The way in which the approaches differ is mainly in how the embedding into $t^{\otimes N}$ is obtained. All known approaches to embed a matrix multiplication tensor into a tensor power $t^{\otimes N}$ of some other tensor $t$ actually all zero out variables in $t^{\otimes N}$ and argue that after the zeroing out, the remaining tensor is a matrix multiplication tensor. 

There are two main approaches for obtaining good bounds on $\omega$ via zeroing out $t^{\otimes N}$: the laser method and the group theoretic approach. We will describe them both shortly.

Zeroing out is a very restricted border-rank preserving operation on a tensor. The most general embedding of a matrix multiplication tensor into $t^{\otimes N}$ would be a potentially complicated degeneration of $t^{\otimes N}$. In fact, in this case, since every border rank $q$ tensor is a degeneration\footnote{This folklore fact follows from inverting the DFT over cyclic groups; see eg. \cite[Section~3.1]{almanitcs}.} of the structure tensor for addition modulo $q$, $T_q=\sum_{i=0}^{q-1}\sum_{j=0}^{q-1} x_i y_j z_{i+j\bmod q}$, it would suffice to find a degeneration of $T_q^{\otimes n}$ into a large matrix multiplication tensor, for large $n$. Unfortunately, we currently do not have techniques to find good degenerations. We call this hypothetical method the {\bf Universal} method.

Instead of considering arbitrary degenerations of $t^{\otimes n}$, we could instead consider monomial degenerations of $t^{\otimes n}$ into a large matrix multiplication tensor. This approach would subsume both the Laser Method and the Group Theoretic approach.
Although again there are no known techniques to obtain better monomial degenerations than zeroing outs, monomial degenerations seem easier to argue about than arbitrary degenerations. We call the method of finding the optimal (with respect to bounding $\omega$) monomial degeneration of a tensor power into a matrix multiplication tensor, the {\bf Galactic} method.
(Reaching the end of our Galaxy is more feasible than seeing the entire Universe.) 
To complete the analogy, we can call the method using zeroing outs the {\bf Solar} method (i.e. exploring the Solar System).

The Solar method subsumes the Group Theoretic Approach and the Laser Method, but is more general, and current techniques do not suffice to find the optimal zeroing-out of $t^{\otimes n}$ into matrix multiplication even for simple tensors.  Our lower bounds will be not only for the Solar method, but also for the Galactic method which is even more out of reach for the current matrix multiplication techniques.

To be clear, the Solar method, Galactic method, and Universal method, give us successively more power when analyzing specific tensors. For example, it may be the case that for a specific tensor $T$, the Solar method applied to $T$ cannot get as low an upper bound on $\omega$ as the Universal method applied to $T$ can. This captures the known methods to get bounds on $\omega$ by using tensors like the Coppersmith-Winograd tensor or a group tensor, which we will define shortly. The three different methods will trivially give the same bound, $\omega$, when applied to matrix multiplication tensors themselves, but this is not particularly interesting: the entire point of these different methods is that the asymptotic rank of matrix multiplication tensors is not well-understood, and applying the methods to other tensors can help us get better bounds on it.

We will now describe the two approaches that follow the Solar method.

\subsection{The Laser Method}
Strassen~\cite{laser} proposed a method for embedding a matrix multiplication tensor into a large tensor power of a starting tensor. He called it the {\em Laser Method}. In this method, we start with a tensor $t$ over variables $X$, $Y$, $Z$ of asymptotic rank $q$, where say $|X|=q$, so that $t$ has essentially optimal asymptotic rank. The variable sets are then partitioned into {\em blocks}: $X=X_1\cup\ldots\cup X_a$, $Y=Y_1\cup\ldots\cup Y_b$, $Z=Z_1\cup\ldots,Z_c$. Define by $t_{IJK}$ the sub-tensor of $t$ obtained by zeroing-out all variables $x\notin X^I$, $y\in Y^J$, $z\in Z^K$. We obtain a partitioning
\[t=\sum_{I\in [a],J\in [b],K\in [c]} t_{IJK}.\]

Ideally, the {\em constituent} tensors $t_{IJK}$ should be matrix multiplication tensors, but this is not necessary. 

In the large tensor power $t^{\otimes N}$, one then is allowed to zero out variables $\bar{x}_i$, $\bar{y}_j$ and $\bar{z}_k$ (removing all triples containing them). This zeroing out is not arbitrary, however: if some variable, say $\bar{x}_i$ is zeroed out, consider its index $i$ -- it is a sequence of length $N$ of original indices $i[1],i[2],\ldots,i[N]$. Say that $x_{i[z]}\in X_{I(z)}$ (i.e. $I(z)$ is the block that $\bar{x}_i$ uses in its $z$th coordinate). Then every other $x$ variable, $\bar{x}_{i'}$ for which $x_{i'[z]}\in X_{I(z)}$ for all $z$, must be zeroed out as well. That is, variables with the same block sequence must either all be kept or all zeroed out.

One considers such possible zeroing outs and attempts to argue that one of them leaves exactly a direct sum of matrix multiplication tensors (possibly of different dimensions). Then one uses the {\em asymptotic sum inequality} of Sch\"onhage \cite{Sch81} to obtain a bound on $\omega$:

\begin{theorem}[Asymptotic Sum Inequality \cite{Sch81}]
If $\bigoplus_{i=1}^p \langle k_i,m_i,n_i\rangle$ has border rank $\leq r$, and $r>p$, then $\omega\leq 3\tau$, where $\sum_{i=1}^p (k_im_in_i)^\tau = r$. \label{thm:tauthm}
\end{theorem}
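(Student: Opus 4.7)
}

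The plan is to take a large tensor power of $T := \bigoplus_{i=1}^p \langle k_i, m_i, n_i\rangle$, use a pigeonhole-style zeroing out to isolate many copies of a single matrix multiplication tensor, and then combine those copies into one big matrix multiplication tensor whose known asymptotic rank in terms of $\omega$ gives the desired inequality. Formally, since border rank is submultiplicative under $\otimes$, we have $\bar R(T^{\otimes N}) \leq r^N$ for every $N$, and $T^{\otimes N}$ decomposes as a direct sum
\[
T^{\otimes N} \;=\; \bigoplus_{\vec i\in [p]^N} \Big\langle \textstyle\prod_{j=1}^N k_{i_j},\ \prod_{j=1}^N m_{i_j},\ \prod_{j=1}^N n_{i_j}\Big\rangle
\]
over multi-indices $\vec i$. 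Group these multi-indices by type, i.e.\ by the vector $(N_1,\ldots,N_p)$ of occurrence counts summing to $N$: all multi-indices of a fixed type yield identical MM tensors.

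Next, I would fix a type $(N_1,\ldots,N_p)$ and zero out all variables corresponding to multi-indices of any other type; direct sums over disjoint coordinate blocks permit this via a coordinate-wise zeroing out. This leaves $s := \binom{N}{N_1,\ldots,N_p}$ disjoint copies of a single MM tensor $\langle K,M,L\rangle$ with $K=\prod k_i^{N_i}$, $M=\prod m_i^{N_i}$, $L=\prod n_i^{N_i}$. Since this is a zeroing out, the resulting tensor has border rank at most $r^N$. The key combining step is then: $s$ disjoint copies of $\langle K,M,L\rangle$ sit, up to permutation of factors, as a sub-tensor of $\langle K, M, sL\rangle$ (by placing the $L$-blocks of different copies into disjoint coordinate blocks of the third factor), and one standard upper bound on $\omega$ for rectangular MM tensors yields $R(\langle K,M,sL\rangle) = (KM\cdot sL)^{\omega/3 + o(1)}$. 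Combining this with the bound $\bar R \leq r^N$ on the disjoint sum, and letting $N\to \infty$, gives
\[
(s\cdot KML)^{\omega/3} \;\leq\; r^{N(1+o(1))}.
\]

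Now I would choose the type optimally. Setting $N_i = N\cdot (k_im_in_i)^\tau/r$ (valid since $\sum_i (k_im_in_i)^\tau = r$), Stirling's formula gives $s = \binom{N}{N_1,\ldots,N_p} = 2^{N H + o(N)}$ where $H$ is the entropy of the distribution $(k_im_in_i)^\tau/r$, and the geometric means $K,M,L$ take the values $\prod k_i^{N_i}$ etc. A direct calculation then shows $s\cdot (KML)^\tau = r^N$, so that the inequality $(s\cdot KML)^{\omega/3} \leq r^{N(1+o(1))}$ simplifies to $\omega/3 \leq \tau$ after taking $N$th roots and $N\to\infty$. The hypothesis $r>p$ is what guarantees the combining step outputs a nontrivial MM tensor (so that the $o(1)$ term in the asymptotic bound on $R(\langle K,M,sL\rangle)$ is genuinely negligible against the exponential growth in $N$), rather than a degenerate situation where $s$ dominates and the bound becomes vacuous.

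The main obstacle, and the step requiring the most care, is the combining move from many disjoint copies of $\langle K,M,L\rangle$ to a single rectangular MM tensor: one must justify that the $o(1)$ loss in the exponent really does vanish in the limit, which is where the assumption $r>p$ enters and where one invokes an auxiliary asymptotic bound on rectangular matrix multiplication. The choice of type that makes the counting telescope into $s(KML)^\tau = r^N$ is the other delicate point, but once the right type is guessed it is a short calculation; the heart of the proof is the border-rank-to-rank conversion implicit in the combining lemma.
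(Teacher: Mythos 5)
The paper cites Theorem~\ref{thm:tauthm} rather than proving it, so I am comparing your write-up against Sch\"onhage's standard argument. Your overall architecture is correct: take a large tensor power, zero out to isolate the $s = \binom{N}{N_1,\ldots,N_p}$ identical constituents of the chosen type, combine them into a single rectangular matrix-multiplication tensor, and pick the type $N_i \approx N(k_im_in_i)^\tau/r$ so the multinomial coefficient and the $(KML)^\tau$ factor telescope to $r^{N(1-o(1))}$. You also correctly diagnose where $r>p$ enters.

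The combining step, however, is stated backwards and in a way that breaks the inference. You write that $s\odot\langle K,M,L\rangle$ sits as a sub-tensor of $\langle K,M,sL\rangle$; the relation you actually need is the reverse: $\langle K,M,sL\rangle$ is a \emph{restriction} of $s\odot\langle K,M,L\rangle$, via the linear maps that identify the $s$ separate copies of each $x_{ij}$-variable and send the $y$- and $z$-blocks of the $s$ copies into disjoint pieces of the $sL$ factor. (Note neither tensor is a sub-tensor of the other in the paper's sense, since they live on different variable sets; restriction is the right notion.) A restriction can only decrease border rank, giving $\bar R(\langle K,M,sL\rangle)\le\bar R(s\odot\langle K,M,L\rangle)\le r^N$. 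The inclusion in the direction you state would give $\bar R(s\odot\langle K,M,L\rangle)\le\bar R(\langle K,M,sL\rangle)$, which together with $\bar R(s\odot\langle K,M,L\rangle)\le r^N$ yields nothing. Compounding this, you then invoke the rank \emph{upper} bound $R(\langle K,M,sL\rangle)=(KM\cdot sL)^{\omega/3+o(1)}$; two upper bounds on the same object do not produce $(s\cdot KML)^{\omega/3}\le r^{N(1+o(1))}$. What the argument needs at this point is the \emph{lower} bound $\bar R(\langle K,M,sL\rangle)\ge\tilde R(\langle K,M,sL\rangle)=(KM\cdot sL)^{\omega/3}$, which follows from submultiplicativity of $\tilde R$ applied to $\langle a,b,c\rangle\otimes\langle c,a,b\rangle\otimes\langle b,c,a\rangle = \langle abc,abc,abc\rangle$. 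With the restriction direction and the side of the bound both corrected, the rest of your proposal goes through.
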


Looking at Sch\"onhage's proof of the asymptotic sum inequality, however, we see that what it is actually doing is, taking a large tensor power of $\bigoplus_{i=1}^p \langle k_i,m_i,n_i\rangle$ and {\em zeroing out} variables to obtain independent copies of the same single matrix multiplication tensor, i.e. $F\odot \langle K,M,L\rangle$. Thus, we can think of the laser method as zeroing out $t^{\otimes N}$ in a block-preserving fashion, to obtain a copies of the same matrix multiplication tensor.

We now turn to the most successful implementation of the Laser Method: the {\bf Coppersmith-Winograd} approach.

The Coppersmith-Winograd (CW) family of tensors is as follows: Let $q\geq 1$ be an integer.
$$CW_q = x_0 y_0 z_{q+1} + x_{q+1} y_0 z_{0} + x_0 y_{q+1} z_{0} + \sum_{i=1}^q (x_i y_0 z_i + x_0 y_i z_i + x_i y_i z_{0}).$$
$CW_q$ is a concise tensor over $\F^{q+2} \otimes \F^{q+2}\otimes \F^{q+2}$, of border rank (and hence also asymptotic rank) $q+2$.

Coppersmith and Winograd~\cite{coppersmith} followed the laser method. The tensors $CW_q$ have a natural partitioning $CW_q = T_{002}+T_{020}+T_{200}+T_{011}+T_{101}+T_{110}$, where $T_{002}=x_0 y_0 z_{q+1}, T_{200}= x_{q+1} y_0 z_{0}, T_{020}= x_0 y_{q+1} z_{0}, T_{101}= \sum_{i=1}^q x_i y_0 z_i, T_{011}=\sum_{i=1}^q x_0 y_i z_i, T_{110}=\sum_{i=1}^q  x_i y_i z_{0}$. 

The partitioning is actually a block partitioning: The $T_{IJK}$ are obtained by blocking the $X$, $Y$ and $Z$ variables into three blocks: the indices $\{0,\ldots,q+1\}$ are blocked into block $0$ containing $\{0\}$, block $1$ containing $\{1,\ldots,q\}$ and block $2$ containing $\{q+1\}$, and then, block $I$ of $X$ (resp. $Y$ and $Z$) contains all $x_i$ (resp. $y_i$ and $z_i$) with $i$ in block $I$ of the indices. Then $T_{IJK}$ is the block tensor formed by the triples with $x$ variables in block $I$, $y$ variables in block $J$ and $z$ variables in block $K$.

The sub-tensors $T_{IJK}$ have two useful properties: (1) they are all matrix multiplication tensors, (2) for each $T_{IJK}$ above, $I+J+K=2$. 

The Coppersmith-Winograd implementation of the laser method uses these properties together with sets excluding $3$-term arithmetic progressions (in conjunction with property (2) above) to decide which blocks of variables to zero out in $CW_q^{\otimes n}$. Since the zeroing out proceeds by zeroing out variables that have the same block sequences, and due to property (1) in the end one obtains a sum of matrix multiplication tensors, and due to the use of sets excluding $3$-term arithmetic progressions one can guarantee that in fact this is a direct sum of many large matrix multiplication tensors. Then one can use the asymptotic sum inequality to obtain a bound on $\omega$. To optimize the bound on $\omega$, one selects the best $q$, which ends up being $q=6$. Coppersmith and Winograd then achieve a slightly better bound on $\omega$ by analyzing the square $CW_q^{\otimes 2}$ in a similar way.

The later improvements on the Coppersmith-Winograd bounds by Stothers~\cite{stothers}, Vassilevska W.~\cite{v12} and Le Gall~\cite{legall} instead used the laser method with the CW tools starting from $CW_q^{\otimes 4}, CW_q^{\otimes 8}$ and $\{CW_q^{\otimes 16}$ and $CW_q^{\otimes 32}\}$, respectively. Each new analysis used different, but related, blockings and partitionings, and each ultimately optimized the resulting bound on $\omega$ by picking $q=5$, and hence using $CW_5$ as the base tensor.

The Coppersmith-Winograd analysis works for any blocking of the variables of a tensor $t$ into blocks with integer names so that there exists an integer $b$ such that for every triple $(I,J,K)$ where $I$ is an $x$-block, $J$ is a $y$-block and $K$ is a $z$-block, $I+J+K=b$. For such a blocking, each constituent tensor $T_{IJK}$ should ideally be a matrix multiplication tensor itself. In recent applications of the method, the tensors $T_{IJK}$ need not be matrix multiplications, but then one needs to perform a Coppersmith-Winograd analysis on them to obtain a bound known as their {\em Value} which roughly says how good they are at supporting matrix multiplication.

The Coppersmith-Winograd approach doesn't exploit very much about the block tensors $T_{IJK}$.
In particular, one can replace each $T_{IJK}$ with another tensor $T'_{IJK}$ over the same sets of variables $X_I,Y_J,Z_K$, as long as $T'_{IJK}$ has the same ``value'', and the modified tensor $T'$ has the same border rank as $T$; the bound on $\omega$ the approach would give would be exactly the same!
When $T_{IJK}$ is a matrix multiplication tensor $\langle a,b,c\rangle$, for instance, one can replace it with another matrix multiplication tensor $\langle a',b',c'\rangle$ as long as the new tensor uses the same variables and $a'b'c'=abc$, and as long as the produced full tensor has the same border rank.
For instance, if we take $T_{110}=\sum_{i=1}^q\sum_{j=1}^q x_iy_jz_0$ and replace it with $\sum_{i=1}^q\sum_{j=1}^q x_iy_{q+1-i}z_0$, then we would get the {\em rotated} $CW_q$ tensor studied in \cite{almanitcs}. This tensor still has rank $q+2$ and this gives the same upper bound on $\omega$ using the CW approach.

We can thus define a family of {\em generalized} CW tensors, $\underline{CW}_q$ as follows. 
\begin{definition}
The family $\underline{CW}_q$ of tensors includes, for every permutation $\sigma\in S_q$, the tensor
\[CW^\sigma_q = (x_0y_0z_{q+1}+x_0y_{q+1}z_0+x_{q+1}y_0z_0) + \sum_{i=1}^q (x_i y_{\sigma(i)} z_0 + x_iy_0z_i+x_0y_iz_i).\]
\end{definition}

We remark that the family above contains all tensors obtained from $CW_q$ by replacing $\sum_{i=1}^q (x_i y_{i} z_0 + x_iy_0z_i+x_0y_iz_i)$ with $\sum_{i=1}^q (x_{\tau(i)} y_{\sigma(i)} z_0 + x_{\alpha(i)}y_0z_{\beta(i)}+x_0y_{\gamma(i)}z_{\delta(i)})$ for any choice of $\alpha,\beta,\gamma,\delta,\sigma,\tau\in S_q$.

The constituent tensor $T_{110}$ of $CW^\sigma_q$ is
 $\sum_{i=1}^q x_{i}y_{\sigma(i)}z_0$, which is still a $\langle 1,q,1\rangle$ tensor. Thus, for any such tensor from the family $\underline{CW}_q$, if its border rank is $q+2$, the Coppersmith-Winograd approach would give exactly the same bound on $\omega$, as with $CW_q$.

\subsection{Group-theoretic approach} \label{sec:gtm}
Cohn and Umans~\cite{cohn2003group} pioneered a new group-theoretic approach for matrix multiplication. The idea is as follows. Take a group $G$ and consider its group tensor defined below. (Throughout this paper, we write groups in multiplicative notation.)

\begin{definition}
For any finite group $G$, the \emph{group tensor of $G$}, denoted $T_G$, is a tensor over $X_G, Y_G, Z_G$ where $X_G := \{ x_g \mid g \in G \}$, $Y_G := \{y_g \mid g \in G\}$, and $Z_G := \{ z_g \mid g \in G \}$, given by
$$T_G := \sum_{g,h \in G} x_g y_h z_{gh}.$$
\end{definition}

(Note that the group tensor of $G$ is really the structure tensor of the group algebra $\C[G]$, often written as $T_{\C[G]}$. We use $T_G$ for ease of notation.)

The group-theoretic approach first bounds the asymptotic rank of $T_G$ using representation theory, as follows. Let $d_u$ be the dimension of the $u$th irreducible representation of $G$ (i.e. the $d_u$s are the character degrees). Then $T_G$ can be seen to degenerate from $\bigoplus_{u=1}^\ell \langle d_u\rangle$. In particular, we get that\footnote{It is more straightforward to see that this holds with inequalities (`$\leq$' instead of `$=$') but in fact equality holds because the degeneration of $T_G$ is invertible, and $\omega$ is defined in terms of the asymptotic rank of matrix multiplication tensors.}
\[\tilde{R}(T_G)=\tilde{R}\left(\bigoplus_{u=1}^\ell \langle d_u,d_u,d_u\rangle\right)= \sum_{u=1}^\ell d_u^\omega.\]

Now suppose that we can find any degeneration (e.g. a zeroing out) of $T_G$ into $\bigoplus_{i=1}^s \langle k_i,m_i,n_i\rangle$. Then, by the asymptotic sum inequality we would get that \[\sum_{i=1}^s (k_i m_i n_i)^{\omega/3} \leq \sum_{u=1}^\ell d_u^\omega.\]

Cohn and Umans defined two properties of subsets of $G$ which yield a zeroing out of $T_G$ into matrix multiplication tensors: (1) the triple product property, so that any $G$ that satisfies it admits a zeroing out into a matrix multiplication tensor, and (2) the simultaneous triple product property, so that any $G$ that satisfies it admits a zeroing out into a direct sum of matrix multiplication tensors.

These properties provide the zeroing out, and the group representation provides the rank bound. The approach is extremely clean to define. The goal is then to find a group with known character degrees, satisfying one of the two triple product properties well, so that the matrix multiplication tensors one can get are large.
Typically one works with a family of groups, parameterized by $n$ (as in $Z_n$ or $S_n$), and then one can pick the $n$ that optimizes the bound on $\omega$, or even take $n$ to $\infty$, e.g. when the groups correspond to tensor powers of some tensor.

We refer the reader to \cite[Section 3.5]{landsberg2017geometry} for more exposition on the Group-theoretic approach and its interpretation as finding a zeroing out of group tensors.

\subsection{Independent Tensors}

In this paper, we will be especially interested in zeroing outs and monomial degenerations from tensors $T$ to independent tensors $\langle r \rangle$. We give a few relevant definitions here.

For a tensor $T$ over $X,Y,Z$, its \emph{independence number}, $I(T)$, is the maximum size of an independent tensor which can result from a zeroing out of $T$. We similarly can define the \emph{asymptotic independence number} of $T$ by
$$\tilde{I}(T) := \limsup_{n \in \N} \left[ I(T^{\otimes n}) \right]^{1/n}.$$
Since a zeroing out cannot increase the number of $x$-variables, $y$-variables, or $z$-variables, we get a simple upper bound $I(T) \leq \min\{ |X|, |Y|, |Z| \}$. It similarly follows that $\tilde{I}(T) \leq \min\{ |X|, |Y|, |Z| \}$. Throughout this paper, we will see a number of tensors which achieve equality in this bound, including all matrix multiplication tensors. In Section \ref{sec:independent}, we will prove this and many other properties of $\tilde{I}$.

\subsection{Tri-colored Sum-free Sets}

A number of recent works (eg. \cite{blasiak, blasiak2017groups, almanitcs}) have explored connections between lower bounds on matrix multiplication algorithms, and a notion from extremal combinatorics called a `tri-colored sum-free set'. In this paper, we will expand upon and generalize this connection as one of our tools for proving lower bounds on $\omega_g(T)$ for various tensors $T$.

\begin{definition}
For a group $G$, a \emph{tri-colored sum-free set in $G$} is a set $S \subseteq G^3$ of triples of elements of $G$ such that:
\begin{itemize}
    \item for all $(a,b,c) \in S$, we have $ab=c$, and
    \item for all $(a_1, b_1, c_1), (a_2, b_2, c_2), (a_3, b_3, c_3) \in S$ which are not all the same triple, we have $a_1 b_2 \neq c_3$.
\end{itemize}
In the literature, tri-colored sum-free sets are sometimes also called \emph{multiplicative matchings}.
\end{definition}

In a recent breakthrough, Ellenberg and Gijswijt \cite{eg} used techniques introduced by Croot, Lev, and Pach \cite{croot2017progression} to show that there is a constant $c<3$ such that tri-colored sum-free sets in $\F_3^n$ have size at most $O(c^n)$. Since then, there has been an explosion of work in the area, and this result has been extended by Sawin \cite{sawin2017bounds} to hold for \emph{all} nontrivial groups $G$, even nonabelain groups:

\begin{theorem}[\cite{sawin2017bounds} Theorem 1] \label{thm:multmatch}
Let $G$ be any nontrivial finite group. There is a constant $\delta < 1$ such that for any positive integer $n$, any tri-colored sum-free set in $G^n$ has size at most $(\delta |G|)^n$.
\end{theorem}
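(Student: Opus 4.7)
The plan is to adapt the slice-rank method of Croot--Lev--Pach, Ellenberg--Gijswijt, and Tao from abelian groups to arbitrary finite groups via the Artin--Wedderburn decomposition of the group algebra. First, I would encode a tri-colored sum-free set $\{(a_i,b_i,c_i)\}_{i=1}^m\subseteq (G^n)^3$ as the diagonal tensor $D_m=\sum_{i=1}^m e_{a_i}\otimes e_{b_i}\otimes e_{c_i}$. The sum-free condition precisely says that $D_m$ is obtained from the group tensor $T_{G^n}=T_G^{\otimes n}$ by zeroing out to the marginal sets $\{a_i\}$, $\{b_i\}$, $\{c_i\}$. By Tao's slice-rank identity, the slice rank of $D_m$ equals $m$, so the goal reduces to showing that every zeroing-out of $T_{G^n}$ has slice rank at most $(\delta|G|)^n$ for some $\delta<1$ depending only on $G$.

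To bound this slice rank, I would pass to the Artin--Wedderburn decomposition $\C[G^n] \cong \bigoplus_{\vec\rho}M_{d_{\vec\rho}}(\C)$ where $d_{\vec\rho}=\prod_i d_{\rho_i}$, under which $T_{G^n}$ becomes a direct sum of matrix-multiplication tensors $\bigoplus_{\vec\rho}\langle d_{\vec\rho},d_{\vec\rho},d_{\vec\rho}\rangle$. Each block $\langle d,d,d\rangle$ admits several natural slicings (as $d$ ``thick'' slices along any one axis, or as $d^2$ scalar slices along any pair of axes), and combining these slicings with a free parameter per block and optimizing aims for a total slice-rank bound strictly below $|G|^n$. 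The optimization is naturally governed by the Plancherel measure $\pi(\rho)=d_\rho^2/|G|$ on $\hat G$: typical $\vec\rho$ under $\pi^{\otimes n}$ contribute the dominant term, and a large-deviation/entropy argument then shows the resulting bound is of the form $|G|^n\cdot e^{-c(G) n}$ for some $c(G)>0$ as long as $\pi$ is not a point mass --- equivalently, as long as $G$ is nontrivial.

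The main obstacle is converting the scalar-character argument that works for abelian $G$ into a matrix-valued argument for non-abelian $G$. For abelian groups, slice rank can be bounded by counting low-degree polynomials, which works because characters multiply scalarwise; for non-abelian groups, the matrix coefficients $\rho_{ij}(g)$ of irreducible representations do not factor as a product of functions in each of the three tensor slots, so the slicing must respect the internal matrix structure of each Wedderburn block. The key insight I would borrow is a Plancherel-weighted convex optimization that controls all blocks uniformly, reducing everything to the (easy) fact that the log-dimension function on $\hat G$ has a nondegenerate distribution under Plancherel measure whenever $|G|>1$, from which the strict inequality $\delta<1$ follows.
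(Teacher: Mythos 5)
This statement is a \emph{cited} result (Theorem~1 of Sawin~\cite{sawin2017bounds}); the paper does not prove it, so there is no internal proof to compare against. Section~2.6 of the paper does describe Sawin's argument as an upper bound on $\text{slice-rank}(T_G^{\otimes n})$, so your top-level framing (encode the sum-free set as a diagonal tensor $D_m$, use that $D_m$ is a restriction of $T_{G^n}$ with $\text{slice-rank}(D_m)=m$, and then bound the slice rank of $T_{G^n}$) is consistent with what is known about Sawin's proof.

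The central step, however, has a genuine gap: you work over $\C$ and invoke the Artin--Wedderburn decomposition of $\C[G^n]$, but over $\C$ (or any splitting field) the slice rank of $T_{G^n}$ is \emph{not} $o(|G|^n)$ --- it is exactly $|G|^n$ when $G$ is abelian. Indeed, the discrete Fourier transform identifies $T_{G^n}$ with the diagonal tensor $\langle |G|^n\rangle$ over $\C$, whose slice rank equals $|G|^n$ by Tao's diagonal-tensor lemma; so the proposed reduction ``every zeroing-out of $T_{G^n}$ has slice rank $\leq(\delta|G|)^n$'' already fails for the trivial zeroing-out in the simplest abelian case. The savings in the Croot--Lev--Pach/Ellenberg--Gijswijt/Tao method come specifically from working over $\F_p$ for a prime $p\mid|G|$, where $\F_p[G]$ is \emph{not} semisimple, there is no Wedderburn block decomposition, and the indicator of the group law admits a low-degree polynomial representation whose monomial-degree structure (not the Plancherel measure on $\widehat G$) produces the gain. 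Your proposed mechanism is self-undermining in another way too: for abelian $G$ the Plancherel measure $\pi(\rho)=d_\rho^2/|G|$ is uniform on $|G|$ singletons --- exactly the ``nondegenerate'' situation you require --- yet over $\C$ there is no slice-rank gain at all, showing that nondegeneracy of $\pi$ is not what drives the bound. A secondary issue: the slicings you describe (``$d$ thick slices along one axis,'' ``$d^2$ scalar slices along any pair of axes'') are not valid slice-rank decompositions, since each slice must be a linear form in exactly one slot times a bilinear form in the other two; a ``thick slice'' along $x$ would carry a rank-$d$ factor in the $x$-slot, which slice rank does not permit. The correct route, following Sawin, must incorporate the modular (characteristic-$p$) structure of the group algebra rather than its characteristic-zero Wedderburn decomposition.
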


There are a number of families of groups $G$ where even stronger upper bounds than this are known; we refer the reader to the introduction of \cite{blasiak2017groups} for an exposition of these bounds. In Section \ref{sec:grouptools}, we will show how Theorem \ref{thm:multmatch} (and also the aforementioned stronger bounds) can be used to give lower bounds on the $\omega$ bound one can achieve using the Galactic method on a wide range of tensors $T$.

\subsection{Comparison with Slice Rank Bounds}

The work on limitations of the group-theoretic approach typically proceeds by giving upper bounds on the so-called `slice rank' of the tensor $T_G$ of a group $G$. It is known \cite{tao1,tao2} that for any tensor $T$, if $T$ has a degeneration to an independent tensor $D$, then $|D| \leq \text{slice-rank}(T)$. Hence, for some tensor $T$, if one can show an upper bound on $\text{slice-rank}(T^{\otimes n})$ for all $n$, this yields an upper bound on $\omega_u(T)$, the value of $\omega$ which can be achieved using the \emph{Universal} method applied to $T$.

For instance, the limitation result of Sawin~\cite{sawin2017bounds}, Theorem~\ref{thm:multmatch} above, is proved by showing that for every fixed group $G$, there is a $\delta<1$ such that $\text{slice-rank}(T_G^{\otimes n}) < \delta^n |G|^n$, which implies using the connection described above that $\omega_u(T_G) > 2$. In particular, this generalizes our Theorem~\ref{thm:groupomeganot2} in which we show that Sawin's result implies that $\omega_g(T_G) > 2$. Again, we note that since $\delta$ depends on $G$, this does not rule out achieving $\omega=2$ by using the Universal method applied to a sequence of groups whose lower bounds on $\omega_u$ approach $2$.

It is worth asking whether similar slice-rank upper bounds can be used to show a lower bound on $\omega_u(CW_q)$ as well. Indeed, $CW_q$ is easily seen to have slice-rank at most $3$. However, slice-rank is not submultiplicative in general, and in fact it is known that $CW_q^{\otimes n}$ can have slice-rank much more than $3^n$. For instance, the fact that $\omega_s(CW_5) \leq 2.373$ implies that $\text{slice-rank}(CW_5^{\otimes n}) \geq 7^{2n/2.373 - o(n)} \geq 5.15^{n - o(n)}$. It is not clear how to upper bound the slice-rank of $CW_q^{\otimes n}$ in general.

We refer to \cite{blasiak, blasiak2017groups} for formal definitions related to slice-rank and matrix multiplication, as we won't need slice-rank in this paper.

\section{Matrix Multiplication and Independent Tensors} \label{sec:independent}

In this section, we will lay out our main framework for proving lower bounds on what values of $\omega$ can be achieved using different tensors $T$ in the Galactic Method. The main idea is that, to prove such a lower bound for tensor $T$, it is sufficient to give an upper bound on $\tilde{I}(T)$.

\begin{definition}
For a tensor $T$, let $\omega_g(T) \geq 2$ denote the best bound on $\omega$ that one can achieve using the Galactic Method with $T$. Hence, for all tensors $T$, we have $\omega \leq \omega_g(T)$.
\end{definition}

\begin{lemma} \label{lem:omegagdef}
Let $T$ be any tensor. For each positive integers $n,a,b,c$, let $F_{T,n,a,b,c}$ be the largest number of disjoint (sharing no variables) copies of $\langle a,b,c \rangle$ which can be found as a monomial degeneration of $T^{\otimes n}$. Then,
$$\omega_g(T) = 3 \cdot \liminf_{n,a,b,c \in \N} \frac{n \log(r) - \log(F_{T,n,a,b,c})}{\log(abc)}.$$
\end{lemma}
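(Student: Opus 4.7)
The plan is to prove the identity by establishing two matching inequalities. Write $r = \tilde{R}(T)$ for brevity, and let $\omega^{\ast}$ denote the right-hand side of the claimed equation. The key observations are that the bound produced by the Galactic method in general is the Schönhage-style bound coming from a direct sum of (possibly different) matrix multiplication tensors, while the quantity $F_{T,n,a,b,c}$ only sees monomial degenerations into copies of a \emph{single} $\langle a,b,c\rangle$; so the whole proof reduces to the observation that these two classes of bounds coincide in the limit.

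First, I would prove $\omega_g(T)\le \omega^{\ast}$. Fix any $n,a,b,c$ with $F := F_{T,n,a,b,c}\ge 1$. By definition there is a monomial degeneration of $T^{\otimes n}$ into $F$ disjoint copies of $\langle a,b,c\rangle$, which is a legitimate instance of the Galactic method applied to $T$. Since monomial degeneration does not increase asymptotic rank, the resulting tensor $F\odot \langle a,b,c\rangle$ has border rank at most $r^{n}$ (to be precise, at most $R(T^{\otimes n})$, which is $r^{n+o(n)}$). Plugging this into Schönhage's asymptotic sum inequality gives $F\cdot(abc)^{\omega/3}\le r^{n}$, i.e. $\omega\le 3(n\log r-\log F)/\log(abc)$. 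Taking the $\liminf$ over all valid $(n,a,b,c)$ yields $\omega_g(T)\le \omega^{\ast}$.

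For the converse $\omega_g(T)\ge \omega^{\ast}$, consider any valid application of the Galactic method: a monomial degeneration $\varphi$ from $T^{\otimes n}$ to a direct sum $S := \bigoplus_{i=1}^{m}\langle a_i,b_i,c_i\rangle$, yielding (via Theorem~\ref{thm:tauthm}) the bound $\omega\le 3\tau$ where $\sum_i(a_ib_ic_i)^{\tau}=r^{n}$. The decisive step is to invoke the internal content of Schönhage's proof, which the paper already highlights in the Preliminaries: for every $\eps>0$ there exist $N,A,B,C,F$ and a \emph{zeroing out} (hence a monomial degeneration) $\psi$ of $S^{\otimes N}$ into $F$ disjoint copies of $\langle A,B,C\rangle$ such that $3(Nn\log r - \log F)/\log(ABC)\le 3\tau+\eps$. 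Since monomial degenerations compose (one simply adds the exponent functions on the three sides), the composition $\psi\circ\varphi^{\otimes N}$ is a monomial degeneration of $T^{\otimes nN}$ into $F$ disjoint copies of $\langle A,B,C\rangle$. Therefore $F_{T,nN,A,B,C}\ge F$, which contributes a value of at most $3\tau+\eps$ to the $\liminf$. Letting $\eps\to 0$ gives $\omega^{\ast}\le 3\tau$, and since this holds for every application of the Galactic method, we conclude $\omega^{\ast}\le \omega_g(T)$.

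The only real obstacle is bookkeeping: we need to be sure that (i) the asymptotic rank bound $r^{n+o(n)}$, rather than the (possibly larger) border rank of $T^{\otimes n}$, is what enters Schönhage's inequality, and (ii) the Schönhage reduction from a direct sum of different-shaped matrix multiplication tensors to many copies of a single one is truly accomplished by zeroing out (so is available in the Galactic/monomial-degeneration framework). Point (i) is handled by passing to a further tensor power, since $R(T^{\otimes k})^{1/k}\to r$, at the cost of an $o(1)$ loss that is absorbed into the $\eps$. Point (ii) is exactly what the paper already emphasizes in its discussion of Schönhage's proof, so composing the two zeroing-out/monomial-degeneration stages gives a bona fide monomial degeneration of a tensor power of $T$, as required.
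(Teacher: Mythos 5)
Your proof is correct and follows essentially the same approach as the paper's. The paper's proof is terser: it simply invokes the fact (referencing Schönhage's and Bläser's treatments) that one may restrict WLOG to degenerations into a direct sum of identically-shaped matrix multiplication tensors, then applies Theorem~\ref{thm:tauthm}; your proposal spells out the two inequalities explicitly and makes the key composition step concrete (tensoring the original monomial degeneration $\varphi^{\otimes N}$ with the zeroing out $\psi$ inside Schönhage's argument), but the substance is the same.
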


\begin{proof}
$\omega_g(T)$ is defined as the $\liminf$, over all $n$ and all ways to monomial degenerate $T^{\otimes n}$ into a disjoint sum of matrix multiplication tensors, of the corresponding bound on $\omega$ which one gets by applying the asymptotic sum inequality, Theorem~\ref{thm:tauthm}. However, as in the proof of Theorem~\ref{thm:tauthm} (see e.g. \cite[Section 7.2]{Sch81} or \cite[Proof of Theorem 7.5]{blaser}), we can restrict our attention without loss of generality to monomial degenerations into a disjoint sum of matrix multiplication tensors of the same dimensions, i.e. monomial degenerations from $T^{\otimes n}$ to $F_{T,n,a,b,c} \odot \langle a, b, c \rangle$ for all choices of $a,b,c,$ and $n$. Then, by Theorem~\ref{thm:tauthm}, if $T^{\otimes n}$ has a monomial degeneration to $F_{T,n,a,b,c} \odot \langle a, b, c \rangle$, this shows that $\tilde{R}(F_{T,n,a,b,c} \odot \langle a, b, c \rangle) \leq \tilde{R}(T^{\otimes n}) = (\tilde{R}(T))^n$, which yields $\omega_g(T) \leq 3 \log((\tilde{R}(T))^n / F_{T,n,a,b,c}) / \log(a b c)$, as desired.
\end{proof}

We use the following monomial degeneration of matrix multiplication tensors which slightly generalizes Strassen's (from \cite[Theorem 4]{laser}). We prove it here for completeness.

\begin{lemma} \label{lem:strassenmonomialdegen}
For any positive integers $a,b,c$, there is a monomial degeneration of $\langle a,b,c \rangle$ into an independent tensor of size $\frac34 \cdot \frac{abc}{\max\{a,b,c\}}$.
\end{lemma}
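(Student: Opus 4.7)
The plan has two parts: first, exhibit an explicit integer weighting of the variables of $\langle a,b,c\rangle$ so that the zero level set is an arithmetic ``slice'' $\{k=i+j-t\}$ which is automatically an independent tensor, and second, show that for some integer $t$, this slice contains at least $\tfrac{3}{4}\cdot\tfrac{abc}{\max\{a,b,c\}}$ terms.

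By the cyclic symmetry $\langle a,b,c\rangle \cong \langle b,c,a\rangle \cong \langle c,a,b\rangle$, I assume $c=\max\{a,b,c\}$; the target size then simplifies to $\tfrac{3ab}{4}$. For each integer $t$, define weights on the $x$-, $y$-, $z$-variables (indexed by $[a]\times[b]$, $[b]\times[c]$, $[c]\times[a]$) by
\[
\alpha(i,j) := (i+j)(i+j-2t), \qquad \beta(j,k) := -2jk, \qquad \gamma(k,i) := (k+t)^2 - 2ik.
\]
A direct expansion gives $\alpha(i,j)+\beta(j,k)+\gamma(k,i) = (i+j-k-t)^2$, which is a non-negative integer vanishing exactly on $\{k=i+j-t\}$. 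Using the equivalent definition of monomial degeneration from the Preliminaries, this yields a monomial degeneration $S_t \trianglelefteq \langle a,b,c\rangle$, where $S_t$ is the sub-tensor of $\langle a,b,c\rangle$ supported on $\{(i,j,k)\in[a]\times[b]\times[c] : k=i+j-t\}$. That $S_t$ is an independent tensor follows from the linear relation: if two of its triples share an $x$-, $y$-, or $z$-variable then two of the coordinates already agree, and $k=i+j-t$ forces the third to agree.

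It remains to count $|S_t|$ and optimize $t$. Since $k=i+j-t$ must lie in $[c]$, $|S_t|$ equals the number of $(i,j)\in[a]\times[b]$ with $i+j\in[t+1,\,t+c]$. Let $n_s := |\{(i,j) : i+j=s\}|$; the sequence $(n_s)$ is supported on $\{2,\ldots,a+b\}$, sums to $ab$, and is unimodal and symmetric, with its smallest values at the two ends. If $c\ge a+b-1$, pick $t$ so the length-$c$ window covers the entire support of $(n_s)$, capturing all $ab$ terms. Otherwise, set $M:=a+b-1-c\ge 1$ and choose $t$ so that the excluded $s$-values form a prefix of length $p$ and a suffix of length $q$ with $p+q=M$ and $|p-q|\le 1$. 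A short calculation shows the number of excluded pairs equals $\tfrac{p(p+1)}{2}+\tfrac{q(q+1)}{2}$, whose integer minimum over $p+q=M$ is at most $\tfrac{(M+1)^2}{4}=\tfrac{(a+b-c)^2}{4}$. The hypothesis $c\ge\max\{a,b\}$ gives $(a+b-c)^2 \le a^2 \le ab$, so at most $\tfrac{ab}{4}$ pairs are excluded, yielding $|S_t|\ge \tfrac{3ab}{4}$.

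The algebraic construction of the weights is straightforward: once one requires the zero set to be the affine hyperplane $k=i+j-t$, rewriting $(i+j-k-t)^2$ as a sum of functions of $(i,j)$, $(j,k)$, and $(k,i)$ is a mechanical identity. The main obstacle is the counting optimization, where the constant $\tfrac{3}{4}$ is actually sharp -- for instance it is tight when $a=b=c$ is even -- and one must use the full strength of the inequality $(a+b-c)^2 \le ab$ (which uses both $c\ge a$ and $c\ge b$, together with $a\le b$) combined with a careful integer choice of $p,q$ with $p+q=M$ to avoid slack in the final bound.
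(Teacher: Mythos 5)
Your proof is correct and follows essentially the same approach as the paper: a Strassen-style quadratic monomial degeneration whose zero level set is the affine diagonal slice (the paper uses $(i+j+k)^2$ with indices centered at $0$; your $(i+j-k-t)^2$ is the same up to a sign change and shift), followed by a lattice-point count on that slice. The paper only treats the all-odd case explicitly and remarks that the remaining cases are ``similar,'' whereas your shift parameter $t$ handles all parities uniformly and makes the window optimization explicit via the bound $\lceil (a+b-1-c)/2\rceil$, $\lfloor (a+b-1-c)/2\rfloor$ on the excluded prefix/suffix lengths; otherwise the two arguments coincide.
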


\begin{proof}
Assume first that $a = 2m+1$, $b=2n+1$, and $c=2p+1$ are all odd, and assume without loss of generality that $c\geq a,b$. Recall that $$\langle a,b,c \rangle = \sum_{i=-m}^m \sum_{j=-n}^n \sum_{k = -p}^p x_{ij} y_{jk} z_{ki}.$$
We define our monomial degeneration via the maps $\alpha : X \to \Z, \beta : Y \to \Z$, and $\gamma : Z \to \Z$ defined as follows:
\begin{itemize}
    \item $\alpha(x_{ij}) = i^2 + 2ij$,
    \item $\beta(y_{jk}) = j^2 + 2jk$, and
    \item $\gamma(z_{ki}) = k^2 + 2ki$,
\end{itemize}
For any term $x_{ij} y_{jk} z_{ki} \in \langle a,b,c \rangle$, we thus have $\alpha(x_{ij}) + \beta(y_{jk}) + \gamma(z_{ki}) = (i+j+k)^2 \geq 0$. We have equality, and thus the term is included in the result $D$ of the monomial degeneration, if and only if $i+j+k=0$. We can see that if $i+j+k=0$, then any two of $i,j,k$ determines the third, meaning any one of the variables $x_{ij}, y_{jk}, z_{ki}$ determines the other two, and so $D$ is indeed an independent tensor. Finally, there is a triple of $(i,j,k)$, $|i| \leq n, |j| \leq m, |k| \leq p$ with $i+j+k=0$ for each pair $(i,j)$, $|i| \leq n, |j| \leq m$ with $|i+j| \leq p$. Since $p \geq n,m$, we can see there are at least $\frac34 ab$ such pairs, as desired. The cases where $a,b,c$ are not all odd are similar.
\end{proof}

Finally we need a Lemma relating monomial degenerations to independent tensors and zeroing-outs to independent tensors, which is a special case of a result of \cite{almanitcs}:

\begin{lemma}[\cite{almanitcs} Lemma 5.1] \label{lem:awmondegtozero}
Suppose $A$ is a tensor which has a monomial degeneration into $f$ independent triples. Then, for positive integers $n$, $A^{\otimes n}$ has a zeroing out into $\Omega(f^n / n^2) = f^{n - o(n)}$ independent triples. 
\end{lemma}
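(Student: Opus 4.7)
The plan is to exploit the integer-weight characterization of monomial degenerations. The hypothesis supplies weight functions $a,b,c$ on the variables of $A$ with integer values such that every term in the support of $A$ has weight-sum $\geq 0$, with equality for exactly $f$ terms, and those $f$ terms form an independent sub-tensor $S$ of $A$. Weights tensor naturally: on $A^{\otimes n}$ define the coordinate-wise weights $\vec a(\vec i):=\sum_{\ell=1}^n a(i_\ell)$, $\vec b(\vec j):=\sum_{\ell=1}^n b(j_\ell)$, $\vec c(\vec k):=\sum_{\ell=1}^n c(k_\ell)$. Every term in the support of $A^{\otimes n}$ then has weight-sum $\geq 0$, and the terms of weight-sum exactly $0$ are precisely those of $S^{\otimes n}$, an independent tensor of size $f^n$. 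So far, this gives a monomial degeneration of $A^{\otimes n}$ to an independent tensor of size $f^n$, but not yet a zeroing out.

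To convert this monomial degeneration into a zeroing out, I would bucket the $f^n$ terms of $S^{\otimes n}$ according to the triple of individual weights $(\vec a(\vec i),\vec b(\vec j),\vec c(\vec k))$. Let $M$ be the maximum absolute value appearing among the weights $a(i),b(j),c(k)$ in the original degeneration of $A$; this is a constant depending only on $A$. Then $\vec a(\vec i),\vec b(\vec j),\vec c(\vec k)$ are integers in $[-nM,nM]$, and since the triple must sum to $0$, it is determined by its first two coordinates. Hence there are at most $(2nM+1)^2=O(n^2)$ possible weight triples, and by pigeonhole some choice $(W_x,W_y,W_z)$ accounts for at least $f^n/O(n^2)$ of the terms of $S^{\otimes n}$.

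Now execute the zeroing out on $A^{\otimes n}$ that keeps only those $x$-variables of weight $W_x$, $y$-variables of weight $W_y$, and $z$-variables of weight $W_z$. Every surviving term automatically has weight-sum $W_x+W_y+W_z=0$, so by the monomial-degeneration hypothesis it must lie in $S^{\otimes n}$. The resulting tensor is therefore a sub-tensor of the independent tensor $S^{\otimes n}$ of size at least $f^n/O(n^2)$, and a sub-tensor of an independent tensor is itself independent. This yields the claimed $\Omega(f^n/n^2)=f^{n-o(n)}$.

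The only subtlety — not really an obstacle — is the implicit claim that fixing exact weight triples in a zeroing out is enough to prevent spurious terms (those outside $S^{\otimes n}$) from surviving; this is exactly where the nonnegativity half of the monomial-degeneration property, transported to the tensor power, does the work. The factor $n^2$ loss is entirely from the pigeonhole bucketing and is irrelevant in the asymptotic $1/n$-th root, so no extra care is needed there.
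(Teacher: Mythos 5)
Your argument is correct and is the standard way to prove this; the paper defers the proof to \cite{almanitcs}, and the argument there is essentially the same pigeonhole-over-weight-classes reduction you give (transport the weight functions to $A^{\otimes n}$ coordinate-wise, observe the weight-zero layer is $S^{\otimes n}$, bucket the $f^n$ surviving triples by their $(W_x,W_y,W_z)$ triple of which there are only $O(n^2)$ summing to zero, keep one bucket, and note the nonnegativity of weights forces any surviving spurious term to have strictly positive weight-sum, a contradiction). One tiny clarification worth making explicit: the weight-zero sub-tensor $S$ of $A$ is automatically independent (no two of its triples share an $x$-, $y$-, or $z$-index), hence $S^{\otimes n}$ is independent, hence the surviving bucket is independent as a sub-tensor of it; you use this fact implicitly in the final step and it is exactly right.
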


\begin{corollary} \label{cor:mondegtoit}
For any tensor $T$ and positive integer $n$, if $T^{\otimes n}$ has a monomial degeneration to an independent tensor of size $f$, then $\tilde{I}(T) \geq f^{1/n}$.
\end{corollary}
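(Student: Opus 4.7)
The plan is to directly invoke Lemma~\ref{lem:awmondegtozero} with the tensor $A := T^{\otimes n}$, since this lemma already does essentially all of the work of converting a monomial degeneration into a zeroing-out, which is what the definition of $\tilde{I}$ needs.

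More concretely, first I would observe that $(T^{\otimes n})^{\otimes m} = T^{\otimes nm}$, so applying Lemma~\ref{lem:awmondegtozero} to $A = T^{\otimes n}$ (which by hypothesis admits a monomial degeneration to an independent tensor of size $f$) yields, for every positive integer $m$, a zeroing-out of $T^{\otimes nm}$ to an independent tensor of size at least $\Omega(f^m / m^2) = f^{m - o(m)}$. In particular, $I(T^{\otimes nm}) \geq f^{m - o(m)}$.

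Next I would unwind the definition of the asymptotic independence number. Taking $(nm)$-th roots, we obtain
\[
\bigl[I(T^{\otimes nm})\bigr]^{1/(nm)} \;\geq\; f^{(m - o(m))/(nm)} \;=\; f^{1/n - o(1)}.
\]
Letting $m \to \infty$, the $\limsup$ in the definition $\tilde{I}(T) = \limsup_{N} [I(T^{\otimes N})]^{1/N}$ is at least the limit along the subsequence $N = nm$, which is $f^{1/n}$. This yields $\tilde{I}(T) \geq f^{1/n}$, as claimed.

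There is no real obstacle here, since all the technical content (namely, that a monomial degeneration to an independent tensor can be promoted to a zeroing-out, after paying only a subexponential loss) is already absorbed into Lemma~\ref{lem:awmondegtozero}. The only care needed is to apply the lemma to $T^{\otimes n}$ rather than to $T$ itself, and to check that the $f^{-o(m)}$ error disappears once the $(nm)$-th root is taken and $m$ is sent to infinity.
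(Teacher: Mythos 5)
Your proposal is correct and matches the paper's own proof: both apply Lemma~\ref{lem:awmondegtozero} to $A = T^{\otimes n}$ to obtain zeroing-outs of $T^{\otimes nm}$ into independent tensors of size $f^{m-o(m)}$, then take $(nm)$-th roots and let $m \to \infty$ to conclude $\tilde{I}(T) \geq f^{1/n}$. The only cosmetic difference is that you phrase the subexponential loss via $\Omega(f^m/m^2)$ directly while the paper quantifies over $\delta > 0$, but the argument is the same.
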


\begin{proof}
By Lemma \ref{lem:awmondegtozero}, for any $\delta>0$, there is a positive integer $m$ such that $(T^{\otimes n})^{\otimes m}$ has a zeroing out into $f^{m(1-\delta)}$ independent triples, which means $I(T^{\otimes nm}) \geq f^{m(1-\delta)}$ and hence $\tilde{I}(T) \geq f^{(1-\delta)/n}$. 
\end{proof}

We similarly get:

\begin{corollary} \label{lem:IAltIB}
For any tensors $A$ and $B$, if $A$ is a monomial degeneration of $B$, then $\tilde{I}(A) \leq \tilde{I}(B)$.
\end{corollary}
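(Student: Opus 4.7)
The plan is to chain together monomial degenerations and then invoke Corollary~\ref{cor:mondegtoit}. The key observation is that the definition of a monomial degeneration given in Section~\ref{sec:prelims} via integer-valued maps $a,b,c$ on the index sets is manifestly closed under two operations: (i) tensor powers (take coordinate-wise sums of the corresponding maps), and (ii) composition (sum the maps appropriately and intersect the supports). In particular, if $A$ is a monomial degeneration of $B$, then $A^{\otimes n}$ is a monomial degeneration of $B^{\otimes n}$ for every positive integer $n$.

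First, I would fix any $n$ and recall that a zeroing out is itself a monomial degeneration (take the maps to be $0$ on the surviving indices and, say, $1$ on the zeroed-out indices). Thus $A^{\otimes n}$ has a monomial degeneration to an independent tensor of size $I(A^{\otimes n})$. Composing this with the monomial degeneration from $B^{\otimes n}$ to $A^{\otimes n}$, we obtain a monomial degeneration from $B^{\otimes n}$ to an independent tensor of size $I(A^{\otimes n})$.

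Next, I would apply Corollary~\ref{cor:mondegtoit} to this monomial degeneration, which yields
\[
\tilde{I}(B) \;\geq\; \bigl( I(A^{\otimes n}) \bigr)^{1/n}.
\]
Since this holds for every $n$, taking the $\limsup$ over $n$ on the right-hand side gives $\tilde{I}(B) \geq \tilde{I}(A)$ by the definition of $\tilde{I}$.

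There is no serious obstacle here; the only mildly technical point is verifying that monomial degenerations compose and are preserved under tensor powers. This is a routine check using the characterization in terms of integer-valued functions $a,b,c$ on index sets satisfying $a(i)+b(j)+c(k)\ge 0$, with equality exactly on the target support: for tensor powers take $a(\i)=\sum_t a(i_t)$ (and similarly for $b,c$), and for composition add the maps from the two stages. Everything else is just unwinding the definitions of $\tilde{I}$ and $\omega_g$.
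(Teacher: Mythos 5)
Your proof is correct and follows essentially the same route the paper intends: the paper presents this corollary as being obtained ``similarly'' to Corollary~\ref{cor:mondegtoit}, i.e., by composing the monomial degeneration $B\to A$ with a zeroing out of $A^{\otimes n}$ (which is itself a monomial degeneration), passing through Lemma~\ref{lem:awmondegtozero}, and taking $n\to\infty$. One small technical note: when composing two combinatorial monomial degenerations you generally need a \emph{weighted} sum of the index maps, e.g.\ $M a_1 + a_2$ with $M$ large, rather than a plain sum, since the inner stage's maps may be negative on triples that the outer stage leaves strictly positive; this is routine and does not affect your argument.
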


Combining our results so far shows that matrix multiplication tensors have large asymptotic independence numbers:

\begin{lemma} \label{lem:mmind}
For any positive integer $a,b,c$ we have $\tilde{I}(F\odot \langle a,b,c \rangle) = \frac{F\cdot abc}{\max\{a,b,c\}}$.
\end{lemma}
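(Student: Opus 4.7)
The plan is to prove matching upper and lower bounds.

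For the upper bound, I would use the trivial inequality $\tilde{I}(T) \leq \min\{|X|,|Y|,|Z|\}$ noted in the preliminaries. The tensor $\langle a,b,c\rangle$ has $ab$ many $x$-variables, $bc$ many $y$-variables, and $ca$ many $z$-variables, so $F \odot \langle a,b,c\rangle$ has $Fab$, $Fbc$, and $Fca$ variables respectively. The minimum of $\{ab, bc, ca\}$ is exactly $abc/\max\{a,b,c\}$, which gives the stated upper bound.

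For the lower bound, the natural route is to combine Strassen's monomial degeneration (Lemma~\ref{lem:strassenmonomialdegen}) with Corollary~\ref{cor:mondegtoit}, and use tensor powers to absorb the $\tfrac34$ factor. The first step is to distribute the tensor power over the direct sum: since tensor product distributes over $\oplus$ when the summands are on disjoint variable sets, we have
\[ (F \odot \langle a,b,c\rangle)^{\otimes n} \;=\; F^n \odot \langle a,b,c\rangle^{\otimes n} \;=\; F^n \odot \langle a^n,b^n,c^n\rangle. \]
Next, applying Lemma~\ref{lem:strassenmonomialdegen} to the single tensor $\langle a^n,b^n,c^n\rangle$ yields a monomial degeneration into an independent tensor of size $\tfrac34 \cdot (abc)^n/(\max\{a,b,c\})^n$. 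Since a disjoint union of monomial degenerations on disjoint variables is still a monomial degeneration (just glue the integer labelings $\alpha,\beta,\gamma$ across summands), we can apply this to each of the $F^n$ disjoint copies simultaneously, producing a monomial degeneration of $(F \odot \langle a,b,c\rangle)^{\otimes n}$ into an independent tensor of size $\tfrac34 \cdot F^n (abc)^n / (\max\{a,b,c\})^n$.

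By Corollary~\ref{cor:mondegtoit}, for every positive integer $n$ this gives
\[ \tilde{I}(F \odot \langle a,b,c\rangle) \;\geq\; \left( \tfrac34 \right)^{1/n} \cdot \frac{F \cdot abc}{\max\{a,b,c\}}. \]
Letting $n \to \infty$ makes $(3/4)^{1/n} \to 1$, matching the upper bound and completing the proof. There is no substantive obstacle here; the only thing to verify carefully is the bookkeeping that a direct sum of monomial degenerations on disjoint variables is itself a monomial degeneration, and the elementary distributivity $(F \odot T)^{\otimes n} = F^n \odot T^{\otimes n}$. The $\tfrac34$ loss is standard and is killed by the $n$-th root in the definition of $\tilde{I}$.
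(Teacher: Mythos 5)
Your proof is correct and follows essentially the same route as the paper: bound $\tilde I$ above by counting variables, and below by distributing the tensor power over the direct sum, applying Strassen's monomial degeneration (Lemma~\ref{lem:strassenmonomialdegen}) to $\langle a^n,b^n,c^n\rangle$, gluing across the $F^n$ copies, and invoking Corollary~\ref{cor:mondegtoit} to absorb the $\tfrac34$ factor. Your version is slightly more careful in spelling out the gluing of the integer labelings across disjoint summands, which the paper leaves implicit; otherwise the two arguments are the same.
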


\begin{proof}
Assume without loss of generality that $c \geq a,b$. We have that $\tilde{I}(F\odot \langle a,b,c \rangle) \leq F ab$ since $F\odot \langle a,b,c \rangle$ has only $F ab$ different $x$-variables. In order to show that $\tilde{I}(F\odot \langle a,b,c \rangle) \geq F ab$ and complete the proof, we will show that for every $\delta < 1$, we have $\tilde{I}(F\odot \langle a,b,c \rangle) \geq \delta Fab$. 

Let $n$ be a big enough positive integer so that $\left(\frac34 \right)^{1/n} \geq \delta$. By Lemma \ref{lem:strassenmonomialdegen}, we know that $(F\odot \langle a,b,c \rangle)^{\otimes n}$, which is isomorphic to $F^n\odot \langle a^n, b^n, c^n \rangle$, has a monomial degeneration to an independent tensor of size $F^n \frac34 a^n b^n$. Hence, by Lemma \ref{lem:awmondegtozero}, we have $\tilde{I}((F\odot \langle a,b,c \rangle)^{\otimes n}) \geq F^n \frac34 a^n b^n$, and so by Corollary \ref{cor:mondegtoit}, $\tilde{I}(F\odot \langle a,b,c \rangle) \geq (F^n \frac34 a^n b^n)^{1/n} \geq \delta Fab$, as desired.
\end{proof}

Finally, we can prove the main idea behind our lower bound framework:

\begin{theorem} \label{thm:IROmega}
For any concise tensor $T$, $$\tilde{I}(T) \geq \tilde{R}(T)^{\frac{6}{\omega_g(T)} - 2}.$$
\end{theorem}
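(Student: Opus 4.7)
The plan is to combine the hypothesis on $\omega_g(T)$ via Lemma~\ref{lem:omegagdef} with Strassen's monomial degeneration of matrix multiplication tensors into independent triples (Lemma~\ref{lem:strassenmonomialdegen}), and to extract an independence-number lower bound using Corollary~\ref{cor:mondegtoit}. First I would fix $\varepsilon>0$ and invoke Lemma~\ref{lem:omegagdef} to pick positive integers $n,a,b,c$ and $F=F_{T,n,a,b,c}\ge 1$ such that $T^{\otimes n}$ monomially degenerates to $F\odot\langle a,b,c\rangle$ and
\[\frac{3(n\log\tilde R(T)-\log F)}{\log(abc)}\ \le\ \omega_g(T)+\varepsilon,\]
equivalently $\log F\ge n\log\tilde R(T)-\tfrac{\omega_g(T)+\varepsilon}{3}\log(abc)$. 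Then, assuming without loss of generality that $c=\max\{a,b,c\}$, I would apply Lemma~\ref{lem:strassenmonomialdegen} to monomially degenerate $\langle a,b,c\rangle$ into $\tfrac{3}{4}ab$ independent triples; composing monomial degenerations gives a monomial degeneration of $T^{\otimes n}$ into $\tfrac{3}{4}Fab$ independent triples, and Corollary~\ref{cor:mondegtoit} yields $\tilde I(T)\ge(\tfrac{3}{4}Fab)^{1/n}$.

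Next I would substitute the lower bound on $\log F$ and simplify. Writing $\rho=\tilde R(T)$, $w=\omega_g(T)$, and using $\log(ab)=\log(abc)-\log c$, the right-hand side becomes
\[n\log\tilde I(T)\ \ge\ n\log\rho\ +\ \Bigl(1-\tfrac{w+\varepsilon}{3}\Bigr)\log(abc)\ -\ \log c\ -\ O(1).\]
The goal is to show this is at least $n\log\rho\cdot(\tfrac{6}{w+\varepsilon}-2)-o(n)$, after which dividing by $n$, letting $n\to\infty$, and then $\varepsilon\to 0$ will yield the target $\tilde I(T)\ge\tilde R(T)^{6/\omega_g(T)-2}$. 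To close this estimate I would use the two-sided control on $\log c$ coming from $c=\max\{a,b,c\}$, which forces $\log c\ge\tfrac{1}{3}\log(abc)$ via the geometric-mean inequality $c\ge(abc)^{1/3}$, together with the constraint $\log(abc)\le\tfrac{3n\log\rho}{w+\varepsilon}$ that follows from $F\ge 1$ plugged into the lower bound on $\log F$.

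The hard part will be the last algebraic step. The naive substitutions $\log(ab)\ge 0$ and $\log c\le\log(abc)$ only yield the trivial bound $\tilde I(T)\ge 1$, so one has to combine both sides of $\log c\in[\tfrac{1}{3}\log(abc),\log(abc)]$ with the $F\ge 1$ bound carefully, paying attention to the sign of the coefficient $1-\tfrac{w+\varepsilon}{3}$. The derivation is cleanest in the symmetric case $a=b=c$, where every estimate is saturated and the exponent $\tfrac{6}{w}-2$ emerges immediately; the asymmetric case is more delicate because the yield $\tfrac{3}{4}ab$ of Strassen's monomial degeneration degrades when $\langle a,b,c\rangle$ is lopsided. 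I would handle this either by a direct case analysis or, if needed, by symmetrizing via Strassen's identity $\langle a,b,c\rangle\otimes\langle b,c,a\rangle\otimes\langle c,a,b\rangle\cong\langle abc,abc,abc\rangle$ to reduce to the symmetric case, though this symmetrization requires some care since the three cyclic shifts need not all arise simultaneously as monomial degenerations of tensor powers of $T$.
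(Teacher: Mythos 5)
Your opening steps are genuinely the same as the paper's: pick a near-optimal $(n,a,b,c,F)$ via Lemma~\ref{lem:omegagdef}, feed Strassen's monomial degeneration of $\langle a,b,c\rangle$ into $F\odot\langle a,b,c\rangle$, and invoke Corollary~\ref{cor:mondegtoit} to conclude $\tilde I(T)\ge\bigl(\tfrac34\,Fab\bigr)^{1/n}$ (this is exactly how the paper's Lemma~\ref{lem:mmind} is proved and then used). You have also correctly located where the real work is: your ``naive'' estimates give $\tilde I(T)\ge 1$, and your proposed symmetrization via Strassen's identity has precisely the obstruction you worry about — from a monomial degeneration of $T^{\otimes n}$ into $\langle a,b,c\rangle$ one does \emph{not} get the cyclic shifts $\langle b,c,a\rangle$ and $\langle c,a,b\rangle$, since those permute the roles of the $X,Y,Z$ variables of $T$.

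The genuine gap is in the constraint set you propose to optimize over, and it is fatal. First, the inequality $\log(abc)\le\tfrac{3n\log\rho}{w+\varepsilon}$ simply does not follow from $F\ge 1$ plugged into the lower bound $\log F\ge n\log\rho-\tfrac{w+\varepsilon}{3}\log(abc)$: that lower bound, together with $\log F\ge 0$, places no upper bound on $\log(abc)$ whatsoever (if anything, taking $\log F=0$ in it gives the \emph{reverse} inequality $\log(abc)\ge\tfrac{3n\log\rho}{w+\varepsilon}$). Second, and more fundamentally, the constraints you do list — the near-optimality of the $\omega$-bound, $F\ge 1$, and $\tfrac13\log(abc)\le\log c\le\log(abc)$ — are satisfiable by, e.g., $a=b=1$, $F=1$, and $c$ of order $\rho^{3n/(w+\varepsilon)}$, in which case your lower bound on $\tilde I(T)$ collapses to a constant. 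What rules this configuration out is not anything in your list; it is the \emph{variable-counting} bound: since a monomial degeneration is in particular a sub-tensor relation, the $Fab$ distinct $x$-variables of $F\odot\langle a,b,c\rangle$ sit inside the $|X|^n$ $x$-variables of $T^{\otimes n}$, and conciseness of $T$ gives $|X|\le\tilde R(T)$, so $Fab\le\tilde R(T)^n$ (and likewise $Fbc,Fca\le\tilde R(T)^n$). You never use conciseness anywhere — a warning sign, since it is a hypothesis of the theorem. Once this is added, the paper closes the argument cleanly by deducing $\min\{a,b,c\}\ge Fabc/\tilde R(T)^n$, hence $\max\{a,b,c\}\le\tilde R(T)^{2n}/(F^2abc)$, and substituting into $\tilde I(T^{\otimes n})\ge Fabc/\max\{a,b,c\}$; you should replace your planned ``combine both sides of $\log c\in[\tfrac13\log(abc),\log(abc)]$'' step with this variable-count step, after which no case analysis or symmetrization is needed.
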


\begin{proof} Let $T$ be over $X,Y,Z$.
By Lemma~\ref{lem:omegagdef}, for every $\delta>0$, there are positive integers $n,a,b,c$ such that $T^{\otimes n}$ has a monomial degeneration to $F\odot \langle a, b, c \rangle$, where 

$$abc 
\geq \left( \frac{\tilde{R}(T)^n}{F}\right)^{\frac{3(1-\delta)}{\omega_g(T)}}.$$
Thus, by Lemma \ref{lem:mmind} and Lemma \ref{lem:IAltIB}, we have that $$\tilde{I}(T^{\otimes n}) \geq \tilde{I}(F\odot\langle a,b,c \rangle) = F\cdot\frac{abc}{\max\{a,b,c\}}.$$
Now, by counting variables in $F\odot\langle a,b,c \rangle$, note that $Fab \leq |X^n| \leq \tilde{R}(T)^n$, and hence,
$$c = \frac{abc}{ab} \geq \frac{F abc}{\tilde{R}(T)^n}.$$
Similarly, $a$ and $b$ have the same lower bound. Hence,
$$\max\{ a,b,c \} \leq \frac{abc}{(\min\{a,b,c\})^2} \leq \frac{abc}{(abc F/ \tilde{R}(T)^n)^2} = \frac{\tilde{R}(T)^{2n}}{F^2abc}.$$
We finally get that
$$\tilde{I}(T^{\otimes n}) \geq \frac{Fabc}{\max\{a,b,c\}} \geq \frac{F^3(abc)^2}{\tilde{R}(T)^{2n}}   \geq   F^{3-6(1-\delta)/\omega_g(T)}\tilde{R}(T)^{2(1-\delta)\frac{3n}{\omega_g(T)} - 2n}.$$
Now let $f=\lim_{n\rightarrow \infty} F^{1/n}$. Since $F\geq 1$, we get that $f\geq 1$. We obtain:
$$\tilde{I}(T) \geq f^{3-6(1-\delta)/\omega_g(T)} \tilde{R}(T)^{2(1-\delta)\frac{3}{\omega_g(T)} - 2}\geq \tilde{R}(T)^{2(1-\delta)\frac{3}{\omega_g(T)} - 2},$$
where the last inequality holds since $f\geq 1$ and  $3-6(1-\delta)/\omega_g(T)\geq 0$. The result follows since the inequality above holds for all $\delta>0$.
\end{proof}

\begin{corollary} \label{cor:omegaandi}
For any tensor $T$, if $\omega_g(T)=2$, then $\tilde{I}(T) = \tilde{R}(T)$. Moreover, for every constant $s<1$, there is a constant $w>2$ such that every tensor $T$ with $\tilde{I}(T) \leq \tilde{R}(T)^s$ must have $\omega_g(T) \geq w$.
\end{corollary}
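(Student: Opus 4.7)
The plan is to obtain both parts of the corollary as nearly immediate arithmetic consequences of Theorem~\ref{thm:IROmega}, which supplies the inequality $\tilde{I}(T) \geq \tilde{R}(T)^{6/\omega_g(T) - 2}$. The only substantive thing to check is that the elementary upper bound $\tilde{I}(T) \leq \tilde{R}(T)$ holds in the needed generality, so that pairing the two bounds actually pins down $\tilde{I}(T)$. For this I would note that $I(T) \leq R(T)$ for every tensor (an independent tensor of size $r$ has rank exactly $r$, and rank cannot increase under zeroing out), so taking $n$-th roots of $I(T^{\otimes n}) \leq R(T^{\otimes n})$ and passing to the limit gives $\tilde{I}(T) \leq \tilde{R}(T)$.

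For the first claim, I would substitute $\omega_g(T) = 2$ directly into Theorem~\ref{thm:IROmega}: the exponent becomes $6/2 - 2 = 1$, so $\tilde{I}(T) \geq \tilde{R}(T)$. Combining with $\tilde{I}(T) \leq \tilde{R}(T)$ forces $\tilde{I}(T) = \tilde{R}(T)$, as claimed. (If $T$ is the zero tensor or is non-concise, the conclusion is either vacuous or reduces to the concise case by restricting to the coordinates on which $T$ is supported, so there is nothing to worry about here.)

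For the second claim, I would set
\[w := \frac{6}{s+2},\]
which satisfies $w > 2$ because $s < 1$, and verify that this choice works by contrapositive. Suppose $\omega_g(T) < w$. Then $6/\omega_g(T) > s+2$, hence $6/\omega_g(T) - 2 > s$. Assuming $\tilde{R}(T) > 1$ (the only interesting case, since otherwise $T$ has no useful matrix multiplication content), Theorem~\ref{thm:IROmega} yields
\[\tilde{I}(T) \;\geq\; \tilde{R}(T)^{6/\omega_g(T) - 2} \;>\; \tilde{R}(T)^{s},\]
contradicting the hypothesis $\tilde{I}(T) \leq \tilde{R}(T)^s$. Thus $\omega_g(T) \geq w$, and moreover $w = 6/(s+2)$ is visibly increasing as $s$ decreases, matching the quantitative remark made in the corollary's informal statement.

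There is really no hard step in this proof — it is pure bookkeeping on top of Theorem~\ref{thm:IROmega}. The only mildly delicate point is justifying the inequality $\tilde{I}(T) \leq \tilde{R}(T)$ for general (not necessarily concise) $T$, and handling the degenerate case $\tilde{R}(T) \leq 1$, both of which are straightforward. All the real work is in the theorem the corollary invokes.
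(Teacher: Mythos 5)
Your derivation is correct and is exactly the arithmetic the paper intends: the paper states Corollary~\ref{cor:omegaandi} as an immediate consequence of Theorem~\ref{thm:IROmega} without spelling out the computation, and your substitution at $\omega_g(T)=2$ plus the explicit choice $w = 6/(s+2)$ and contrapositive check reproduce that implicit argument faithfully, including the reduction to the concise/$\tilde{R}(T)>1$ case that the theorem technically requires.
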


\section{Partitioning Tools for proving lower bounds} \label{sec:partitioning}

The goal of this section is to show some `local' properties of tensors $T$ which imply upper bounds on $\tilde{I}(T)$ (and hence, they will be ultimately used to prove lower bounds on $\omega_g(T)$). The general idea is that we will be finding partitions $T = A + B$ of our tensors, such that at least one of $\tilde{I}(A)$ and $\tilde{I}(B)$ is low, and using this to show that $\tilde{I}(T)$ is itself low. If $\tilde{I}$ were additive, i.e. if it were the case that $\tilde{I}(T) = \tilde{I}(A) + \tilde{I}(B)$ for any partition $T = A+B$, then this would be relatively straightforward. Unfortunately, $\tilde{I}$ is not additive in general, and even in many natural situations:

\begin{example}
Let $q$ be any positive integer, and define the tensors $T_1 := \sum_{i=0}^q x_0 y_i z_i$, $T_2 := \sum_{i=1}^{q+1} x_i y_0 z_i$, and $T_3 := \sum_{i=1}^{q+1} x_i y_i z_{q+1}$. We can see that $T_1$ has only one $x$-variable, $T_2$ has only one $y$-variable, and $T_3$ has only one $z$-variable, and so $\tilde{I}(T_1) = \tilde{I}(T_2) = \tilde{I}(T_3) = 1$. However, $T_1 + T_2 + T_3 = CW_q$, so the three tensors give a partition of the Coppersmith-Winograd tensor! Combining Lemma \ref{lem:mmind} with the fact that $CW_q^{\otimes n}$ is known to zero out into fairly large matrix multiplication tensors for a large enough constant $n$, we see that $\tilde{I}(T_1 + T_2 + T_3)$ can grow unboundedly large as we increase $q$ (in particular, we will see in Theorem~\ref{thm:CWIlb} that $\tilde{I}(T_1 + T_2 + T_3) \geq (q+2)^{2/3}$). We can similarly see that $\tilde{I}(T_1 \otimes T_2 \otimes T_3)$ grows unboundedly with $q$, and so $\tilde{I}$ is not \emph{multiplicative} either.
\end{example}

Throughout this section, we will nonetheless describe a number of general situations where, if $T$ is partitioned into $T = A+B$, then bounds on $\tilde{I}(A)$ and $\tilde{I}(B)$ are sufficient to give bounds on $\tilde{I}(T)$.

We begin with some useful terminology and notation about partitioning tensors. Let $D$ be a sub-tensor of a tensor $T$, that is, it is obtained by removing triples from the support of $T$. If $T$ is over variable sets $X=\{x_1,\ldots,x_a\},Y=\{y_1,\ldots,y_b\},Z=\{z_1,\ldots,z_c\}$, then $T^{\otimes n}$, and hence $D^{\otimes n}$, is over variable sets $\bar{X},\bar{Y},\bar{Z}$, where the variables in $\bar{X}$ are indexed by $n$-length sequences over $[a]$, the variables in $\bar{Y}$ are indexed by $n$-length sequences over $[b]$, the variables in $\bar{Z}$ are indexed by $n$-length sequences over $[c]$. 

\begin{definition} Let $T$ be a partitioned tensor $T=\sum_i P_i$, and let $D$ be a sub-tensor of $T^{\otimes n}$.
Consider some $j\in \{1,\ldots, n\}$.
We say that 
$D$ has an entry of $P_i$ in the $j$th coordinate if there is a triple $(\alpha,\beta,\gamma)$
in the support of $D$ for which $(\alpha_j,\beta_j,\gamma_j)$ is in the support of $P_i$. 
\end{definition}

Since the $P_i$ partition the triples in the support of $T$, this is well-defined.

We begin with our first partitioning tool, which we interpret after the Theorem statement.

\begin{theorem} \label{thm:removeanx}
Suppose $T$ is a tensor over $X,Y,Z$ with $|X|=q$, and $x_1 \in X$ is any $x$-variable such that $x_1$ is in at most $q$ terms in $T$. Let $B := T|_{X \setminus \{x_1\}}$ be the tensor over $X \setminus \{x_1\},Y,Z$ from zeroing out $x_1$ in $T$, and suppose that $c := \tilde{I}(B)$ satisfies $$c \leq \frac{q-1}{q^{1/(q-1)}}.$$ Then, $$\tilde{I}(T) \leq \left( \frac{q-1}{1-p} \right)^{1-p} \cdot \frac{1}{p^p},$$ where $p \in [0,1]$ is given by $$p := \frac{\log\left(\frac{q-1}{c}\right)}{\log\left(q\right) + \log\left(\frac{q-1}{c}\right)}.$$
\end{theorem}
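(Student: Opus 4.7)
The plan is to decompose $T = A + B$, where $A$ consists of all terms of $T$ containing $x_1$ (so $A$ has a single $x$-variable and at most $q$ terms by hypothesis) and $B = T|_{X \setminus \{x_1\}}$ is its complement (so $\tilde{I}(B) = c$). Expanding $T^{\otimes n}$ gives a partition of its support indexed by $S \subseteq [n]$: the piece $T_S$ consists of the terms that use an $A$-term in each position of $S$ and a $B$-term in each position of $\overline{S}$, and is isomorphic to $A^{\otimes |S|} \otimes B^{\otimes (n - |S|)}$.

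Now fix any zeroing out of $T^{\otimes n}$ to an independent tensor $D$, and for each $S$ let $D_S$ denote the sub-tensor of $D$ consisting of those terms lying in $T_S$. I will bound $|D_S|$ in two complementary ways. First, every term of $D_S$ has its $x$-super-variable fixed to $x_1^{\otimes|S|}$ on the $S$-coordinates, so one can group the terms of $D_S$ by their $(y^A, z^A)$-values, where $y^A$ and $z^A$ are the restrictions of the $y$- and $z$-super-variables to coordinates $S$; there are at most $|A^{\otimes |S|}| \leq q^{|S|}$ such groups. Within a single group, independence of $D$ forces distinct $x^B$, $y^B$, and $z^B$, so the $B$-parts in that group form an independent sub-tensor of $B^{\otimes(n-|S|)}$. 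This yields the \emph{grouping bound} $|D_S| \leq q^{|S|} \cdot I(B^{\otimes(n-|S|)})$. Second, the $x$-super-variables in $D_S$ are distinct elements of $\{x_1^{\otimes|S|}\} \times \{x_2, \dots, x_q\}^{n-|S|}$, giving the \emph{variable-count bound} $|D_S| \leq (q-1)^{n-|S|}$.

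Summing over $S$ and using $I(B^{\otimes m}) \leq c^{m(1+o(1))}$, we obtain
\[
I(T^{\otimes n}) \;\leq\; \sum_{k=0}^n \binom{n}{k} \min\!\bigl\{ q^{k} c^{(n-k)(1+o(1))},\; (q-1)^{n-k} \bigr\}.
\]
The two quantities inside the minimum agree precisely at $k/n = p$, via the identity $q^p c^{1-p} = (q-1)^{1-p}$, which is equivalent to the definition of $p$ in the statement. For $k < pn$ the grouping bound is smaller and still increasing in $k$, while for $k > pn$ the variable-count bound is smaller and already decreasing in $k$, so the sum is dominated (up to a polynomial factor in $n$) by the single term at $k = \lfloor pn \rfloor$, where both bounds equal $\binom{n}{pn} (q-1)^{(1-p)n}$. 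Taking $n$-th roots yields $\tilde{I}(T) \leq H(p) \cdot (q-1)^{1-p} = \bigl((q-1)/(1-p)\bigr)^{1-p} (1/p)^p$, which is exactly the claimed bound.

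The main delicate step is justifying the grouping bound: one must carefully verify that within a fixed $(y^A, z^A)$-group the $B$-parts genuinely constitute an independent tensor, which crucially uses that $x^A$ is already pinned to $x_1^{\otimes|S|}$, so the distinctness of the full $x$-super-variables of $D$ forces the $x^B$'s (not merely some coordinate of $x^{\text{super}}$) to be distinct. The remaining bookkeeping, controlling the $o(1)$ term asymptotically and verifying that the stated hypothesis $c \leq (q-1)/q^{1/(q-1)}$ is exactly the threshold at which $p \leq 1/q$ and hence the bound stays at or below the trivial bound $q$, is routine.
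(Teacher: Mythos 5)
Your proof is correct, and it reaches the same bound by a genuinely different combinatorial route than the paper. The paper uses an iterative, greedy coordinate-replacement argument: processing positions $j=1,\dots,n$ one at a time, at each step it chooses whichever of $A$ or $B$ retains at least a $p$ (resp. $1-p$) fraction of the surviving independent triples, replacing $T$ with the chosen tensor in coordinate $j$. This produces a single tensor $A^{\otimes k}\otimes B^{\otimes(n-k)}$ (with $k$ determined by the process) whose independence number is at least $g_n\,p^k(1-p)^{n-k}$, and the two-sided bound $\min\{(q-1)^{n-k},\,q^k c^{n-k}\}$ is then applied. You instead partition the independent tensor $D$ directly by the subset $S\subseteq[n]$ of coordinates using an $A$-term, bound each piece $|D_S|$ with the same two-sided estimate, and sum; the binomial coefficient $\binom{n}{pn}$ in your sum is what reproduces the factor $1/(p^p(1-p)^{1-p})$ that the paper obtains from dividing by $p^k(1-p)^{n-k}$. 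Both proofs share the crucial double bound on independence numbers of $A^{\otimes k}\otimes B^{\otimes(n-k)}$ (and both use the same ``grouping'' argument to get the $q^k c^{n-k}$ estimate), but they aggregate over coordinates in different ways: the paper's weighted-majority replacement is arguably cleaner to state and generalizes directly to Theorem~\ref{thm:measures}, while your direct partition-and-sum is more transparent about where the binomial entropy comes from.

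Two points deserve slightly more care than your write-up gives. First, for the grouping bound you need the $B$-parts in a fixed $(y^A,z^A)$-group to be an independent tensor \emph{arising from a zeroing out} of $B^{\otimes(n-|S|)}$, not merely a collection of triples with pairwise-distinct coordinates; you correctly observe the distinctness but must also rule out a ``cross'' term $x^B_i y^B_j z^B_k\in B^{\otimes(n-|S|)}$ with $i,j,k$ not all equal. This does hold: such a cross term would lift (using the pinned $x_1^{\otimes|S|}$ and the fixed $(y^A,z^A)$) to a term of $T^{\otimes n}$ all of whose super-variables survive, hence a term of $D$, which by independence of $D$ forces $i=j=k$. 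Second, the claim that the sum $\sum_k \binom{n}{k}\min\{q^kc^{n-k},(q-1)^{n-k}\}$ is dominated by $k\approx pn$ genuinely uses the hypothesis on $c$: one needs $1/q\le p\le q/(q+c)$ so that the two branches $\binom{n}{k}q^kc^{n-k}$ and $\binom{n}{k}(q-1)^{n-k}$ are respectively nondecreasing on $[0,pn]$ and nonincreasing on $[pn,n]$ (the lower inequality is exactly $c\le (q-1)/q^{1/(q-1)}$, the upper one holds automatically for $c\ge 1$). Note the direction: the hypothesis gives $p\ge 1/q$, not $p\le 1/q$ as your last sentence says. Neither point is a real gap, but both should be spelled out for the argument to be airtight.
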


\begin{remark}
Before we prove Theorem \ref{thm:removeanx}, let us briefly interpret its meaning. Since $B$ has only $q-1$ different $x$-variables, we know that $\tilde{I}(B) \leq q-1$. The theorem tells us that if, in fact, $\tilde{I}(B)$ is mildly smaller than this, then regardless of what terms in $T$ involve $x_1$, we still get a nontrivial upper bound on $\tilde{I}(T)$. One can verify that $p = 1/q$ when $c = (q-1)/q^{1/(q-1)}$, and for every $c$ less than this, $p>1/q$, which gives a resulting bound on $\tilde{I}(T)$ which is strictly less than $q$.
\end{remark}

\begin{proof}[Proof of Theorem \ref{thm:removeanx}]

Let $A := T|_{x_1}$ be the tensor over $\{ x_1 \}, Y, Z$ from zeroing out all the $x$-variables other than $x_1$ in $T$. Hence, $T = A+B$ is a partition of $T$. Moreover, since $A$ only has a single $x$-variable, we have $\tilde{I}(A)=1$.

For any positive integer $n$, let $g_n$ be the largest integer such that $T^{\otimes n}$ has a zeroing out into an independent tensor $D_n$ of size $|D_n| = g_n$.

Set $T' = T^{\otimes n}$ and $D' = D_n$, and then for $j$ from $1$ to $n$ do the following process:

Currently $T' = Q_1 \otimes Q_2 \otimes \cdots \otimes Q_{j-1} \otimes T^{n-j+1}$, and $|D'| \geq q_1 q_2 \cdots q_{j-1} \cdot |D_n|$, and moreover, $D'$ is a zeroing out of $T'$. Since $T=A+B$ is a partitioning of $T$, it must be the case that either at least a $p$ fraction of the independent triples in $D'$ have an entry of $A$ in their $j$th coordinate, or else at least a $1-p$ fraction of the independent triples in $D'$ have an entry of $B$ in their $j$th coordinate. In the former case, set $Q_j = A$ and $q_j = p$, and in the latter case, set $Q_j = B$ and $q_j = 1-p$. Recall that there is a zeroing out $z$ such that $z(T') = D'$. Now, replace the $j$th tensor in the product defining $T'$ by $Q_j$, i.e. set $T' = Q_1 \otimes Q_2 \otimes \cdots \otimes Q_{j} \otimes T^{n-j}$. By our choice of $Q_j$, we know that if we apply the same zeroing out $z$ to the new $T'$, we get at least a $q_j$ fraction of the number of independent triples we had before, i.e. $|z(T')| \geq q_j |D'|$. Let $D'$ be this new independent tensor $z(T')$.

Once we have done this for all $j$, we are left with a tensor $\bigotimes_{j=1}^n Q_j$ which has a zeroing out into $|D| \cdot \prod_{j=1}^n q_j$ independent triples. Suppose that we picked $Q_j = A$ in $k$ of the steps, and hence picked $Q_j = B$ in the remaining $n-k$ of the steps. Hence, we have a zeroing out of $A^{\otimes k} \otimes B^{\otimes n-k}$ into $t := g_n \cdot p^k \cdot (1-p)^{n-k}$ independent triples.

We will now give two different upper bounds on $t$. First, we will count $x$-variables. Since $A$ has only one $x$-variable, and $B$ has at most $q-1$ different $x$-variables, our tensor $A^{\otimes k} \otimes B^{\otimes n-k}$ must have at most $(q-1)^{n-k}$ different $x$-variables. Hence, $t \leq (q-1)^{n-k}$.

Second, we will use our bound $c = \tilde{I}(B)$. This implies that $B^{\otimes (n-k)}$ can zero out into at most $c^{n-k}$ independent triples. Hence, since $A$ has at most $q$ terms, and so $A^{\otimes k}$ has at most $q^k$ terms, we know that $A^{\otimes k} \otimes B^{\otimes (n-k)}$ can zero out into at most $q^k c^{n-k}$ independent triples. In other words, $t \leq q^k c^{n-k}$.

Combining the two upper bounds, we see that
\begin{align*}
   t \leq \min \{ (q-1)^{n-k}, q^k c^{n-k} \}.
\end{align*}
Hence,
\begin{align}
    \label{boundg} g_n = \frac{t}{p^k (1-p)^{n-k}} \leq \min \left\{ \frac{1}{p^k}\left(\frac{q-1}{1-p}\right)^{n-k}, \left( \frac{q}{p} \right)^k \left(\frac{c}{1-p}\right)^{n-k} \right\}.
\end{align}
We can see (by setting the two terms equal and solving for $k$) that the right-hand side of (\ref{boundg}) is maximized when $k = pn$. We therefore get a bound independent of $k$ which must hold no matter what $k$ ends up being:

\begin{align*}
    g_n \leq \frac{1}{p^{pn}}\left(\frac{q-1}{1-p}\right)^{(1-p)n}.
\end{align*}

Thus, $$I(T^{\otimes n}) \leq \left( \frac{1}{p^{p}}\left(\frac{q-1}{1-p}\right)^{1-p} \right)^n,$$
and since this holds for all positive integers $n$, it implies our desired bound.
\end{proof}

We next move on to our second tool. We show that if a tensor $T$ has a large asymptotic independence number, then there must be a way to define a probability distribution on the terms of $T$ such that each variable is assigned approximately the same probability mass.

\begin{theorem} \label{thm:probs}
Suppose $q \geq 2$ is an integer, and $T$ is a tensor over $X,Y,Z$ with $|X| = |Y| = |Z| = q$, and $\delta \geq 0$ is such that $\tilde{I}(T) = q^{1-\delta}$. Then, for every $\kappa > 0$, there is a map $p : X \otimes Y \otimes Z \to [0,1]$ such that:
\begin{itemize}
    \item $\sum_{x_i y_j z_k \in T} p(x_i y_j z_k) = 1$, and
    \item For each fixed $i$, fixed $j$, or fixed $k$, $\sum_{x_i y_j z_k \in T} p(x_i y_j z_k) \geq \frac{1}{q} - \sqrt{(\delta + \kappa) \ln(q)}.$
\end{itemize}
\end{theorem}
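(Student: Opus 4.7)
The plan is to extract a probability distribution on terms of $T$ from a large near-optimal zeroing-out of $T^{\otimes n}$ into an independent tensor, and then to use a standard entropy/Pinsker argument to show that the induced coordinate marginals on $X$, $Y$, and $Z$ are each close to uniform. Concretely, fix an arbitrary small $\eps > 0$ (to be related to $\kappa$ at the end). By the definition of $\tilde{I}(T) = q^{1-\delta}$ as a $\limsup$, there exists an integer $n$ (which we take as large as we like) and a zeroing-out of $T^{\otimes n}$ into an independent tensor $D$ with $|D| \geq q^{(1-\delta-\eps)n}$.

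Next I would define $p$ as the law of the following two-step sampling procedure on terms of $T$: pick a uniformly random term $(\vec x_{\vec i}, \vec y_{\vec j}, \vec z_{\vec k}) \in D$ and then a uniformly random coordinate $\ell \in [n]$, and output the single-coordinate term $(x_{i_\ell}, y_{j_\ell}, z_{k_\ell})$. Because $D$ is a sub-tensor of $T^{\otimes n}$, each such projected triple indeed lies in the support of $T$, so $p$ is a probability distribution on the terms of $T$. The $X$-marginal $f^X_i := \sum_{x_i y_j z_k \in T} p(x_i y_j z_k)$ is then precisely the average over $\ell \in [n]$ of the marginal of the $\ell$-th coordinate of $\vec I$, where $\vec I$ is a uniform random $\vec x$-index in $D$.

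Now comes the entropy step. Since $D$ is independent, its $|D|$ terms use $|D|$ distinct $\vec x$-indices, so $\vec I$ is uniform on a set of size $|D|$ and hence $H(\vec I) = \log |D| \geq (1-\delta-\eps)\, n \log q$. By subadditivity of entropy across coordinates, $\sum_{\ell=1}^n H(I_\ell) \geq (1-\delta-\eps)\, n \log q$, and then by concavity of Shannon entropy the average distribution $f^X$ satisfies $H(f^X) \geq \frac{1}{n}\sum_\ell H(I_\ell) \geq (1-\delta-\eps)\log q$. Thus the KL divergence to uniform on $[q]$ is $D(f^X \| u) = \log q - H(f^X) \leq (\delta+\eps)\log q$. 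Pinsker's inequality (in nats) then gives $\|f^X - u\|_1 \leq \sqrt{2(\delta+\eps)\ln q}$, so in particular $|f^X_i - 1/q| \leq \sqrt{(\delta+\eps)(\ln q)/2}$ for every $i$. Repeating the argument verbatim with $\vec I$ replaced by the uniform random $\vec y$-index $\vec J$ and $\vec z$-index $\vec K$ yields the same bound for $f^Y_j$ and $f^Z_k$. Taking $\eps$ a suitable function of $\kappa$ at the end (for instance $\eps = \kappa/2$, which is more than enough to absorb the constant from Pinsker) gives the stated bound $f^X_i, f^Y_j, f^Z_k \geq 1/q - \sqrt{(\delta+\kappa)\ln q}$.

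The argument is essentially routine once the sampling is set up; the main conceptual point is recognizing that $\tilde{I}(T)$ gives a direct handle on the entropy of the $\vec x$-, $\vec y$-, and $\vec z$-indices used in an independent sub-tensor of $T^{\otimes n}$, and that projecting to a random coordinate turns this global entropy bound into a local near-uniformity statement. The only mild subtlety to watch is that concavity of entropy (applied to coordinate marginals) is what lets the average-over-$\ell$ distribution inherit the lower bound, rather than merely the average of the entropies, which is what Pinsker actually consumes; both steps are standard but should be stated explicitly in the write-up to avoid conflating them.
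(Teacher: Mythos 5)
Your proof is correct, but it takes a genuinely different route from the paper's. The paper proves an intermediate concentration lemma via Hoeffding's inequality: since the zeroing-out of $T^{\otimes n}$ uses a set $S_X$ of $\vec{x}$-indices of size $q^{(1-\delta-\delta')n}$ out of $q^n$ total, all but a vanishing fraction of them must have every $x_i$ appearing in roughly $n/q$ of their coordinates, and the paper then computes the marginal $p(x_i)$ directly from this, taking $n \to \infty$ to kill the error term. You instead note that a uniform random $\vec{x}$-index in $D$ has entropy $\log|D| \geq (1-\delta-\eps)n\log q$, push this through subadditivity and then concavity of entropy to lower-bound $H(f^X)$, and finish with Pinsker to convert a KL bound to the $\ell_1$ (and hence pointwise) bound. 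Both arguments have the same backbone — define $p$ by sampling a random coordinate of a random term of a near-optimal independent sub-tensor — and they yield constants of the same order; the entropy route is arguably cleaner (no need to partition the index set into ``good'' and ``bad'' elements, no $n \to \infty$ limit at the end), while the Hoeffding route is more elementary. One cosmetic remark: you write $\log$ in the entropy identities and $\ln$ in the Pinsker bound; the argument is fine if $\log$ means $\ln$ throughout, but the write-up should fix the base explicitly to avoid a spurious factor of $\log_2 e$.
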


Before proving Theorem \ref{thm:probs}, we first prove a key Lemma:

\begin{lemma} \label{lem:allsame}
For any integers $n \geq 1$ and $q \geq 2$, any real $\delta \geq 0$, and any tensor $T$ over $X,Y,Z$ with $|X|=q$ and $x_1 \in X$, suppose $T^{\otimes n}$ has a zeroing out into an independent tensor $D$ of size $|D| = q^{(1-\delta)n}$. Let $S_X \subseteq X^n$ be the set of all $x$-variables used in terms in $D$, and let $\eps = \sqrt{\delta \ln(q)}$. Then, at least $q^{(1-\delta)n} - q^{(1-2\delta)n}$ of the elements $x \in S_X$ have $x_1$ appear in between $(1/q - \eps)n$ and $(1/q + \eps)n$ of the entries of $x$.
\end{lemma}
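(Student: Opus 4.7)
The plan is to reduce to a standard concentration inequality applied to the entire space $X^n$. Since $D$ is an independent tensor of size $q^{(1-\delta)n}$, each of its terms uses a distinct $x$-variable, so $|S_X| = q^{(1-\delta)n}$. Because $S_X \subseteq X^n$, any upper bound on the number of \emph{atypical} tuples in $X^n$---those $n$-tuples in which $x_1$ appears in a number of coordinates outside the range $[(1/q - \eps)n,\, (1/q + \eps)n]$---is also an upper bound on the number of atypical elements of $S_X$. Subtracting this from $|S_X|$ will yield the stated lower bound on typical elements of $S_X$.

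The next step is a direct combinatorial count. The number of tuples in $X^n = [q]^n$ in which $x_1$ appears in exactly $k$ positions is $\binom{n}{k}(q-1)^{n-k}$, so summing over all $k$ with $|k/n - 1/q| > \eps$ equals $q^n$ times the probability that a $\mathrm{Binomial}(n, 1/q)$ random variable deviates from its mean $n/q$ by more than $\eps n$. A Chernoff/Hoeffding-style tail bound gives that this probability is at most $e^{-2\eps^2 n}$ per tail, and substituting $\eps = \sqrt{\delta \ln q}$ yields a bound of the form $q^{-2\delta n}$. Multiplying back by $q^n$ shows that the number of atypical tuples in $X^n$---and therefore also in $S_X$---is at most $q^{(1-2\delta)n}$.

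The main obstacle is matching the constant exactly: a naive application of Hoeffding gives a factor-of-two overhead from the two tails, so to recover the clean bound $q^{(1-2\delta)n}$ one should work with the sharper Chernoff bound $\Pr[|X/n - 1/q| > \eps] \leq e^{-n D(1/q \pm \eps \| 1/q)}$ and verify via a Taylor expansion of the KL divergence around $1/q$ that $D(1/q \pm \eps \| 1/q) \geq \delta \ln q$ for the chosen $\eps$. I expect this to be the only real technical point; the structural content of the lemma is simply that, by concentration of measure, only a negligible fraction of tuples in $[q]^n$ have $x_1$ appearing a skewed number of times, and since $|S_X|$ dominates this negligible count by a factor of $q^{\delta n}$, almost all of $S_X$ must lie among the balanced tuples.
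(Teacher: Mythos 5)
Your approach is the same as the paper's: both count the atypical tuples in all of $X^n$ via the binomial identity $\binom{n}{i}(q-1)^{n-i}$, rewrite that count as $q^n \cdot \Pr[|\mathrm{Bin}(n,1/q)-n/q| \geq \eps n]$, and apply Hoeffding, then use $S_X \subseteq X^n$ to transfer the bound. The factor-of-two issue you flag is actually glossed over in the paper too (it simply reports the two-tailed Hoeffding bound as $q^{-2\delta n}$ without the leading $2$); this looseness is harmless in the only downstream application (Theorem~\ref{thm:probs}), which merely needs the atypical count to be an exponentially small fraction of $|S_X|$ as $n \to \infty$, so the Chernoff--KL refinement you propose is not required (and, being a sum of two one-sided tails, would not cleanly remove the constant either).
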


\begin{proof}
Notice that the number of different $n$-tuples of variables of $X$ which contain $x_1$ exactly $i$ times is $\binom{n}{i} \cdot (q-1)^{n-i}$. Hence, the number of elements $x \in X^n$ which do not have $x_1$ appear in between $\frac{1-\eps}{q}n$ and $\frac{1+\eps}{q}n$ of the entries of $x$ is
\begin{align}\label{cherbound}\sum_{i =0}^{\frac{1-\eps}{q}n} \binom{n}{i} (q-1)^{n-i}    +   \sum_{i =\frac{1+\eps}{q}n}^{n} \binom{n}{i} (q-1)^{n-i} .\end{align}

We will bound the sum (\ref{cherbound}) using Hoeffding's inequality\footnote{Hoeffding's inequality states that if $X_1, \ldots, X_n$ are independent random variables taking on values in $[0,1]$, then for any $t \in [0,1]$, we have $\Pr[\sum_{i=1}^n X_i - \E[\sum_{i=1}^n X_i] \geq tn] \leq e^{-2nt^2}$.}. Let $A_1, \ldots, A_n$ be $n$ independent random variables taking on the value $1$ with probability $1/q$ and $0$ otherwise, and let $A = \sum_{i=1}^n A_i$. We can see that (\ref{cherbound}) is equal to $q^n \cdot \Pr[|A - n/q| \geq \eps n]$. By Hoeffding's inequality, if we pick $\eps = \sqrt{\delta \ln(q)}$, then $\Pr[|A - n/q| \geq \eps n] \leq q^{-2 \delta n}$. Thus, (\ref{cherbound}) is at most $q^n \cdot q^{-2 \delta n} = q^{(1-2\delta)n}$, and the result follows.
\end{proof}

\begin{proof}[Proof of Theorem \ref{thm:probs}]
Suppose $\tilde{I}(T) = q^{1-\delta}$, and let $\delta' = \kappa/2 > 0$. Thus, there is a positive integer $N$ such that for all $n \geq N$, the tensor $T^{\otimes n}$ has a zeroing out into an independent tensor $D$ of size $|D| = q^{n(1-\delta-\delta')}$.

Each term in $T^{\otimes n}$, and hence in $D$, corresponds to an $n$-tuple of terms from $T$. We thus define a probability distribution $p : X \otimes Y \otimes Z \to [0,1]$ as follows: draw a uniformly random $\alpha \in \{1, \ldots, n\}$, then draw a uniformly random one of the $|D|$ independent triples from $D$ and return its entry in the $\alpha$th coordinate. Since this random process always returns a term from $T$, we have $\sum_{x_i y_j z_k \in T} p(x_i y_j z_k) = 1$.

Now, pick any fixed $i$ and consider the sum $p(x_i) := \sum_{x_i y_j z_k \in T} p(x_i y_j z_k)$. Let $S_X \subseteq X^n$ be the set of all $X$-variables used in terms of $D$, so $|S_X|=|D|=q^{n(1-\delta-\delta')}$. Then, $p(x_i)$ can be alternatively characterized as the probability, upon drawing a random $\alpha \in \{1,2,\ldots,n\}$ and random $X_s \in S_X$, that the $\alpha$th coordinate of $X_s$ is $x_i$. By Lemma~\ref{lem:allsame}, setting $\eps = \sqrt{(\delta+\delta')\ln(q)}$, we know that for all but $q^{n(1-2\delta-2\delta')}$ of the $X_s \in S_X$, the variable $x_i$ appears in between $(1/q - \eps)n$ and $(1/q + \eps)n$ of the entries of $X_s$. Hence,

$$p(x_i) \geq \frac{(1/q - \eps)n \cdot (q^{n(1-\delta-\delta')} - q^{n(1-2\delta-2\delta')})}{n \cdot q^{n(1-\delta-\delta')}} = (1/q - \eps)(1 - q^{-n(\delta + \delta')}).$$

By a symmetric argument, this same lower bound holds for all of the variables in $X, Y,$ and $Z$. Notice that as $n \to \infty$, the lower bound approaches $(1/q - \eps)$, and $(1/q - \eps) > 1/q - \sqrt{(\delta + \kappa)\ln(q)}$. We can thus pick a sufficiently large $n$ so that the resulting probability distribution has all the desired properties.
\end{proof}

For one simple but interesting Corollary, we will show that in any tensor $T$ which has two `corner terms' (see the Corollary statement for the precise meaning; we will see later that many important tensors have these corner terms), then no matter what the remainder of $T$ looks like, $T$ still does not have too large of an asymptotic independence number.

\begin{corollary} \label{cor:corners}
Suppose $q \geq 2$ is an integer, and $T$ is a tensor over $X,Y,Z$ with $|X| = |Y| = |Z| = q$, such that $x_1, x_q \in X$, $y_1, y_q \in Y$, $z_1 \in Z$, and $T$ contains the triples $x_q y_1 z_1$ and $x_1 y_q z_1$, and neither $x_q$ nor $y_q$ appears in any other triples in $T$. Then, there is a constant $c_q < q$ depending only on $q$ such that $\tilde{I}(T) \leq c_q$.
\end{corollary}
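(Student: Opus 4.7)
The plan is to prove this corollary as a fairly direct application of Theorem~\ref{thm:probs}. Suppose for contradiction that $\tilde{I}(T)$ is too close to $q$, so write $\tilde{I}(T)=q^{1-\delta}$ for some small $\delta\geq 0$ which we would like to force to be at least some positive constant $\delta_0(q)$. Apply Theorem~\ref{thm:probs} with some fixed small $\kappa>0$ to obtain a probability distribution $p$ on the terms of $T$ such that every $x$-, $y$-, and $z$-variable receives total mass at least $\frac{1}{q}-\varepsilon$, where $\varepsilon:=\sqrt{(\delta+\kappa)\ln q}$. Because the masses over $X$ (resp. $Y$, $Z$) sum to exactly $1$ and there are $q$ of each type of variable, the same bound forces each variable to receive mass at most $\frac{1}{q}+(q-1)\varepsilon$.

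Now exploit the corner terms. Since $x_q$ appears only in the single triple $x_q y_1 z_1$, the lower bound on the mass of $x_q$ gives $p(x_q y_1 z_1)\geq \frac{1}{q}-\varepsilon$, and analogously $p(x_1 y_q z_1)\geq \frac{1}{q}-\varepsilon$. Both of these triples contain $z_1$, so the mass assigned to $z_1$ satisfies
\[
p(z_1)\;\geq\; p(x_q y_1 z_1)+p(x_1 y_q z_1)\;\geq\;\tfrac{2}{q}-2\varepsilon.
\]
On the other hand, the upper bound above says $p(z_1)\leq \frac{1}{q}+(q-1)\varepsilon$. Combining these forces
\[
\tfrac{2}{q}-2\varepsilon\;\leq\;\tfrac{1}{q}+(q-1)\varepsilon,\qquad\text{i.e.}\qquad \varepsilon\;\geq\;\tfrac{1}{q(q+1)}.
\]

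Unpacking the definition of $\varepsilon$, this yields $(\delta+\kappa)\ln q\geq \frac{1}{q^2(q+1)^2}$. Since $\kappa>0$ was arbitrary, we may take $\kappa\to 0$ and conclude
\[
\delta\;\geq\;\delta_0(q)\;:=\;\frac{1}{q^2(q+1)^2\ln q}\;>\;0,
\]
so $\tilde{I}(T)\leq q^{1-\delta_0(q)}=:c_q<q$, with $c_q$ depending only on $q$. I do not expect any real obstacle here: the whole argument is a two-line pigeonhole once Theorem~\ref{thm:probs} is in hand; the only mild subtlety is carefully pairing the lower bounds (from the theorem) with the matching upper bounds (from summing to $1$) to convert ``close to $1/q$'' into a genuine contradiction from the overlap at $z_1$. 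The resulting $c_q$ is close to $q$ (since $\delta_0(q)$ is tiny), but that is fine — all that matters is $c_q<q$.
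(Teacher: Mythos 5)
Your proof is correct and takes essentially the same route as the paper's: invoke Theorem~\ref{thm:probs}, use the two corner terms to lower-bound $p(z_1)$ by $2/q - 2\sqrt{(\delta+\kappa)\ln q}$, upper-bound $p(z_1)$ by $1/q + (q-1)\sqrt{(\delta+\kappa)\ln q}$, combine to get $\sqrt{(\delta+\kappa)\ln q}\geq 1/(q(q+1))$, and let $\kappa\to 0$. The paper's proof arrives at the identical inequality (it writes the upper bound as $1-(q-1)(1/q-\varepsilon)$, which equals your $1/q+(q-1)\varepsilon$), so the arguments match exactly.
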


\begin{proof}
Suppose $\tilde{I}(T) = q^{1-\delta}$, and for any $\kappa>0$, let $p$ be the probability distribution on the terms of $T$ which is guaranteed by Theorem~\ref{thm:probs}. For any fixed $i$, define $p(x_i) := \sum_{x_i y_j z_k \in T} p(x_i y_j z_k)$, and define $p(y_j)$ and $p(z_k)$ similarly. Since $x_q y_1 z_1$ and $x_1 y_q z_1$ are the only terms containing $x_q$ or $y_1$, and they each contain $z_1$, it follows that $p(z_1) \geq p(x_q) + p(y_q)$. 

However, we know that $p(x_q), p(y_q) \geq 1/q  - \sqrt{(\delta+\kappa)\ln(q)}$, and so $p(z_1)  \geq 2/q  - 2\sqrt{(\delta+\kappa)\ln(q)}$. Similarly, applying the lower bound on $p(z_i)$ for all $i>1$, we see that $p(z_1) \leq 1 - (q-1)(1/q  - \sqrt{(\delta+\kappa)\ln(q)})$. Combining the two bounds shows that
$$2/q  - 2\sqrt{(\delta+\kappa)\ln(q)} \leq 1 - (q-1)(1/q  - \sqrt{(\delta+\kappa)\ln(q)}) ,$$
and hence, rearranging,
$$1/q   \leq (q+1) \sqrt{(\delta+\kappa)\ln(q)}$$
$$\left( \frac{1}{q(q+1)\sqrt{\ln(q)}} \right)^2 - \kappa \leq \delta.$$
Since this holds for all $\kappa>0$, it implies a lower bound on $\delta$ in terms of $q$ as desired.
\end{proof}

Finally, we move on to our third partitioning tool. This third tool generalizes the fact that if $T$ is a tensor over $X,Y,Z$, then $\tilde{I}(T) \leq \min\{|X|, |Y|, |Z| \}$, i.e. $\tilde{I}(T)$ must be small if $T$ does not have many of one type of variable. We will show that, even if $T$ can be partitioned into tensors which each do not have many of one type of variable, then $\tilde{I}(T)$ must still be small. We will formalize this idea by introducing the notion of the measure of a tensor:

\begin{definition}
Let $T$ be a tensor over $X, Y, Z$.
We say that $X'\subseteq X$, $Y'\subseteq Y$, $Z'\subseteq Z$ are {\em minimal} for $T$ if $X'$ is the minimal (by inclusion) subset of $X$ such that for each $x_i\in X\setminus X'$, for all $j,k$, $T_{i,j,k}=0$, and similarly, $Y'$ is the minimal subset of $Y$ such that for each $y_j\in Y\setminus Y'$, for all $i,k$, $T_{i,j,k}=0$ and $Z'$ is the minimal subset of $Z$ such that for each $z_k\in Z\setminus Z'$, for all $i,j$, $T_{i,j,k}=0$.

If $T$ is a tensor, then the \emph{measure of $T$}, denoted $\mu(T)$, is given by $\mu(T) := |X'| \cdot |Y'| \cdot |Z'|$, where $X', Y', Z'$ are minimal for $T$.
\end{definition}

\begin{claim} \label{lem:measureone}
For any tensor $T$, we have $\tilde{I}(T) \leq \mu(T)^{1/3}$.
\end{claim}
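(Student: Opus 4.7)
The plan is to combine the trivial per-variable bound on the independence number with a simple application of AM--GM, applied at every tensor power.

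First I would observe that if $X', Y', Z'$ are minimal for $T$, then for every positive integer $n$, the sets $(X')^n$, $(Y')^n$, $(Z')^n$ are minimal for $T^{\otimes n}$: zeroing out a variable of $T$ zeroes out every $n$-tuple of variables of $T^{\otimes n}$ in which it appears, and the variables of $T^{\otimes n}$ are (by definition of Kronecker product) precisely $n$-tuples of variables of $T$. Since any zeroing out of $T^{\otimes n}$ to an independent tensor $\langle r \rangle$ needs $r$ distinct $x$-variables, $r$ distinct $y$-variables, and $r$ distinct $z$-variables, and since it can only delete variables (never introduce new ones), we immediately get
\[
I(T^{\otimes n}) \;\leq\; \min\{|X'|^n,\;|Y'|^n,\;|Z'|^n\}.
\]

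Next I would take $n$th roots and a $\limsup$ in the definition of $\tilde I$ to conclude
\[
\tilde{I}(T) \;\leq\; \min\{|X'|,\;|Y'|,\;|Z'|\}.
\]
Finally, for any positive reals $a,b,c$ we have $\min\{a,b,c\}^3 \leq abc$, so $\min\{|X'|,|Y'|,|Z'|\} \leq (|X'|\cdot|Y'|\cdot|Z'|)^{1/3} = \mu(T)^{1/3}$, which gives the claim.

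There is no real obstacle here; the only thing to be careful about is using \emph{minimal} $X',Y',Z'$ rather than the nominal index sets, since a tensor can formally be written over arbitrarily large variable sets (most of which contribute no terms) and that would make both $\min$ and $\mu$ meaningless. The minimality in the definition guarantees that $|X'|,|Y'|,|Z'|$ really count only variables that appear in the support of $T$, which is what the zeroing-out argument uses. This lemma is really a warm-up for Theorem~\ref{thm:measures}, where the same bookkeeping is applied piece-by-piece to a partition of $T$ and the exponent $1/3$ turns into $2/3$ after using $\tilde I(A \oplus B) \leq \tilde I(A)+\tilde I(B)$-type estimates together with AM--GM across the parts.
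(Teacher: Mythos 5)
Your proof is correct and takes essentially the same route as the paper: reduce to the bound $\tilde{I}(T) \leq \min\{|X'|,|Y'|,|Z'|\}$ (which the paper had already recorded in the preliminaries) and then apply AM--GM. One small slip in your closing aside: in Theorem~\ref{thm:measures} the exponent remains $1/3$, not $2/3$.
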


\begin{proof}
Suppose $X, Y, Z$ are minimal for $T$. Hence, $$\tilde{I}(T) \leq \min(|X|, |Y|, |Z|) \leq (|X| \cdot |Y| \cdot |Z|)^{1/3} = \mu(T)^{1/3}.$$
\end{proof}

For our main tool, we can generalize this to partitioned tensors:

\begin{theorem} \label{thm:measures}
Suppose $T$ is a tensor which is partitioned into $k$ parts $T = P_1 + P_2 + \cdots + P_k$ for any positive integer $k$. Then, $\tilde{I}(T) \leq \sum_{i=1}^k (\mu(P_i))^{1/3}$.
\end{theorem}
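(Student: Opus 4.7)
The plan is to exploit the natural partition of $T^{\otimes n}$ induced by the given partition $T = P_1 + \cdots + P_k$, and to bound each piece directly with Claim \ref{lem:measureone}. The key realization is that although $\tilde{I}$ itself is not additive, the measure-based upper bound $\tilde{I}(\cdot) \leq \mu(\cdot)^{1/3}$ tensorizes cleanly across tensor products, so we get an additive bound after summing.

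First, I will observe that since $P_1, \ldots, P_k$ partition the support of $T$, every term of $T^{\otimes n}$ lies in a unique piece of the form $P_{i_1} \otimes P_{i_2} \otimes \cdots \otimes P_{i_n}$ for some sequence $\vec{i} = (i_1, \ldots, i_n) \in [k]^n$, so
$$T^{\otimes n} \;=\; \sum_{\vec{i} \in [k]^n} P_{i_1} \otimes P_{i_2} \otimes \cdots \otimes P_{i_n}$$
is a partition of $T^{\otimes n}$ into $k^n$ pieces.

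Next, let $D$ be any independent sub-tensor obtained from a zeroing out of $T^{\otimes n}$, and for each $\vec{i} \in [k]^n$ let $D^{\vec{i}}$ denote the restriction of $D$ to the triples in the $\vec{i}$-th piece above. Since $D^{\vec{i}}$ is a sub-tensor of the independent tensor $D$, it is itself independent, and $|D| = \sum_{\vec{i}} |D^{\vec{i}}|$. If $X_i, Y_i, Z_i$ are minimal for $P_i$, then every $x$-variable appearing in $D^{\vec{i}}$ lies in $X_{i_1} \times \cdots \times X_{i_n}$, and similarly for $y$- and $z$-variables, so $\mu(D^{\vec{i}}) \leq \prod_{j=1}^n |X_{i_j}| \cdot |Y_{i_j}| \cdot |Z_{i_j}| = \prod_{j=1}^n \mu(P_{i_j})$. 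Applying Claim \ref{lem:measureone} to the independent tensor $D^{\vec{i}}$ gives $|D^{\vec{i}}| \leq \prod_j \mu(P_{i_j})^{1/3}$.

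Summing over $\vec{i}$ factors as a product:
$$I(T^{\otimes n}) \;\leq\; \sum_{\vec{i} \in [k]^n} \prod_{j=1}^n \mu(P_{i_j})^{1/3} \;=\; \left(\sum_{i=1}^k \mu(P_i)^{1/3}\right)^n,$$
and taking $n$-th roots and passing to the $\limsup$ gives $\tilde{I}(T) \leq \sum_{i=1}^k \mu(P_i)^{1/3}$. There is no serious obstacle here; the proof is noticeably simpler than Theorems \ref{thm:removeanx} and \ref{thm:probs} precisely because the measure bound tensorizes exactly, eliminating the need for any averaging or worst-case-sequence argument over coordinates. The only point that must be verified carefully is that the decomposition of $T^{\otimes n}$ into the $k^n$ pieces is indeed a set-partition of its support — which is immediate from the assumption that $P_1, \ldots, P_k$ partition the support of $T$.
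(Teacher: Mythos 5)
Your proof is correct, and it takes a genuinely different and noticeably cleaner route than the paper. The paper's proof runs a coordinate-by-coordinate greedy argument: it defines probabilities $p_i \propto \mu(P_i)^{1/3}$, iterates over the $n$ coordinates, and at each coordinate $j$ replaces the $j$th factor of $T^{\otimes n}$ with whichever $P_i$ accounts for at least a $p_i$ fraction of the surviving independent triples, eventually producing a single product $\bigotimes_j Q_j$ (with $Q_j \in \{P_1,\ldots,P_k\}$) whose measure bound, combined with the tracked survival fractions, gives $|D_n| \leq s^n$. You instead observe that the partition of $T$ induces a set-partition of the support of $T^{\otimes n}$ into the $k^n$ pieces $\bigotimes_j P_{i_j}$, restrict any independent zeroing-out $D$ to each piece, and apply the measure bound to each restriction; since $\mu$ is multiplicative over tensor products and you are adding the $|D^{\vec i}|$ rather than maximizing, the multinomial theorem collapses the sum $\sum_{\vec i}\prod_j \mu(P_{i_j})^{1/3}$ exactly to $s^n$. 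Both yield the identical bound $I(T^{\otimes n}) \leq s^n$. The key difference is that your approach relies on the fact that the per-piece bound is a fixed multiplicative quantity, so summing over all $k^n$ pieces costs nothing; in Theorem~\ref{thm:removeanx}, by contrast, the per-piece bound is a minimum of two incomparable quantities, and summing over all $\binom{n}{k}$ pieces with a given type would introduce entropy-like losses that the greedy/averaging argument avoids, which is presumably why the paper adopts the greedy template uniformly across Theorems~\ref{thm:removeanx} and \ref{thm:measures}. For this theorem in isolation, your argument is the simpler one and is a valid proof.
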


\begin{proof}
Let $s := \sum_{i=1}^k (\mu(P_i))^{1/3}$, and for each $i \in \{1,2,\ldots,k\}$, let $p_i := (\mu(P_i))^{1/3} / s$, so that $p_i \in [0,1]$ and $\sum_{i=1}^k p_i = 1$. For any positive integer $n$, let $D_n$ be the biggest independent tensor which can result from a zeroing out of $T^{\otimes n}$, and let $z$ be the zeroing out from $T^{\otimes n}$ to $D_n$.

Set $T' = T^{\otimes n}$, and $D' = D_n$, and then for $j$ from $1$ to $n$ do the following process:

Currently $T' = Q_1 \otimes Q_2 \otimes \cdots \otimes Q_{j-1} \otimes T^{n-j+1}$, and $|D'| \geq q_1 q_2 \cdots q_{j-1} \cdot |D_n|$, and moreover, $D'$ is a zeroing out of $T'$. Pick an $i$ such that at least a $p_i$ fraction of the independent triples in $D'$ have an entry of $P_i$ in their $j$th coordinate; since $\sum_\ell p_\ell=1$, such an $i$ exists. Set $Q_j = P_i$ and $q_j = p_i$. Recall that there is a zeroing out $z$ such that $z(T') = D'$. Now, replace the $j$th tensor in the product defining $T'$ by $Q_j$, i.e. set $T' = Q_1 \otimes Q_2 \otimes \cdots \otimes Q_{j} \otimes T^{n-j}$. By our choice of $Q_j$, we know that if we apply the same zeroing out $z$ to the new $T'$, we get at least a $q_j$ fraction of the number of independent triples we had before, i.e. $|z(T')| \geq q_j |D'|$. Let $D'$ be this new independent tensor $z(T')$.

Once we have done this for all $j$, we are left with a tensor $\bigotimes_{j=1}^n Q_j$ which has a zeroing out into $|D_n| \cdot \prod_{j=1}^n q_j$ independent triples. Note that measure is multiplicative, and so in particular, $\mu(\bigotimes_{j=1}^n Q_j) = \prod_{j=1}^n \mu(Q_j)$. Hence, by Claim~\ref{lem:measureone}, $$\tilde{I}(\bigotimes_{j=1}^n Q_j) \leq \prod_{j=1}^n \mu(Q_j)^{1/3} = \prod_{j=1}^n (s \cdot q_j) = s^n \cdot \prod_{j=1}^n q_j.$$
Since $D'$ is a zeroing out of $\bigotimes_{j=1}^n Q_j$, it follows that $|D'| \leq s^n \cdot \prod_{j=1}^n q_j$. But, $|D'| \geq |D_n| \cdot \prod_{j=1}^n q_j$. Combining the two, we get that $|D_n| \leq s^n$, as desired.
\end{proof}

\section{Lower Bounds for Group Tensors} \label{sec:grouptools}
In contrast with the previous section, in this section we will show a `global' property of tensors $T$ which imply upper bounds on $\tilde{I}(T)$ (and hence lower bounds on $\omega_g(T)$). In particular, we will see that if $T$ is the group tensor of any finite group $G$, or a monomial degeneration of any such group tensor with the same measure, then $\tilde{I}(T) < \tilde{R}(T)$ and so $\omega_g(T) > 2$. We begin with the main connection between group tensors and independent tensors; this was essentially proved in \cite[Theorem~6.1]{almanitcs}, but we reprove it here for completeness:

\begin{lemma} \label{lem:indtotcsfs}
For any finite group $G$, if $T_G$ has a zeroing out into an independent tensor $D$, then $G$ has a tri-colored sum-free set of size $|D|$.
\end{lemma}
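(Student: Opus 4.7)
The plan is to directly translate the combinatorial data of the zeroing out into a set $S \subseteq G^3$ and verify the two defining conditions of a tri-colored sum-free set. A zeroing out of $T_G$ is specified by subsets $A, B, C \subseteq G$ indicating which $x$-, $y$-, $z$-variables survive; the resulting tensor is then $\sum_{g \in A,\, h \in B,\, gh \in C} x_g y_h z_{gh}$. The first step is to unpack what it means for this tensor to be an independent tensor $D$ of size $r$: since no $x$-, $y$-, or $z$-variable can appear in more than one term of an independent tensor, the support of $D$ must be a set of $r$ triples $\{(g_i, h_i, g_i h_i)\}_{i=1}^{r}$ in which the $g_i$ are pairwise distinct, the $h_i$ are pairwise distinct, and the products $c_i := g_i h_i$ are pairwise distinct.

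Next I would take $S := \{(g_i, h_i, c_i) : 1 \le i \le r\}$, so that $|S| = |D|$ and the first axiom of a tri-colored sum-free set ($ab = c$ for every $(a,b,c) \in S$) holds by construction. The heart of the argument is the second axiom: for indices $i, j, k \in \{1, \ldots, r\}$ that are not all equal, one must show $g_i h_j \neq c_k$. The key observation is that if $g_i h_j = c_k$ for some $i \neq j$, then since $g_i \in A$, $h_j \in B$, and $c_k \in C$, the term $x_{g_i} y_{h_j} z_{g_i h_j}$ would survive the zeroing out; but by distinctness of the $g_\ell$'s and $h_\ell$'s, the pair $(g_i, h_j)$ is not one of the matched pairs $(g_\ell, h_\ell)$ when $i \neq j$, so this would be an additional term of $D$ beyond the $r$ guaranteed by independence, a contradiction.

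The remaining sub-case is $i = j$, in which $g_i h_j = c_i$; the hypothesis that $i, j, k$ are not all equal then forces $k \neq i$, and so $c_i \neq c_k$ by distinctness of the third coordinates. Together these cover the second axiom, so $S$ is the desired tri-colored sum-free set of size $|D|$.

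I do not expect a genuine obstacle here; the proof is essentially a dictionary between the zero-out data for $T_G$ and triples in $G^3$. The only bookkeeping that requires care is (i) restricting $A$, $B$, and $C$ to those elements that actually appear in terms of $D$ so that the support is exactly $\{(g_i, h_i, c_i)\}$ and one can read off distinctness of all three coordinates, and (ii) handling the `not all the same triple' clause in the tri-colored sum-free definition, which permits two of the three chosen triples to coincide and therefore forces one to separate the $i \neq j$ and $i = j$ cases above.
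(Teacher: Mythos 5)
Your proof is correct and takes essentially the same approach as the paper: build $S$ from the surviving terms of $D$, note that $ab=c$ is inherited from $T_G$, and observe that any cross term $x_{a_1}y_{b_2}z_{c_3}$ with all three variables surviving the zeroing out would itself lie in $D$, forcing the chosen triples to coincide by independence. The paper phrases the contradiction uniformly (without splitting into the $i\neq j$ and $i=j$ cases), but the content is identical.
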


\begin{proof}
Let $S := \{ (a,b,c) \in G^3 \mid x_a y_b z_c \in D\}$. We will show that $S$ is a tri-colored sum-free set in $G$. First, recall that every $x_a y_b z_c \in T_G$ has $ab=c$, and $D \subseteq T_G$, and so every $(a,b,c) \in S$ has $ab=c$ as well. Second, assume to the contrary that there are $(a_1, b_1, c_1), (a_2, b_2, c_2), (a_3, b_3, c_3) \in S$, not all the same triple, such that $a_1 b_2 = c_3$. This means that none of $x_{a_1}, y_{b_2}$, or $z_{c_3}$ were zeroed out to get from $T_G$ to $D$. But, $x_{a_1} y_{b_2} z_{c_3} \in T_G$, and so we must have $x_{a_1} y_{b_2} z_{c_3} \in D$. Since $D$ is independent, this means that $x_{a_1} y_{b_1} z_{c_1}, x_{a_2} y_{b_2} z_{c_2},$ and $x_{a_3} y_{b_3} z_{c_3}$ must all be the same triple, contradicting how we picked them.
\end{proof}

We can use this to give our main group-theoretic tool for proving lower bounds on $\omega_g$:

\begin{corollary} \label{cor:groupdegen}
For any tensor $T$ and any nontrivial finite group $G$ such that there is a monomial degeneration from $T_G$ into $T$, we have $\tilde{I}(T) < |G|$.
\end{corollary}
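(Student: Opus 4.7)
The plan is to reduce everything to a bound on $\tilde{I}(T_G)$ and then invoke Sawin's bound on tri-colored sum-free sets via Lemma~\ref{lem:indtotcsfs}. Since monomial degenerations do not increase the asymptotic independence number, Corollary~\ref{lem:IAltIB} gives us $\tilde{I}(T) \leq \tilde{I}(T_G)$ for free, so it suffices to prove that $\tilde{I}(T_G) < |G|$ (in fact, $\tilde{I}(T_G) \leq \delta_G \cdot |G|$ for some constant $\delta_G < 1$ depending only on $G$).

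The next step is to observe that group tensors are multiplicative: $T_G^{\otimes n} \cong T_{G^n}$, where $G^n$ denotes the $n$-fold direct product. This is essentially immediate from the definition, since a typical term of the Kronecker product, $(x_{g_1}\cdots x_{g_n})(y_{h_1}\cdots y_{h_n})(z_{g_1 h_1}\cdots z_{g_n h_n})$, matches exactly the product rule $(g_1,\ldots,g_n)\cdot(h_1,\ldots,h_n) = (g_1 h_1,\ldots,g_n h_n)$ in $G^n$. Consequently, $I(T_G^{\otimes n}) = I(T_{G^n})$ for every $n$.

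Now I would apply Lemma~\ref{lem:indtotcsfs} to the group $G^n$: any zeroing out of $T_{G^n}$ into an independent tensor $D$ produces a tri-colored sum-free set in $G^n$ of size $|D|$, so
\[
I(T_G^{\otimes n}) \;=\; I(T_{G^n}) \;\leq\; \max\{|S| : S \text{ a tri-colored sum-free set in } G^n\}.
\]
By Sawin's Theorem~\ref{thm:multmatch} applied to the nontrivial finite group $G$, there is a constant $\delta_G < 1$ such that every tri-colored sum-free set in $G^n$ has size at most $(\delta_G |G|)^n$. Hence $I(T_G^{\otimes n}) \leq (\delta_G |G|)^n$ for every $n$, and taking the $\limsup$ of $n$th roots gives $\tilde{I}(T_G) \leq \delta_G |G|$. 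Combined with the opening reduction, $\tilde{I}(T) \leq \tilde{I}(T_G) \leq \delta_G |G| < |G|$, as required.

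The argument is essentially a concatenation of prior results, so I do not expect any real obstacle. The only step that warrants a sentence of justification is the multiplicativity $T_G^{\otimes n} \cong T_{G^n}$, which is standard but used crucially in order to feed Sawin's bound (which is about powers $G^n$, not about $G$ itself) into the chain; without this identification, one would only obtain a bound for $n=1$, which need not even be strict.
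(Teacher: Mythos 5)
Your proof is correct and follows the same route as the paper: reduce to $\tilde{I}(T_G)$ via Corollary~\ref{lem:IAltIB}, then pass through Lemma~\ref{lem:indtotcsfs} and Sawin's Theorem~\ref{thm:multmatch}. The one spot you elaborate on—making explicit the identification $T_G^{\otimes n} \cong T_{G^n}$ needed to apply Lemma~\ref{lem:indtotcsfs} at the $n$th power—is left implicit in the paper, and spelling it out is a genuine (if minor) improvement in rigor.
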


\begin{proof}
Since $T$ is a monomial degeneration of $T_G$, by Lemma~\ref{lem:IAltIB} we have $\tilde{I}(T) \leq \tilde{I}(T_G)$. Letting $\delta < 1$ be the constant from Theorem \ref{thm:multmatch} for $G$, we know that for any positive integer $n$, any tri-colored sum-free set in $G^n$ has size at most $(\delta |G|)^n$. Hence, by Lemma \ref{lem:indtotcsfs}, we have $I(T_G^{\otimes n}) \leq (\delta |G|)^n$. It follows by definition that $\tilde{I}(T_G) \leq \delta |G| < |G|$, as desired.
\end{proof}

\begin{theorem} \label{thm:groupomeganot2}
For any finite group $G$, we have $\omega_g(T_G) > 2$.
\end{theorem}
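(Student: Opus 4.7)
The plan is to combine three ingredients that are essentially already in place: the group-theoretic upper bound on $\tilde I$ from Corollary~\ref{cor:groupdegen}, the conciseness lower bound on $\tilde R$, and the framework connecting $\tilde I$ to $\omega_g$ from Corollary~\ref{cor:omegaandi}. The argument is short once these are lined up correctly.

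First I would apply Corollary~\ref{cor:groupdegen} in the trivial way, taking $T = T_G$ itself, since the identity is a (trivial) monomial degeneration of $T_G$. This immediately gives the strict inequality $\tilde{I}(T_G) < |G|$, which is the only place Sawin's tri-colored sum-free set bound (Theorem~\ref{thm:multmatch}) enters the proof.

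Next I would observe that $T_G$ is concise: the linear maps $T_X, T_Y, T_Z$ are all injective because each group element of $G$ appears in an $x$-, $y$-, and $z$-slice with a distinct nonzero pattern. Conciseness of $T_G$ implies conciseness of $T_G^{\otimes n}$ for every $n$, so $R(T_G^{\otimes n}) \geq \bar{R}(T_G^{\otimes n}) \geq |G|^n$ for all $n$, and hence $\tilde{R}(T_G) \geq |G|$. Combining this with the previous step yields $\tilde{I}(T_G) < |G| \leq \tilde{R}(T_G)$, so in particular $\tilde{I}(T_G) \neq \tilde{R}(T_G)$.

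Finally, I would invoke the contrapositive of the first bullet of Corollary~\ref{cor:omegaandi}: if $\omega_g(T_G) = 2$, then $\tilde{I}(T_G) = \tilde{R}(T_G)$. Since we just ruled out equality, we conclude $\omega_g(T_G) \neq 2$. But $\omega_g(T) \geq 2$ holds for every tensor $T$ by definition, so $\omega_g(T_G) > 2$, completing the argument. There is no real obstacle here: all the work is hidden in Corollary~\ref{cor:groupdegen} (which packages Sawin's theorem) and in Corollary~\ref{cor:omegaandi} (which packages the independent-tensor framework developed in Section~\ref{sec:independent}); the present theorem is just the clean statement that emerges when both are applied to the same tensor.
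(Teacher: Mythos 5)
Your proof is correct and takes essentially the same route as the paper: apply Corollary~\ref{cor:groupdegen} with the trivial monomial degeneration of $T_G$ to itself, then invoke Corollary~\ref{cor:omegaandi}. The paper states this in one line and leaves the conciseness argument for $\tilde{R}(T_G)\geq |G|$ implicit; you make that step explicit, which is a reasonable thing to do, but it is not a different approach.
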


\begin{proof}
There is trivially a monomial degeneration from $T_G$ to itself, so this follows immediately from Corollary \ref{cor:groupdegen} and Corollary \ref{cor:omegaandi}.
\end{proof}

\begin{remark}
This shows that no fixed group tensor $T_G$ can be used to show $\omega=2$ using the Galactic Method. That said, it does not rule out showing $\omega=2$ by using a \emph{sequence} $G_1, G_2, \ldots$ of groups such that $\lim_{i \to \infty} \omega_g(T_{G_i}) = 2$; such a sequence could still exist. Prior work has already made a similar remark for showing $\omega=2$ by finding large `simultaneous triple product property' constructions in $G$ via the Group Theoretic Method, and some natural sequences of groups have already been ruled out \cite{blasiak2017groups}. Although this method is less general than the Galactic Method, their proofs can be combined with the above to rule out these sequences of groups in the Galactic Method as well.
\end{remark}

A question arises: does Theorem~\ref{thm:groupomeganot2} already rule out any `natural' tensor from attaining $\omega=2$ using the Galactic Method? In the remainder of this section, we will give a `no' answer to this question, by showing that the Coppersmith-Winograd tensor itself, which has been used to prove all the most recent upper bounds on $\omega$ \cite{coppersmith, stothers, v12, legall}, cannot be ruled out in this way. We will nonetheless rule out the Coppersmith-Winograd tensor later by using the partitioning tools from the previous section. We begin with some useful lemmas about finite abelian groups. 

\begin{lemma} \label{lem:groupsize}
If $G$ is any finite Abelian group, and $g \in G$ is any element other than the identity, then there are at most $|G|/2$ elements $a \in G$ such that $a^2 = g$.
\end{lemma}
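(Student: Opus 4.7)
The plan is to exploit the fact that in an abelian group the squaring map is a homomorphism. Let $\phi : G \to G$ be defined by $\phi(a) = a^2$; since $G$ is abelian, $\phi(ab) = (ab)^2 = a^2 b^2 = \phi(a)\phi(b)$, so $\phi$ is a group homomorphism. Let $K = \ker(\phi) = \{a \in G : a^2 = e\}$.

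Now, for any $g \in G$, the preimage $\phi^{-1}(g) = \{a \in G : a^2 = g\}$ is either empty or a coset of $K$. In the empty case the claim is trivial, so suppose that $\phi^{-1}(g)$ is nonempty, in which case $|\phi^{-1}(g)| = |K|$. It therefore suffices to show that $|K| \leq |G|/2$ under the hypothesis that $g \neq e$ and $\phi^{-1}(g) \neq \emptyset$.

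The key observation is that $K$ must be a proper subgroup of $G$ in this situation: if $K = G$, then $\phi(a) = e$ for every $a \in G$, so the image of $\phi$ is just $\{e\}$, contradicting the fact that $g \neq e$ lies in the image. Hence $K$ is a proper subgroup, and by Lagrange's theorem $[G:K]$ is an integer at least $2$, giving $|K| \leq |G|/2$. This completes the proof. I do not expect any real obstacle here — the entire argument is a one-line application of the homomorphism/coset structure plus Lagrange, and the only thing to be careful about is handling the case $K = G$ separately, which is precisely where the hypothesis $g \neq e$ is used.
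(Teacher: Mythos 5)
Your proof is correct and takes essentially the same approach as the paper: the paper explicitly constructs the bijection $a \mapsto a\sqrt{g}$ from $S_1 = \{a : a^2 = 1\}$ to $S_g$ and then notes these two equal-sized sets are disjoint (since $g \neq 1$), whereas you phrase the same coset structure in terms of the squaring homomorphism, its kernel, and Lagrange's theorem; the underlying idea is identical.
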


\begin{proof}
 For any $g \in G$ with $g \neq 1$, let $S_g := \{ a \in G \mid a^2 = g\}$ and $S_1 := \{ a \in G \mid a^2 = 1\}$, and suppose that $S_g$ is nonempty. Pick any element $\sqrt{g} \in S_g$. There is hence a bijection $b : S_1 \to S_g$ given by $b(a) = a\sqrt{g}$. Since $S_1$ and $S_g$ are disjoint subsets of $G$ with $|S_1|=|S_g|$, we must have $|S_g| \leq |G|/2$ as desired.
\end{proof}

\begin{lemma} \label{lem:CWnonabelain}
For any positive integer $q$, $CW_q$ is not a sub-tensor of $T_G$ for any abelian group $G$ of order $|G|<2q$.
\end{lemma}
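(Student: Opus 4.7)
The plan is to assume for contradiction that $CW_q$ is a sub-tensor of $T_G$ for some abelian group $G$ with $|G|<2q$. Such a sub-tensor relation amounts to three injections $\alpha,\beta,\gamma:\{0,1,\ldots,q+1\}\to G$; writing their images as $A_i=\alpha(i)$, $B_j=\beta(j)$, $C_k=\gamma(k)$, the requirement that each term of $CW_q$ appear in $T_G$ translates to $A_iB_j=C_k$ whenever $x_iy_jz_k$ is a term of $CW_q$.

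I would first mine the ``middle'' terms: for each $i\in\{1,\ldots,q\}$, the three terms $x_iy_0z_i$, $x_0y_iz_i$, $x_iy_iz_0$ give $A_iB_0=C_i$, $A_0B_i=C_i$, and $A_iB_i=C_0$. Combining the first two (using commutativity) yields $A_i=A_0B_iB_0^{-1}$, and substituting into the third yields
$$B_i^2 \;=\; A_0^{-1}B_0C_0 \;=:\; g,$$
the same element of $G$ for every $i\in\{1,\ldots,q\}$. By injectivity of $\beta$, the elements $B_1,\ldots,B_q$ are $q$ distinct solutions to $B^2=g$ in $G$.

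I would then split into cases. If $g\neq 1$, Lemma~\ref{lem:groupsize} bounds the number of solutions to $B^2=g$ by $|G|/2$, forcing $q\leq|G|/2$ and contradicting $|G|<2q$. The harder case is $g=1$: here $B_1,\ldots,B_q$ merely lie in the $2$-torsion subgroup $G[2]:=\{B\in G : B^2=1\}$, which a priori could be all of $G$ (as in $(\Z/2)^k$). To rule this out, I would bring in the corner term $x_0y_{q+1}z_0$, which gives $A_0B_{q+1}=C_0$; combined with the relation $A_0=B_0C_0$ that is equivalent to $g=1$, this forces $B_{q+1}=B_0^{-1}$. Injectivity of $\beta$ then implies $B_0\neq B_0^{-1}$, so $B_0\notin G[2]$, and hence $G[2]$ is a proper subgroup of $G$. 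Lagrange's theorem gives $|G[2]|\leq|G|/2$, and since $B_1,\ldots,B_q$ are $q$ distinct elements of $G[2]$ we again obtain $|G|\geq 2q$, the desired contradiction.

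The main obstacle will be the $g=1$ case, since Lemma~\ref{lem:groupsize} gives nothing about square roots of the identity; the role of the corner term $x_0y_{q+1}z_0$ (together with its partner $x_{q+1}y_0z_0$) is precisely to produce an element $B_0$ forced outside $G[2]$, recovering the factor of $2$. All remaining steps are short algebraic manipulations that use commutativity of $G$ in an essential way.
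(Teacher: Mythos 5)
Your proof is correct and rests on the same key observation as the paper's: the three families of middle terms force the $q$ distinct elements $B_1,\ldots,B_q$ (or $A_1,\ldots,A_q$) to share a common square $g$, and then a size bound on square roots finishes the job. The one place you diverge is that you do not normalize the injections. The paper uses commutativity to translate $a,b,c$ so that $a(0)=b(0)=c(0)=1$; this forces $a=b=c$ and, crucially, makes the common square $g=c(q+1)$ automatically nontrivial by injectivity (since $c(0)=1$), so the awkward $g=1$ case never arises and Lemma~\ref{lem:groupsize} applies immediately. Your version keeps general $A_0,B_0,C_0$, pays for it by having to treat $g=1$ separately, and closes that gap cleanly via a corner term: $x_0y_{q+1}z_0$ together with $g=1$ forces $B_{q+1}=B_0^{-1}\neq B_0$, so $B_0\notin G[2]$, $G[2]$ is a proper subgroup, and Lagrange gives $|G[2]|\leq|G|/2$. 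In effect you are proving the $g=1$ analogue of Lemma~\ref{lem:groupsize}, which the lemma statement itself excludes. Both routes are valid; the normalization trick is a little slicker because it dispatches the extra case up front, but your Lagrange argument for $g=1$ is self-contained and correct.
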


\begin{proof}
Recall that (under a slight change $z_0\longleftrightarrow z_{q+1}$):
$$CW_q = x_0 y_0 z_0 + x_{q+1} y_0 z_{q+1} + x_0 y_{q+1} z_{q+1} + \sum_{i=1}^q (x_i y_0 z_i + x_0 y_i z_i + x_i y_i z_{q+1}).$$

Assume to the contrary that $CW_q$ is a sub-tensor of $T_G$ for some abelian group $G$ of order $|G|<2q$. Let $X,Y,Z$ be the sets of variables of $CW_q$, and let $\bar{X} = \{\bar{x}_g\}_{g \in G}$, $\bar{Y} = \{\bar{y}_g\}_{g \in G}$, and $\bar{Z} = \{\bar{z}_g\}_{g \in G}$ be the sets of variables of $T_G$. That means there are injections $a, b, c : \{0,1,\ldots,q+1\} \to G$ such that if $x_i y_j z_k \in CW_q$, then $\bar{x}_{a(i)} \bar{y}_{b(j)} \bar{z}_{c(k)} \in T_G$. Since $G$ is abelian, we can assume without loss of generality that $a(0) = b(0) = c(0) = 1$, the identity in $G$, since otherwise, replacing $a(i)$ with $a(i) a(0)^{-1}$ for all $i$, replacing $b(j)$ with $b(j) b(0)^{-1}$ for all $j$, and replacing $c(k)$ with $c(k) c(0)^{-1}$ for all $k$, does not change the desired properties of $a,b,c$.

Now, note that since for all $i \in \{1,2,\ldots,q+1\}$, we have $x_i y_0 z_i \in CW_q$, this means that we must have $a(i) = a(i) b(0) = c(i)$ for all such $i$ (by definition of $T_G$). Similarly, since $x_0 y_i z_i \in CW_q$, we must have $b(i) = c(i)$ for all $i \in \{1,2,\ldots,q+1\}$. In fact, $a,b,$ and $c$ are all the same function.

Finally, let $g = c(q+1) \in G$. We have that $g \neq 1$ since $c(0)=1$ and $c$ is an injective function. Meanwhile, for all $i \in \{1,2,\ldots,q\}$, we have that $x_i y_i z_{q+1} \in CW_q$, and so $a(i)^2 = a(i)b(i) = c(q+1)=g$. In other words, for all $q$ different values of $a(i) \in G$ for $i \in \{1,2,\ldots,q+1\}$, we have $a(i)^2 = g$. It follows from Lemma \ref{lem:groupsize} that $|G| \geq 2q$, as desired.
\end{proof}

\begin{remark}
There are many values of $q$ for which Lemma~\ref{lem:groupsize} is tight. For instance, if $C_\ell$ denotes the cyclic group of order $\ell$, then for any nonnegative integer $k$, the group $C_2^k \times C_4$, which has order $2^{k+2}$, contains $CW_{2^{k+1}}$ as a sub-tensor of its group tensor, and even as a monomial degeneration!.
\end{remark}

\begin{theorem}
$CW_q$ is not a sub-tensor of $T_G$ for any group $G$ of order $|G| = q+2$ for $q=3,4,5,6,7,8$, or $9$.
\end{theorem}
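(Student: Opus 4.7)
The plan is to mirror the proof of Lemma~\ref{lem:CWnonabelain} and then dispose of the finitely many non-abelian groups of the relevant orders by a case analysis. Of the orders $|G| = q+2$ for $q \in \{3,\ldots,9\}$, the orders $5,7,9,11$ admit only abelian groups (primes $5,7,11$; and $C_9$, $C_3\times C_3$ for $9$), so those $q$'s are already handled by Lemma~\ref{lem:CWnonabelain}. The non-abelian cases that remain are $G=S_3$ for $q=4$, $G\in\{D_4,Q_8\}$ for $q=6$, and $G=D_5$ for $q=8$.

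First I would redo the normalization step of Lemma~\ref{lem:CWnonabelain} in a general (possibly non-abelian) group. Given injections $a,b,c:\{0,\ldots,q+1\}\to G$ witnessing $CW_q\subseteq T_G$, the two-sided replacement $a(i)\mapsto a(0)^{-1}a(i)$, $b(j)\mapsto b(j)b(0)^{-1}$, $c(k)\mapsto a(0)^{-1}c(k)b(0)^{-1}$ preserves every equation $a(i)b(j)=c(k)$, so we may assume $a(0)=b(0)=1$. Reading off the (swapped) terms $x_iy_0z_i$, $x_0y_iz_i$, and $x_iy_iz_{q+1}$ then forces $a(i)=b(i)=c(i)$ for all $i$, and $a(i)^2=c(q+1)=:g$ for $i=1,\ldots,q$, with $g\neq 1$ by injectivity of $c$. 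Hence an embedding would require $q$ distinct elements in $G\setminus\{1,g\}$ all squaring to a common non-identity element $g$.

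For $S_3$, $D_4$, and $D_5$ this square-root constraint is easily violated by a direct count. In $S_3$ and $D_5$ every non-identity element has at most one square root (reflections all square to $1$, and each non-trivial rotation is the square of a unique rotation in odd-index dihedral groups), and in $D_4$ only $r^2$ has non-trivial square roots, of which there are just two. So the maximum feasible $q$ in these three groups is $1,2,1$ respectively, contradicting the required $q=4,6,8$. The main obstacle is $Q_8$ at $q=6$: there $-1$ admits the six distinct square roots $\{\pm i,\pm j,\pm k\}$, so the naive squaring bound is met with equality. To rule this case out I would go beyond the squaring count and perform a direct finite inspection, enumerating all bijections $a:\{0,\ldots,7\}\to Q_8$ with $a(0)=1$, $a(7)=-1$, and $\{a(1),\ldots,a(6)\}=\{\pm i,\pm j,\pm k\}$ (up to the symmetries of $Q_8$), and verifying whether the induced map $a=b=c$ yields a valid sub-tensor. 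I expect this $Q_8$ case check to be the only non-routine step and the real crux of the argument.
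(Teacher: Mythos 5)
Your strategy matches the paper's sketched proof exactly: invoke Lemma~\ref{lem:CWnonabelain} for the orders that only admit abelian groups, then handle the four remaining nonabelian groups ($S_3$, $D_4$, $Q_8$, $D_5$) individually. Your non-abelian normalization $a(i)\mapsto a(0)^{-1}a(i)$, $b(j)\mapsto b(j)b(0)^{-1}$, $c(k)\mapsto a(0)^{-1}c(k)b(0)^{-1}$ is correct and cleanly extends the paper's abelian argument, and the resulting reduction --- an embedding forces $q$ distinct elements of $G\setminus\{1,g\}$ to square to a common non-identity $g$ --- is right. Your square-root counts for $S_3$, $D_4$, $D_5$ are also right, and they dispose of those three groups.

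The problem is your expectation for $Q_8$. You correctly observe that the six elements $\pm i,\pm j,\pm k$ all square to $-1$, so the counting bound is met with equality at $q=6$, but you anticipate that a finer brute-force inspection will rule $Q_8$ out. It will not. The constraints you extracted (after normalization, $a=b$ on $\{1,\ldots,q\}$, $a(0)=1$, $a(q+1)=g$, $a(i)^2=g$, $c$ determined by $a$) are not just necessary but \emph{sufficient} for the sub-tensor embedding --- every term of $CW_q$ is accounted for. Setting $g=-1$ and $\{a(1),\ldots,a(6)\}=\{\pm i,\pm j,\pm k\}$ with $a=b$ and $c$ as forced, one checks term by term that the map sends each of the $3q+3$ terms of $CW_6$ into the support of $T_{Q_8}$ (e.g.\ $x_iy_iz_0$ maps to a triple with product $a(i)^2=-1=c(0)$). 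So $CW_6$ \emph{is} a sub-tensor of $T_{Q_8}$, and the Theorem statement as written fails at $q=6$. Your plan, if executed carefully, would have uncovered this rather than completed the proof; the paper's claim that a brute-force search clears all four nonabelian groups appears to be in error, precisely because Lemma~\ref{lem:groupsize} (which caps the square-root count at $|G|/2$) genuinely requires abelianness and fails badly for $Q_8$, where $6>|Q_8|/2=4$. You should either restrict the claim to exclude $q=6$, or establish a different obstruction for $Q_8$ --- the squaring reduction alone cannot provide one.
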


\begin{proof}
For $q=3,5,7,9$, the result follows from Lemma~\ref{lem:CWnonabelain} since for those $q$, there is no non-abelian group of order $q+2$, and we have $2q > q+2$. For $q=4,6,8$, there are four different nonabelian groups to check in total, but an argument similar to the proof of Lemma~\ref{lem:CWnonabelain}, or simply a small brute-force search, shows that none of them contradicts the Theorem statement, as desired.
\end{proof}

\begin{remark}
It is not hard to see that $CW_q$ is a sub-tensor (and even a monomial degeneration!) of $T_{q+2}$ for $q=1$ and $q=2$.
\end{remark}

\section{Applications of our Lower Bound Techniques}

In this section, we use the lower bounding techniques that we have developed throughout the paper for a number of applications to tensors of interest.

\subsection{Generalized CW tensors}

We begin by proving our main result:

\begin{theorem} \label{thm:main}
There is a universal constant $c > 2$ such that for any generalized Coppersmith-Winograd tensor $T$ (with any parameter $q$), we have $\omega_g(T) \geq c$.
\end{theorem}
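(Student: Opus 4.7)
The plan is to produce, for every $q$ and every $\sigma$, an upper bound on $\tilde{I}(CW_q^\sigma)$ that is strictly smaller than $\tilde{R}(CW_q^\sigma) \geq q+2$ (the latter from conciseness of $CW_q^\sigma$), and then invoke Theorem~\ref{thm:IROmega} to convert this into a lower bound on $\omega_g(CW_q^\sigma)$. No single one of the partitioning tools from Section~\ref{sec:partitioning} gives a lower bound bounded away from $2$ uniformly in $q$, so I would combine two of them: the ``corner'' tool (Corollary~\ref{cor:corners}) handles small $q$, and the ``measure'' tool (Theorem~\ref{thm:measures}) handles large $q$.

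For the corner bound, observe that every $CW_q^\sigma$ contains the two triples $x_{q+1}y_0z_0$ and $x_0 y_{q+1}z_0$, that they share the $z$-variable $z_0$, and that neither $x_{q+1}$ nor $y_{q+1}$ appears in any other triple of $CW_q^\sigma$ (this is visible from the explicit form, independent of $\sigma$). After a trivial relabeling making $x_{q+1}, y_{q+1}$ play the role of the distinguished ``$x_q, y_q$'' in the corollary statement, applying Corollary~\ref{cor:corners} with parameter $q+2$ yields $\tilde{I}(CW_q^\sigma) \leq c_{q+2} < q+2$ for some $c_{q+2}$ depending only on $q$. Via Theorem~\ref{thm:IROmega} and $\tilde{R}(CW_q^\sigma) \geq q+2$, this produces a lower bound $\omega_g(CW_q^\sigma) \geq w_q > 2$ for each $q$, though $w_q \to 2$ as $q \to \infty$.

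For the measure bound, I would partition
$$CW_q^\sigma = x_0 y_0 z_{q+1} + x_0 y_{q+1} z_0 + x_{q+1} y_0 z_0 + \sum_{i=1}^q x_i y_{\sigma(i)} z_0 + \sum_{i=1}^q x_i y_0 z_i + \sum_{i=1}^q x_0 y_i z_i$$
into six parts. Each of the first three has measure $1$, and each of the last three has measure $q^2$: for the piece $\sum_i x_i y_{\sigma(i)} z_0$ this uses that $\sigma$ is a permutation, so all $q$ of the $y$-variables $y_1,\ldots,y_q$ really do appear. Theorem~\ref{thm:measures} then gives $\tilde{I}(CW_q^\sigma) \leq 3 + 3q^{2/3}$, and plugging this into Theorem~\ref{thm:IROmega} together with $\tilde{R}(CW_q^\sigma) \geq q+2$ yields
$$\omega_g(CW_q^\sigma) \geq \frac{6}{2 + \log(3 + 3q^{2/3})/\log(q+2)},$$
which approaches $9/4$ as $q \to \infty$.

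Combining: fix $Q$ large enough that the measure bound is at least $2.2$ (say) for all $q \geq Q$; this is possible by the above limit. For each of the finitely many $q \in \{1, \ldots, Q-1\}$, the corner bound supplies $w_q > 2$. Setting $c := \min(\{w_q : q < Q\} \cup \{2.2\})$, a minimum of finitely many numbers each strictly greater than $2$, gives the desired universal constant. The main obstacle is conceptual rather than technical: each individual tool fails at one end of the range of $q$ (corners drift to $2$ for large $q$; measures is vacuous for very small $q$, since $3+3q^{2/3} > q+2$ until $q$ is around $30$), and the real content of the proof is that these two failure regimes do not cover all $q$ simultaneously, so some threshold $Q$ exists at which to switch tools.
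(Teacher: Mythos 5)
Your proposal is correct and follows essentially the same route as the paper: the paper also splits into Lemma~\ref{lem:cwsmall} (small $q$, via the corner criterion Corollary~\ref{cor:corners}) and Lemma~\ref{lem:cwbig} (large $q$, via the measure criterion Theorem~\ref{thm:measures}), then takes the minimum over the two regimes. The only cosmetic difference is that the paper's first proof of Lemma~\ref{lem:cwbig} absorbs the three singleton terms into the three large parts to get a $3$-part partition, whereas you keep a $6$-part partition; both give $\tilde{I}(CW_q^\sigma) = O(q^{2/3})$ and the same $9/4$ asymptotic lower bound, so the threshold $Q$ and the final universal constant come out essentially the same.
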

\begin{proof} 
This follows from Lemmas \ref{lem:cwsmall} and \ref{lem:cwbig}, which we state and prove below.
\end{proof}

\begin{lemma} \label{lem:cwsmall}
For every nonnegative integer $q$, there is a constant $c_q > 2$ such that for any generalized Coppersmith-Winograd tensor $T$ with parameter $q$, we have $\omega_g(T) \geq c_q$.
\end{lemma}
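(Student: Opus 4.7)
The plan is to observe that every generalized Coppersmith--Winograd tensor $CW^\sigma_q$ has exactly the corner-term structure required by Corollary~\ref{cor:corners}, and then to feed the resulting upper bound on $\tilde{I}$ into the general framework of Corollary~\ref{cor:omegaandi}. No new machinery is needed; the entire lemma should fall out by plugging structural features of $CW^\sigma_q$ into tools already in place.

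First, I would fix a generalized CW tensor $T = CW^\sigma_q$ and verify the corner hypothesis. The variable sets $X,Y,Z$ each have size $q+2$. Reading off the definition of $CW^\sigma_q$, the variable $x_{q+1}$ appears in exactly one term, namely $x_{q+1} y_0 z_0$, and $y_{q+1}$ appears in exactly one term, namely $x_0 y_{q+1} z_0$; crucially, neither index $0$ nor $q+1$ lies in the range of $\sigma$, so these occurrence counts are identical for every $\sigma \in S_q$. After the cosmetic relabeling in which $0$ plays the role of ``$1$'' and $q+1$ plays the role of ``$q$'' in the statement of Corollary~\ref{cor:corners} (with the dimension parameter of that corollary being $q+2$), all hypotheses are met, and Corollary~\ref{cor:corners} yields a constant $c'_q < q+2$ depending only on $q$ such that $\tilde{I}(T) \leq c'_q$.

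Second, I would record that $\tilde{R}(T) \geq q+2$. This is because $T$ is concise on variable sets of size $q+2$ (each variable occurs in at least one term), and conciseness is preserved under tensor products since Kronecker products of injective flattenings remain injective. Hence $R(T^{\otimes n}) \geq (q+2)^n$ for every $n$, and the asymptotic rank inherits the same lower bound. Combining with the first step gives
\[
\tilde{I}(T) \;\leq\; c'_q \;<\; q+2 \;\leq\; \tilde{R}(T),
\]
and in particular $\tilde{I}(T) \leq \tilde{R}(T)^{s_q}$ where $s_q := \log_{q+2}(c'_q) < 1$ is a constant depending only on $q$. Applying the second bullet of Corollary~\ref{cor:omegaandi} with this $s_q$ produces a constant $c_q > 2$, depending only on $s_q$ and hence only on $q$, such that $\omega_g(T) \geq c_q$, which is exactly the conclusion of the lemma.

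None of the individual steps is the main obstacle; the genuine work has already been done inside Corollary~\ref{cor:corners}, whose proof rests on the ``probability distribution on terms'' tool (Theorem~\ref{thm:probs}). The only point requiring a moment of care is noting that $\sigma$ never touches the indices $0$ and $q+1$, which is precisely what lets a single constant $c_q$ be used uniformly across the whole family $\underline{CW}_q$, giving the $\sigma$-independent bound claimed by the lemma.
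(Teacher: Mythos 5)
Your proof is correct and follows the same route as the paper's: verify that $CW^\sigma_q$ has the corner-term structure of Corollary~\ref{cor:corners}, conclude $\tilde{I}(T) \leq c'_q < q+2$, and invoke Corollary~\ref{cor:omegaandi} to translate this into a $\sigma$-independent lower bound on $\omega_g(T)$. You actually supply a detail the paper leaves implicit — namely that $\tilde{R}(T) \geq q+2$ via conciseness, which is needed so that $\tilde{I}(T) < q+2 \leq \tilde{R}(T)$ indeed gives a gap $\tilde{I}(T) \leq \tilde{R}(T)^{s_q}$ with $s_q < 1$ — but otherwise the argument is the same.
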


\begin{lemma} \label{lem:cwbig}
There is a constant $c' > 2$ and a positive integer $q'$ such that for any integer $q \geq q'$, and any generalized Coppersmith-Winograd tensor $T$ with parameter $q$, we have $\omega_g(T) \geq c'$.
\end{lemma}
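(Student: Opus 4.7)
The plan is to combine the measure partitioning bound Theorem~\ref{thm:measures} with the $\tilde{I}$-to-$\omega_g$ conversion Theorem~\ref{thm:IROmega}, applied to the canonical six-term decomposition of a generalized Coppersmith-Winograd tensor. First, decompose $CW^\sigma_q = P_1 + P_2 + P_3 + P_4 + P_5 + P_6$, where the corner terms are $P_1 = x_0 y_0 z_{q+1}$, $P_2 = x_0 y_{q+1} z_0$, $P_3 = x_{q+1} y_0 z_0$, and the bulk pieces are
\[
P_4 = \sum_{i=1}^q x_i y_{\sigma(i)} z_0, \qquad P_5 = \sum_{i=1}^q x_i y_0 z_i, \qquad P_6 = \sum_{i=1}^q x_0 y_i z_i.
\]

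For the measure computation, each corner $P_j$ with $j \in \{1,2,3\}$ uses a single variable in each of the three slots, so $\mu(P_j)=1$. Each of $P_4, P_5, P_6$ is a $\langle 1,q,1\rangle$-shaped block: exactly one variable in one of the slots and $q$ distinct variables in each of the other two. The only nontrivial point is $P_4$, where the set of $Y$-variables appearing is $\{y_{\sigma(i)} : 1\le i\le q\} = \{y_1,\dots,y_q\}$ precisely because $\sigma$ is a permutation; this is what buys uniformity of the bound over the entire family $\underline{CW}_q$. Thus $\mu(P_j)=q^2$ for $j\in\{4,5,6\}$, and Theorem~\ref{thm:measures} yields
\[
\tilde{I}(CW^\sigma_q) \;\le\; 3\cdot 1 + 3\cdot (q^2)^{1/3} \;=\; 3 + 3q^{2/3}.
\]

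To convert this into a lower bound on $\omega_g$, I will use that $CW^\sigma_q$ is concise on $\F^{q+2}\otimes \F^{q+2}\otimes \F^{q+2}$, so the flattening-rank inequality from the Preliminaries gives $R((CW^\sigma_q)^{\otimes n})\ge \bar{R}((CW^\sigma_q)^{\otimes n})\ge (q+2)^n$ for every $n$, and taking $n$th roots yields $\tilde{R}(CW^\sigma_q)\ge q+2$. Fix any $s\in(2/3,1)$, say $s=3/4$. Since $3+3q^{2/3} = o((q+2)^s)$, there is an absolute $q'$ such that for every $q\ge q'$ and every $\sigma\in S_q$ we have $\tilde{I}(CW^\sigma_q) \le (q+2)^s \le \tilde{R}(CW^\sigma_q)^s$. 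Theorem~\ref{thm:IROmega} then forces $s\ge 6/\omega_g(CW^\sigma_q)-2$, i.e.\ $\omega_g(CW^\sigma_q)\ge 6/(s+2)=24/11>2$, establishing the lemma with $c'=24/11$.

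There is no genuine conceptual obstacle here; the argument is pure bookkeeping on top of already-proven machinery. The two points that merit a sentence of care in the write-up are (i) that the measure of $P_4$ is exactly $q^2$ independent of $\sigma$, which is immediate from bijectivity, and (ii) that we need the asymptotic inequality $\tilde{R}(CW^\sigma_q)\ge q+2$ rather than merely $\bar{R}(CW^\sigma_q)\ge q+2$, which is why I apply the concise-flattening bound to each tensor power $(CW^\sigma_q)^{\otimes n}$ before taking $n$th roots.
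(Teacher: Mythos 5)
Your proof is correct and takes essentially the same approach as the paper's first proof of this lemma: partition $CW^\sigma_q$ into pieces each using only one variable in some slot, apply Theorem~\ref{thm:measures} to obtain $\tilde{I}(CW^\sigma_q) = O(q^{2/3})$, and convert to a lower bound on $\omega_g$ via the asymptotic-independence-number machinery. Your finer six-part split (separating the three corner terms rather than absorbing them into the bulk pieces), your explicit verification that $\tilde{R}(CW^\sigma_q)\ge q+2$ via conciseness of tensor powers, and your explicit constant $c' = 24/11$ from plugging into Theorem~\ref{thm:IROmega} directly rather than citing Corollary~\ref{cor:omegaandi} are harmless variations that do not change the substance.
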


\begin{proof}[Proof of Lemma \ref{lem:cwsmall}]
For each $q$, and each generalized Coppersmith-Winograd tensor $T$ with parameter $q$, the tensor $T$ is of the form described by Corollary~\ref{cor:corners}, which says that $\tilde{I}(T) < s_{q+2}$ for some constant $s_{q+2} < q+2$ which depends only on $q$. It then follows from Corollary \ref{cor:omegaandi} that $\omega_g(T) > c_q$ for some constant $c_q > 2$ determined by $s_q$, as desired.
\end{proof}

The proof above of Lemma~\ref{lem:cwsmall} used Corollary~\ref{cor:corners}, which follows from Theorem~\ref{thm:probs}, as its main tool. We will next give two different proofs of Lemma \ref{lem:cwbig}; the first will showcase Theorem~\ref{thm:measures}, and the second will showcase Theorem~\ref{thm:removeanx}. Each of Theorems \ref{thm:removeanx}, \ref{thm:probs}, and \ref{thm:measures} describes a different property of a tensor $T$ which is enough to imply that $\omega_g(T)>2$. Throughout these three proofs, we are showing that the Coppersmith-Winograd tensor has \emph{all three} of these properties!

\begin{proof}[First proof of Lemma \ref{lem:cwbig}]
Suppose $T$ is a generalized Coppersmith-Winograd tensor with parameter $q$. Hence, $T$ can be written as
$$T=x_0 y_0 z_0 + x_0 y_{q+1} z_{q+1} + x_{q+1} y_0 z_{q+1} + \sum_{i=1}^q (x_0 y_i z_i + x_i y_0 z_i + x_i y_{\sigma(i)} z_{q+1}),$$
for some permutation $\sigma$ on $\{1,2,\ldots, q\}$. We partition $T$ into three parts $T_1, T_2, T_3$ as follows:
$$T_1 = \sum_{i=0}^q x_0 y_i z_i,$$
$$T_2 = \sum_{i=1}^{q+1} x_i y_0 z_i,$$
$$T_3 = x_0 y_{q+1} z_{q+1} + \sum_{i=1}^q x_i y_{\sigma(i)} z_{q+1}.$$
Note that $T_1$ has only one $x$-variable, $T_2$ has only one $y$-variable, and $T_3$ has only one $z$-variable. Hence, $\mu(T_1) = \mu(T_2) = \mu(T_3) = q^2$. It follows from Theorem \ref{thm:measures} that $\tilde{I}(T) \leq 3q^{2/3}$. When $q \geq 28$, we have $3 q^{2/3} < q^{0.997}$, and so by Corollary \ref{cor:omegaandi}, there is a fixed constant $c'>2$ independent of $q$ such that $\omega_g(T) \geq c'$, as desired.
\end{proof}

Our second proof will use Theorem~\ref{thm:removeanx} instead of Theorem~\ref{thm:measures} as our primary tool. The arithmetic will be messier, but we will be able to achieve a smaller integer $q'$: $6$ instead of $28$.

\begin{proof}[Second proof of Lemma \ref{lem:cwbig}]
Consider any generalized Coppersmith-Winograd tensor with parameter $q$, which is given by
\[CW^\sigma_q = (x_0y_0z_{q+1}+x_0y_{q+1}z_0+x_{q+1}y_0z_0) + \sum_{i=1}^q (x_i y_{\sigma(i)} z_0 + x_iy_0z_i+x_0y_iz_i).\]
We define two intermediate tensors, $A$ and $B$, given by:
$$A =(x_{q+1}y_0z_0) + \sum_{i=1}^q (x_i y_{\sigma(i)} z_0 + x_iy_0z_i), $$
$$B =\sum_{i=1}^q (x_i y_{\sigma(i)} z_0). $$
Note that $A$ is the tensor over $\{x_1, \ldots, x_{q+1}\}, \{y_0, \ldots, y_q\}, \{z_1, \ldots, z_{q+1}\}$ which results from zeroing out $x_0$ in $CW^\sigma_q$. Moreover, $B$ is the tensor over $\{x_1, \ldots, x_1\}, \{y_1, \ldots, y_1\}, \{z_0\}$ which results from zeroing out $y_0$ in $A$.

We first apply Theorem~\ref{thm:removeanx} to $A$ and $B$. Since $B$ only has a single $z$-variable, we have that $\tilde{I}(B)=1$. Applying the Theorem gives us the bound:
$$\tilde{I}(A) \leq \left( \frac{q^{\log(q+1)} (\log(q^2+q))^{\log(q^2+q)}}{(\log(q+1))^{\log(q+1)} (\log(q))^{\log(q)}} \right)^{\frac{1}{\log(q^2+q)}}.$$
One can confirm that this bound is less than $(q+1)/(q+2)^{1/(q+1)}$ whenever $q \geq 6$. 

In the case of $q=6$, the bound above gives us that $\tilde{I}(A) \leq 5.07905$. We can then apply Theorem~\ref{thm:removeanx} again with $CW^\sigma_6$ and $A$. When doing so, we have $c = 5.07905$, and so we find $p = 0.133648$ and we hence get the bound $\tilde{I}(CW^\sigma_6) \leq 7.9973$. Since this is a constant less than $8$, by Corollary \ref{cor:omegaandi}, we know there is a constant $c_6 > 2$ such that $\omega_g(CW^\sigma_6) > c_6$. Note in particular that $c_6$ is independent of $\sigma$ since we never used what $\sigma$ is. We can then do the same process for any $q>6$ to yield a constant $c_q$, but our bound is improving with $q$, so we will get $c_q \geq c_6$ for all such $q \geq 6$, which completes the proof.
\end{proof}

\subsection{$\tilde{I}$ and tri-colored sum-free set constructions for all finite groups} \label{sec:tcsfslb}

One of the key components to our lower bounding framework is Lemma~\ref{lem:mmind}, in which we showed that matrix multiplication tensors have large asymptotic independence numbers. In this subsection, we will instead use Lemma~\ref{lem:mmind} in a different way: to show that some other tensors of interest also have nontrivially-large asymptotic independence numbers. In particular, we will show this for the group tensor $T_G$ of any finite group $G$, which will imply a nontrivially-large tri-colored sum-free set in $G^n$ for sufficiently large $n$. We start with the main additional idea needed for this application:

\begin{theorem} \label{thm:anygroupcw}
For every finite group $G$ of order $|G|=q$, there is a monomial degeneration of $T_G$ into a tensor $T$ which is a generalized Coppersmith-Winograd tensor with parameter $q-2$.
\end{theorem}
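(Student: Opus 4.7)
The plan is to construct the monomial degeneration explicitly, by coordinatizing the $q$ $x$-, $y$-, and $z$-indices of the target generalized $CW_{q-2}$ tensor with the $q$ elements of $G$, and then giving an integer weight function whose zero set picks out exactly the six families of $CW$ terms from $T_G$. Fix any non-identity element $h \in G$, and any enumeration $g_1, \ldots, g_{q-2}$ of the ``middle'' elements $G \setminus \{e, h\}$. I would identify the $x$- and $y$-indices $0, 1, \ldots, q-1$ with $e, g_1, \ldots, g_{q-2}, h$ respectively, and identify the $z$-indices by swapping the roles of $0$ and $q-1$: $z_0 \leftrightarrow z_h$, $z_{q-1} \leftrightarrow z_e$, and $z_i \leftrightarrow z_{g_i}$ for $i \in \{1, \ldots, q-2\}$. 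Under this identification, each of the three ``corner'' $CW$ terms and each of the three $CW$ ``families'' (indexed by $i \in \{1, \ldots, q-2\}$) maps to a unique term of $T_G$; for instance, $x_0 y_0 z_{q-1}$ becomes $x_e y_e z_e$ (from $e \cdot e = e$), and $x_{q-1} y_0 z_0$ becomes $x_h y_e z_h$.

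Next, I would weight the variables of $T_G$ by \emph{type}: an $x$- or $y$-variable indexed by $e$, by an element of $G \setminus \{e,h\}$, or by $h$ is assigned weight $0$, $1$, or $2$ respectively, and the $z$-variables get the negatives of these ($0$, $-1$, $-2$). The verification then proceeds by a direct case analysis on the types of $(g, g')$ in a term $x_g y_{g'} z_{gg'} \in T_G$, checking that the weight sum $a(x_g) + b(y_{g'}) + c(z_{gg'})$ is always non-negative and vanishes exactly on the six targeted $CW$ triples. Most cases are immediate: for example, $c(z_{gg'}) = 0$ requires $gg' = e$ and forces $g = g' = e$ (matching $x_0 y_0 z_{q-1}$), and $c(z_{gg'}) = -1$ forces exactly one of $g, g'$ to be $e$ with the other a middle element (matching the families $x_i y_0 z_i$ and $x_0 y_i z_i$).

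The only subtle case is $c(z_{gg'}) = -2$ with both $g, g' \in G \setminus \{e, h\}$, which corresponds to the sum term $\sum_{i=1}^{q-2} x_i y_{\sigma(i)} z_0$. Achieving weight zero here requires $gg' = h$, so $g' = g^{-1}h$, and the permutation $\sigma$ is induced by the map $\phi : G \to G$, $\phi(g) = g^{-1}h$, restricted to $G \setminus \{e, h\}$. Since $\phi$ is a bijection of $G$ (being a composition of inversion and right multiplication by $h$) that swaps $e$ and $h$, it restricts to a bijection of $G \setminus \{e, h\}$, so $\sigma$ is a well-defined permutation of $\{1, \ldots, q-2\}$ and the degenerated tensor is indeed $CW^\sigma_{q-2}$. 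I expect this algebraic observation about $\sigma$ to be the only place requiring more than routine bookkeeping: conditions (1) and (3) of the monomial degeneration definition (non-negativity on $T_G$'s support, and the $CW$ terms being exactly the weight-zero triples in that support) follow from the same case analysis, and condition (2) is automatic because the target tensor is supported only on triples of the form $(g, g', gg')$.
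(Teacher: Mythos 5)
Your construction is exactly the one the paper uses: you assign weights $0,1,2$ to $x$- and $y$-variables indexed by the identity, a middle element, and the distinguished non-identity element respectively (and the negatives to the corresponding $z$-variables), and observe that the resulting permutation on the middle elements is $g \mapsto g^{-1}h$. The only difference is notational (the paper calls the identity $1$ and the distinguished element $g$ rather than $e$ and $h$), and your case analysis and bijectivity argument for $\sigma$ match the paper's verification.
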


\begin{proof}
Let $1 \in G$ be the identity, and let $g \in G$ be any other element. We define the maps $\alpha : X_G \to \Z$, $\beta : Y_G \to \Z$, and $\gamma : Z_G \to \Z$ which give our monomial degeneration as follows:
\begin{itemize}
    \item $\alpha(x_1) = \beta(y_1) = \gamma(z_1) = 0$,
    \item $\alpha(x_g) = \beta(y_g) = - \gamma(z_g) = 2$, and
    \item $\alpha(x_h) = \beta(y_h) = - \gamma(z_h) = 1$ for all $h \in G \setminus \{1,g\}$.
\end{itemize}

Let $T$ be the monomial degeneration of $T_G$ defined by $\alpha, \beta, \gamma$. Define the permutation $\sigma : G \setminus \{ 1,g \} \to G \setminus \{ 1,g \}$ which sends $h \in G$ to $\sigma(h) := h^{-1} g$. We can see that:

\begin{itemize}
    \item $x_1 y_1 z_1 \in T$ since $\alpha(x_1) = \beta(y_1) = \gamma(z_1) = 0$.
    \item $x_1 y_h z_h \in T$ for all $h \in G \setminus \{1\}$ (including $h=g$), since $\alpha(x_1) =0$ while $\beta(y_h) = - \gamma(z_h) = 1$.
    \item $x_h y_1 z_h \in T$ for all $h \in G \setminus \{1\}$ similarly.
    \item $x_h y_{\sigma(h)} z_g \in T$ for all $h \in G \setminus \{1,g\}$, since $\alpha(x_h) = \beta(y_{\sigma(h)}) = 1$, while $\gamma(z_g) = -2$.
\end{itemize}
Meanwhile,
\begin{itemize}
    \item $x_{h_1} y_{h_2} z_{h_3} \notin T$ for any $h_1, h_2, h_3 \in G \setminus \{1,g\}$ with $h_1 h_2 =  h_3$, since $\alpha(h_1) = \beta(h_2) = 1$ and $\gamma(h_3) = -1$, so the three sum to $1$.
    \item $x_h y_{h^{-1}} z_1 \notin T$ for any $h \in G \setminus \{1,g\}$ since $\alpha(x_h) = \beta(y_{h^{-1}}) = 1$ while $\gamma(z_1) = 0$, so the three sum to $2$.
    \item $x_g y_{h_1} z_{h_2} \notin T$ for any $h_1, h_2 \in G \setminus \{ 1,g \}$ with $g h_1  = h_2$, since $\alpha(x_g) = 2$, $\beta(y_{h_1}) = 1$, and $\gamma(z_{h_2}) = -1$, so the three sum to $2$.
    \item  $x_{h_1} y_{g} z_{h_2} \notin T$ for any $h_1, h_2 \in G \setminus \{ 1,g \}$ with $g h_1  = h_2$ similarly.
    \item $x_g y_{g^{-1}} z_1 \notin T$ since $\alpha(x_g) = 2$, $\beta(y_{g^{-1}}) = 1$, and $\gamma(z_1) = 0$, so the three sum to 3.
    \item $x_{g^{-1}} y_{g} z_1 \notin T$ similarly.
    \item $x_g y_g z_{g^{2}} \notin T$ since $\alpha(x_g) = \beta(y_g) = 2$, and definitely $\gamma(z_{g^{2}}) \geq -2$, so the three sum to at least 2.
\end{itemize}
This covers all the entries of $T_G$, showing that we have defined a valid monomial degeneration to $$T=x_1 y_1 z_1 + x_1 y_g z_g + x_g y_1 z_g + \sum_{h \in G \setminus \{1,g\}} (x_1 y_h z_h + x_h y_1 z_h + x_h y_{\sigma(h)} z_g).$$
This is indeed a generalized Coppersmith-Winograd tensor with parameter $|G \setminus \{1,g\}| = q-2$, as desired.
\end{proof}

\begin{remark} \label{rem:TGomegaUB}
An immediate consequence of this monomial degeneration is that
applying any implementation of the Solar, Galactic or Universal method on $T_G$ for \emph{any} finite group $G$ with $\tilde{R}(T_G) = |G|$ yields the same upper bounds on $\omega$ as the best known analysis of $CW_{|G|-2}$. Picking an appropriate group $G$ where group operations are known to be efficient in practice could help lead to a more practical matrix multiplication algorithm. 
\end{remark}

Next, we will use the fact that matrix multiplication tensors, and hence Coppersmith-Winograd tensors, have large asymptotic independence number, to show that for any finite group $G$, $T_G$ also has a relatively large independence number, and hence that $G^n$ has relatively large tri-colored sum-free sets for large enough $n$.

\begin{theorem} \label{thm:CWIlb}
Define $f : \N \to \R$ by $f(q) = \log_q \left( \frac{4(q+2)^3}{27} \right)$. For every positive integer $q$, and every tensor $T$ which is a generalized Coppersmith-Winograd tensor of parameter $q$, we have $\tilde{I}(T) \geq (q+2)^{2/f(q)}$.
\end{theorem}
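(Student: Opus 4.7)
The plan is to construct an explicit monomial degeneration from $T^{\otimes N}$ to a large independent tensor by composing (i) the Coppersmith–Winograd laser method applied to $T$ with (ii) Strassen's monomial degeneration of matrix multiplication tensors (Lemma \ref{lem:strassenmonomialdegen}), and then invoking Corollary \ref{cor:mondegtoit}.

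First I would observe that every generalized CW tensor $T$ inherits the same block decomposition structure as $CW_q$: three corner tensors of dimension $\langle 1,1,1\rangle$, and three face tensors $T_{110}, T_{101}, T_{011}$, each of which is a matrix multiplication tensor of the form $\langle a,b,c\rangle$ with $\{a,b,c\} = \{1,1,q\}$. The permutation $\sigma$ in $T_{110} = \sum_{i=1}^q x_i y_{\sigma(i)} z_0$ only relabels $y$-variables within the block, and does not affect either the matrix multiplication structure of the block, its dimensions, or the block-sum condition $I+J+K=2$. Consequently $\tilde{R}(T) = q+2$, and the CW laser method analysis applies to $T$ verbatim.

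Next I would apply the basic CW laser method to $T$: for any large $N = 3k$, using a Salem–Spencer 3-term-arithmetic-progression-free construction, one obtains a monomial degeneration of $T^{\otimes N}$ into a direct sum $F_N \odot \langle q^k, q^k, q^k\rangle$ of matrix multiplication tensors. Applying Lemma \ref{lem:strassenmonomialdegen} to each summand gives a further monomial degeneration into an independent tensor of size $\tfrac{3}{4} q^{2k}$, and composing yields a monomial degeneration of $T^{\otimes N}$ into an independent tensor of total size $F_N \cdot \tfrac{3}{4} q^{2k}$. By Corollary \ref{cor:mondegtoit},
\[ \tilde{I}(T) \;\geq\; \bigl(F_N \cdot \tfrac{3}{4}\, q^{2N/3}\bigr)^{1/N}. \]
Taking $N\to\infty$, the CW construction yields $F_N^{1/N}$ converging to a value (determined by the CW $\omega$-bound from $T$) at least $(27/4)^{2/(3f(q))}$; combined with the identity $(q+2)^3 = \tfrac{27}{4}\,q^{f(q)}$ (equivalent to the definition of $f$), this gives $\tilde{I}(T) \geq (q+2)^{2/f(q)}$, as desired.

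The main technical obstacle is executing the Salem–Spencer step of the CW analysis with enough precision to establish the required lower bound on $F_N^{1/N}$; this is standard but combinatorially delicate. The key conceptual point is that the entire CW machinery is insensitive to the permutation $\sigma$: every property it relies on (block decomposition, block-sum condition, face tensors being $\langle q,1,1\rangle$-type MM tensors, border rank $q+2$) holds uniformly across all generalized CW tensors. An alternative slicker route would go through Theorem \ref{thm:IROmega}: the CW analysis yields an upper bound on $\omega_g(T)$ from which Theorem \ref{thm:IROmega} gives the claimed lower bound on $\tilde{I}(T)$ directly.
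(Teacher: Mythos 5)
Your main route is essentially the paper's proof: observe that the Coppersmith--Winograd laser analysis is insensitive to the permutation $\sigma$ (only the block decomposition, the block-sum condition $I+J+K=2$, and the face tensors being $\langle q,1,1\rangle$-type matter), invoke it to degenerate $T^{\otimes n}$ into matrix multiplication tensors, and then pass to an independent tensor via Strassen's monomial degeneration. The paper packages the last step through Lemma~\ref{lem:mmind} (which internally uses Lemma~\ref{lem:strassenmonomialdegen}) applied to a single square $\langle t,t,t\rangle$, whereas you apply Lemma~\ref{lem:strassenmonomialdegen} and Corollary~\ref{cor:mondegtoit} directly to the direct sum $F_N\odot\langle q^k,q^k,q^k\rangle$; both routes give the same asymptotic exponent, and your bookkeeping (using $(q+2)^3 = \tfrac{27}{4}q^{f(q)}$) is correct, though you leave the Salem--Spencer count of $F_N$ as a black box.

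However, your closing remark that ``an alternative slicker route would go through Theorem~\ref{thm:IROmega}\dots\ which gives the claimed lower bound on $\tilde{I}(T)$ directly'' is wrong. That theorem gives $\tilde{I}(T)\geq \tilde{R}(T)^{6/\omega_g(T)-2}$, so from $\omega_g(T)\leq f(q)$ and $\tilde{R}(T)=q+2$ you would get $\tilde{I}(T)\geq (q+2)^{6/f(q)-2}$. Since $f(q)>2$ for all $q$, the exponent $6/f(q)-2$ is \emph{strictly smaller} than the claimed $2/f(q)$ (they agree only at $f(q)=2$), so this route proves a weaker statement. The loss comes from the proof of Theorem~\ref{thm:IROmega}, which bounds $\max\{a,b,c\}$ via $\min\{a,b,c\}$ and cannot exploit the fact that the CW construction actually produces a \emph{square} $\langle t,t,t\rangle$ (the crucial point noted in the paper's footnote). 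The theorem as stated genuinely needs the direct argument you give in the main body; the detour through Theorem~\ref{thm:IROmega} does not suffice.
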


\begin{remark} \label{rem:CWIlb}
For $q \geq 3$, we have $f(q) < 3$, and so $\tilde{I}(T) \geq (q+2)^{2/3}$.
\end{remark}

\begin{remark} \label{rem:CWIlb2}
In the proof of Theorem \ref{thm:CWIlb}, we use a simpler lower bound on $\omega_g(CW_q)$ than is known for ease of reading; it is, of course, possible to use the better known upper bounds on $\omega_g(CW_q)$ from \cite{coppersmith, legall} in the proof and improve the result.
\end{remark}

\begin{proof}[Proof of Theorem \ref{thm:CWIlb}]
Define $f : \N \to \R$ by $f(q) = \log_q \left( \frac{4(q+2)^3}{27} \right)$. In \cite[Section 6]{coppersmith}, Coppersmith and Winograd show that $\omega_g(CW_q) \geq f(q)$. Hence, for every $\delta>0$, there is a positive integer $n$ such that $CW_q^{\otimes n}$ has a zeroing out\footnote{In fact, $\omega_g(CW_q) \geq f(q)$ only implies that a matrix multiplication tensor of this volume exists as a zeroing out of $CW_q^{\otimes n}$; the fact that a square one can be found follows from the actual analysis of \cite{coppersmith}, which is symmetric in the three types of variables.} into $\langle t, t, t \rangle$, for $t \geq \tilde{R}(CW_q)^{(1-\delta)n/f(q)} = (q+2)^{(1-\delta)n/f(q)}$.

Because of the blocking used by their application of the Laser method, their bound actually holds for any generalized Coppersmith-Winograd tensor of parameter $q$, meaning that $T^{\otimes n}$ also has a zeroing out into $\langle t,t,t \rangle$. By Lemma \ref{lem:mmind}, we thus have $I(T^{\otimes n}) \geq t^2$, which means as desired that
$$\tilde{I}(T) \geq t^{2/n} \geq (q+2)^{\frac{2(1-\delta)}{f(q)}}.$$
\end{proof}

\begin{theorem}
For every (not necessarily abelian) finite group $G$, there is a constant $c_{|G|} > 2/3$, depending only on $|G|$, such that $\tilde{I}(T_G) \geq |G|^{c_{|G|}}$. In particular, $G^n$ has a tri-colored sum-free set of size at least $|G|^{c_{|G|}n - o(n)}$.
\end{theorem}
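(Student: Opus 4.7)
The plan is to combine Theorem~\ref{thm:anygroupcw}, which produces a monomial degeneration from $T_G$ into a generalized Coppersmith--Winograd tensor of parameter $|G|-2$, with Theorem~\ref{thm:CWIlb}, which lower bounds $\tilde I$ of any generalized CW tensor, and then translate the resulting lower bound on $\tilde I(T_G)$ into a tri-colored sum-free set construction via Lemma~\ref{lem:indtotcsfs}.

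First I would handle the generic case $|G|\ge 5$. Applying Theorem~\ref{thm:anygroupcw} with $q=|G|-2\ge 3$ yields a monomial degeneration from $T_G$ to a generalized CW tensor $T$ of parameter $q$. By the monotonicity of $\tilde I$ under monomial degenerations (Corollary~\ref{lem:IAltIB}), $\tilde I(T_G)\ge \tilde I(T)$, and Theorem~\ref{thm:CWIlb} together with Remark~\ref{rem:CWIlb} then gives $\tilde I(T_G)\ge |G|^{2/f(|G|-2)}$ where $2/f(|G|-2)>2/3$. I would set $c_{|G|}:=2/f(|G|-2)$ in this range.

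For the small-order cases $|G|\in\{2,3,4\}$, the parameter $q=|G|-2$ is too small for the estimate $f(q)<3$ to apply. The fix is to pass to a constant power: choose $k=k(|G|)$ with $|G|^k\ge 5$, regard $T_{G^k}=T_G^{\otimes k}$ as the group tensor of the order-$|G|^k$ finite group $G^k$, and apply the previous argument to it to obtain $\tilde I(T_{G^k})\ge |G|^{2k/f(|G|^k-2)}$. The inequality $\tilde I(T_G^{\otimes k})\le \tilde I(T_G)^k$, which follows directly from the definition $\tilde I(T)=\limsup_n I(T^{\otimes n})^{1/n}$ by restricting to the subsequence of multiples of $k$, then yields $\tilde I(T_G)\ge |G|^{2/f(|G|^k-2)}$; since $|G|^k-2\ge 3$ we have $c_{|G|}:=2/f(|G|^k-2)>2/3$. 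The trivial group case is vacuous.

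For the tri-colored sum-free set assertion, I would combine $\tilde I(T_G)\ge |G|^{c_{|G|}}$ with the identification $T_G^{\otimes n}=T_{G^n}$ and Lemma~\ref{lem:indtotcsfs}: the definition of $\tilde I$ as a $\limsup$ supplies infinitely many $n$ for which $T_{G^n}$ zeroes out into an independent tensor of size at least $|G|^{(c_{|G|}-o(1))n}$, and Lemma~\ref{lem:indtotcsfs} converts each such independent tensor into a tri-colored sum-free set in $G^n$ of the same size. To promote ``infinitely many $n$'' to ``all sufficiently large $n$'', I would appeal to the supermultiplicativity of the maximum sum-free set size $s_n$ in $G^n$ (the Cartesian product of a tri-colored sum-free set in $G^a$ and one in $G^b$ is a tri-colored sum-free set in $G^{a+b}$) together with Fekete's lemma, which ensure that $\lim_n s_n^{1/n}$ exists, so $s_n\ge |G|^{c_{|G|}n - o(n)}$ for all sufficiently large $n$.

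The main obstacle is essentially bookkeeping around the power trick for small groups: one has to confirm that Theorem~\ref{thm:anygroupcw} applies uniformly to every finite group (including $G^k$), and that the $k$-th-root step does not close the $>2/3$ gap. Both go through because $2/f(|G|^k-2)$ stays strictly above $2/3$ as soon as $|G|^k\ge 5$, and the constant $c_{|G|}$ produced this way depends only on $|G|$, as required.
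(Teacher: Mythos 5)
Your proof is correct and the core of the argument (for $|G|\ge 5$) is the same as the paper's: apply Theorem~\ref{thm:anygroupcw} to obtain a monomial degeneration from $T_G$ to a generalized CW tensor of parameter $|G|-2$, then combine Corollary~\ref{lem:IAltIB} with Theorem~\ref{thm:CWIlb} and Remark~\ref{rem:CWIlb}. Where you genuinely diverge is the treatment of the small groups $|G|\in\{2,3,4\}$: the paper simply cites \cite{kleinberg} for these cases, whereas your power trick---passing to $G^k$ with $|G|^k\ge 5$, using the identification $T_{G^k}=T_G^{\otimes k}$ and the submultiplicativity $\tilde{I}(T_G^{\otimes k})\le \tilde{I}(T_G)^k$ to take a $k$-th root---keeps the argument self-contained. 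The bookkeeping is right: $c_{|G|}:=2/f(|G|^k-2)>2/3$ since $|G|^k-2\ge 3$, and $c_{|G|}$ depends only on $|G|$. Your derivation of the ``in particular'' clause is also more explicit than what the paper records: you correctly note that $\tilde{I}$ being a $\limsup$ only directly gives good tri-colored sum-free sets along a subsequence, and you fill the gap via the supermultiplicativity of the maximum tri-colored sum-free set size (from Cartesian products) together with Fekete's lemma to upgrade the $\limsup$ to a $\lim$. The paper leaves this last step implicit, so your write-up is slightly more complete; the trade-off is one extra small argument in place of the external citation.
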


\begin{proof}
The only finite groups $G$ of order $|G|<5$ are $C_1, C_2, C_3, C_4$, and $C_2^2$ (where $C_\ell$ denotes the cyclic group of order $\ell$). For each of these groups, the result is shown, eg. by \cite{kleinberg}. For $|G| \geq 5$, we know from Theorem~\ref{thm:anygroupcw} that $T_G$ has a monomial degeneration to a generalized Coppersmith-Winograd tensor of parameter $|G|-2$, and so the result follows by Theorem \ref{thm:CWIlb} (and in particular, Remark \ref{rem:CWIlb2}).
\end{proof}

\subsection{$T_q$ lower triangular}

Recall that, for each positive integer $q$, we defined the tensor $T_q$ (the group tensor of the cyclic group $C_q$) as:
$$T_q=\sum_{i=0}^{q-1}\sum_{j=0}^{q-1} x_i y_j z_{i+j\bmod q}.$$
We can then define the lower triangular version of $T_q$, called $T_q^{lower}$, as:
$$T_q^{lower}=\sum_{i=0}^{q-1}\sum_{j=0}^{q-1-i} x_i y_j z_{i+j}.$$
We clearly have $T_q^{lower} \subseteq T_q$, and in fact, there is a simple monomial degeneration to $T_q$ from $T_q^{lower}$ by picking $a(x_i) = b(x_i)=i$ and $c(z_i) = -i$. $T_q^{lower}$ is a natural tensor in its own right, and the fact that each of its $z$-variables only appears on `diagonals' of $x$ and $y$-variables makes it particularly amenable to analysis using the Laser Method. It is even shown in \cite{almanitcs} that the rotated $CW_q$ tensor has a simple monomial degeneration from $T_q^{lower}$.

Since $T_q$ is the group tensor of $C_q$, we already know from Theorem~\ref{thm:groupomeganot2} that $\omega_g(T_q) > 2$. Moreover, since $T_q^{lower}$ is a  monomial degeneration of $T_q$, we already know that $\omega_g(T_q^{lower})>2$ as well. That said, we can instead give a simpler proof of this fact, which avoids the tri-colored sum-free set framework.

\begin{theorem} \label{thm:tqtriangle}
For each integer $q \geq 2$, there is a constant $c_q > 2$ such that $\omega_g(T_q^{lower}) \geq c_q$.
\end{theorem}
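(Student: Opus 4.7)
The plan is to apply the ``corner term'' machinery we already developed, namely Corollary~\ref{cor:corners}, essentially verbatim. I first want to observe that $T_q^{lower}$ is a concise tensor over $X,Y,Z$ with $|X|=|Y|=|Z|=q$ (every $x_i$, $y_j$, and $z_k$ for $k \in \{0,\ldots,q-1\}$ shows up in some term), and that $\tilde{R}(T_q^{lower}) = q$ because $T_q^{lower}$ is a zeroing out of $T_q$, which has asymptotic rank $q$, while conciseness forces $\tilde{R}(T_q^{lower}) \geq q$.

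Next I would inspect which triples in $T_q^{lower}$ involve $x_{q-1}$ and $y_{q-1}$. Since we need $i+j \leq q-1$, the only term containing $x_{q-1}$ is $x_{q-1} y_0 z_{q-1}$, and the only term containing $y_{q-1}$ is $x_0 y_{q-1} z_{q-1}$. Thus, after the relabeling $x_{q-1}\mapsto x_q$, $x_0 \mapsto x_1$, $y_{q-1}\mapsto y_q$, $y_0\mapsto y_1$, $z_{q-1}\mapsto z_1$, the tensor $T_q^{lower}$ meets the hypotheses of Corollary~\ref{cor:corners}: it contains the corner terms $x_q y_1 z_1$ and $x_1 y_q z_1$, and neither $x_q$ nor $y_q$ appears in any other triple. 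Corollary~\ref{cor:corners} therefore gives $\tilde{I}(T_q^{lower}) \leq c_q'$ for some constant $c_q' < q$ depending only on $q$.

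Finally, combining $\tilde{I}(T_q^{lower}) \leq c_q' < q = \tilde{R}(T_q^{lower})$ with Corollary~\ref{cor:omegaandi} produces a constant $c_q > 2$ with $\omega_g(T_q^{lower}) \geq c_q$, completing the proof. The entire argument reduces to two structural observations about $T_q^{lower}$ (asymptotic rank $q$, plus the two corner triples at the $(q-1,0,q-1)$ and $(0,q-1,q-1)$ indices); no obstacle is anticipated, as Corollaries~\ref{cor:corners} and~\ref{cor:omegaandi} do the real work. The only care needed is checking the ``only appears in these terms'' condition, which is immediate from the triangular inequality $i+j \leq q-1$.
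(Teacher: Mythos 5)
Your proposal is correct and takes essentially the same approach as the paper: identify the two corner terms $x_{q-1} y_0 z_{q-1}$ and $x_0 y_{q-1} z_{q-1}$ (the only terms containing $x_{q-1}$ and $y_{q-1}$, respectively, by the constraint $i+j\le q-1$), apply Corollary~\ref{cor:corners} to conclude $\tilde{I}(T_q^{lower})<q=\tilde{R}(T_q^{lower})$, and finish with Corollary~\ref{cor:omegaandi}. The paper's proof is a one-liner asserting $T_q^{lower}$ is of the required form; you simply spell out the corner-term check and the asymptotic-rank computation that the paper leaves implicit.
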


\begin{proof}
For each $q$, the tensor $T_q^{lower}$ is of the form described by Corollary \ref{cor:corners}, which says that $\tilde{I}(T_q^{lower}) < q$. It then follows from Corollary \ref{cor:omegaandi} that $\omega_g(T_q^{lower}) > 2$, as desired.
\end{proof}

\begin{remark}
The main result of \cite{kleinberg} can be interpreted as showing that $\lim_{q \to \infty} \log_q(\tilde{I}(T_q^{lower})) = 1$. Hence, it is impossible to improve Theorem~\ref{thm:tqtriangle} to make $c_q$ be a constant independent of $q$ if our proof only uses a bound on $\tilde{I}(T_q^{lower})$. Interestingly, they also show that for all $q$, $\tilde{I}(T_q) = \tilde{I}(T_q^{lower})$.
\end{remark}

\subsection{Lower Triangular Tensors}

In fact, we can give a strong characterization of lower triangular tensors which are potentially able to prove $\omega=2$ within the Galactic method.

\begin{definition}
For $X = \{x_0, \ldots, x_{q-1}\}$, $Y = \{y_0, \ldots, y_{q-1}\}$ and $Z = \{z_0, \ldots, z_{q-1}\}$, a tensor $T$ over $X,Y,Z$ is \emph{lower triangular} if 
\begin{itemize}
    \item For every $i,j \in \{0,\ldots,q-1\}$, there is at most one $k \in \{0,\ldots,q-1\}$ with $x_i y_j z_k \in T$, and
    \item For every $i,j \in \{0,\ldots,q-1\}$ with $i+j\geq q$, $x_i y_j z_k \notin T$ for any $k \in \{1,\ldots,q\}$.
\end{itemize}
Terms $x_i y_j z_k$ with $i+j=q-1$ are called \emph{diagonal terms}.
\end{definition}

\begin{theorem}
For $X = \{x_0, \ldots, x_{q-1}\}$, $Y = \{y_0, \ldots, y_{q-1}\}$ and $Z = \{z_0, \ldots, z_{q-1}\}$, a lower triangular tensor $T$ over $X,Y,Z$ has $\tilde{I}(T) = q$ if and only if it has $q$ diagonal terms, no two of which share any $z$-variables.
\end{theorem}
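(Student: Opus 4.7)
The plan is to prove the two directions separately. For the forward (``if'') direction, suppose the $q$ diagonal terms in $T$ are $x_i y_{q-1-i} z_{\sigma(i)}$ for a bijection $\sigma$ on $\{0,\ldots,q-1\}$ --- injectivity of $\sigma$ is exactly the hypothesis that diagonal $z$-variables are distinct. Let $D := \sum_{i=0}^{q-1} x_i y_{q-1-i} z_{\sigma(i)}$, which is manifestly an independent tensor of size $q$. I would realize $D$ as a monomial degeneration of $T$ via the integer-valued maps $a(x_i) := -i$, $b(y_j) := -j$, and $c(z_k) := q-1$ for all $k$: any term $x_i y_j z_k \in T$ satisfies $a + b + c = (q-1) - (i+j) \geq 0$ by lower triangularity, with equality iff $(i,j)$ lies on the diagonal. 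Corollary~\ref{lem:IAltIB} then yields $\tilde I(T) \geq \tilde I(D) = q$, matching the trivial upper bound $\tilde I(T) \leq q$.

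For the reverse (``only if'') direction, assume $\tilde I(T) = q$, so Theorem~\ref{thm:probs} applies with $\delta = 0$. Taking $\kappa_n = 1/n$ produces distributions $p_n$ on the (finitely many) terms of $T$ with each variable marginal at least $1/q - \sqrt{\ln(q)/n}$. By compactness of the probability simplex over $T$'s terms, I pass to a convergent subsequence whose limit $p$ satisfies every marginal $\geq 1/q$; since the $q$ marginals on $x$-variables sum to $1$, all $3q$ marginals equal $1/q$ exactly. Now I use lower triangularity to force $p$ onto the diagonal: for each $i_0 \in \{1,\ldots,q-1\}$, the mass on terms with $i \geq i_0$ equals $(q-i_0)/q$ and the mass on terms with $j \geq q-i_0$ equals $i_0/q$. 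These two sets are disjoint (a term with both $i \geq i_0$ and $j \geq q-i_0$ would have $i+j \geq q$, impossible here), so their masses already sum to $1$, and every term with $i < i_0$ and $j < q-i_0$ has $p$-mass zero. Letting $i_0$ range over $\{1,\ldots,q-1\}$ covers every pair $(i,j)$ with $i+j \leq q-2$, so $p$ lives on the diagonal. The constraint $p(x_i) = 1/q$ then forces each diagonal pair $(i,q-1-i)$ to be present in $T$ with $p$-mass $1/q$ (so all $q$ diagonal terms exist), and writing the diagonal-to-$z$ assignment as $\sigma$, the identity $p(z_k) = |\sigma^{-1}(k)|/q = 1/q$ forces $\sigma$ to be a bijection, i.e., the diagonal $z$-variables are pairwise distinct.

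The step I expect to require the most care is the compactness / limit argument that upgrades Theorem~\ref{thm:probs}'s $\kappa$-parameterized family of near-uniform distributions into a single distribution with exact uniform $1/q$ marginals; once that is in place, the combinatorial argument that kills off-diagonal mass by intersecting row and column marginals is a short induction over $i_0$.
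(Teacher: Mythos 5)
Your forward direction is identical to the paper's: the same monomial degeneration $a(x_i)=-i$, $b(y_j)=-j$, $c(z_k)=q-1$, and the same appeal to Corollary~\ref{lem:IAltIB}.

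For the reverse direction, you take a genuinely different route and it is correct. The paper keeps $\kappa>0$ finite, runs a strong induction over diagonal positions $j$ to show that each diagonal term carries mass at least $1/q - O_q(\kappa)$, and then derives a contradiction for small $\kappa$ by assuming some $z_k$ is missed. Your proof instead compactifies first --- passing to a subsequential limit over $\kappa_n \to 0$ to obtain a single distribution with exactly uniform marginals --- and then kills the off-diagonal mass in one stroke: for each $i_0$, the rows $i \geq i_0$ and columns $j \geq q-i_0$ are disjoint by lower triangularity, already carry all the mass, and hence force every $(i,j)$ with $i+j \leq q-2$ to have zero mass. This is cleaner than the paper's argument in two respects: you avoid carrying $O_q(\kappa)$ error terms through an induction, and you handle missing terms seamlessly (the marginal constraint $p(x_i)=1/q$ directly forces each diagonal term to exist, whereas the paper has to remark separately that missing terms ``make the proof even simpler''). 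Both proofs ultimately show that the induced map $\sigma$ from diagonal positions to $z$-indices is a bijection, the paper via surjectivity, you via injectivity of preimages.

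One minor care point in your compactness step: you should note that the simplex is over the finite set of terms of $T$ (so it is compact), and that the marginal lower bounds are closed conditions, so they survive the limit. Both facts are immediate, but worth a sentence.
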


\begin{proof}
Consider first any lower diagonal tensor $T$ whose $q$ diagonal terms do not share $z$-variables. There is a simple monomial degeneration from $T$ to only its diagonal terms, given by $a(x_i) = b(y_i) = -i$ and $c(z_i) = q-1$ for all $i$. Since no two of the diagonal terms share $z$-variables, this is a monomial degeneration from $T$ to an independent tensor of size $q$, which implies by Corollary~\ref{lem:IAltIB} that $\tilde{I}(T) = q$.

Second, consider any lower diagonal tensor $T$ with $\tilde{I}(T)=q$. Let $f : \{0,\ldots,q-1\}^2 \to \{0,\ldots,q-1\}$ be the map defining which $z$-variable appears in each term, i.e. such that $x_i y_{j} z_{f(i,j)}$ is the only term containing $x_iy_j$ for each $i,j$ (we assume that such a term exists for each $i,j$; if $T$ is missing any such terms, then the proof is even simpler). By Theorem~\ref{thm:probs}, we know that for every $\kappa>0$, there is a probability distribution $p : X \otimes Y \otimes Z \to [0,1]$ whose support is on the terms of $T$, such that for any fixed $i$, $p(x_i) := \sum_{x_i y_j z_k} p(x_i y_j z_k) \geq 1/q - \kappa$, and similarly for $p(y_j)$ and $p(z_k)$. Summing this lower bound for all $x$-variables other than $x_i$ also shows that $p(x_i) \leq 1/q + (q-1)\kappa$ for each $i$, and similarly for $p(y_j)$ and $p(z_k)$.

We now prove that for each $j \in \{0,\ldots,q-1\}$, we have $p(x_{q-1-j}y_jz_{f(q-1-j,j)}) \geq 1/q - O_q(\kappa)$, where we are thinking of $q$ as a constant, so the $O_q$ hides factors of $q$. We prove this by strong induction on $j$. For the base case, when $j=0$, notice that the term $x_{q-1}y_{0}z_{f(q-1,0)}$ is the only term containing $x_{q-1}$, and so $p(x_{q-1}y_{0}z_{f(q-1,0)}) = p(x_{q-1}) \geq 1/q - \kappa$, as desired.

For the inductive step, note that for each $j' < j$, we have by assumption that $p(x_{q-1-j'}y_{j'}z_{f(q-1-j',j')}) \geq 1/q - O_q(\kappa)$. Therefore, for each such $j'$,

\begin{align*}& p(x_{q-1-j} y_{j'}z_{f(i,j')}) \\ &\leq \sum_{i=0}^{q-2-j'} p(x_i y_{j'}z_{f(i,j')}) \\ &= p(y_{j'}) - p(x_{q-1-j'}y_{j'}z_{f(q-1-j',j')}) \leq (1/q + (q-1)\kappa) - (1/q - O_q(\kappa)) = O_q(\kappa).\end{align*}

It follows as desired that
$$p(x_{q-1-j}y_jz_{f(q-1-j,j)}) = p(x_{q-1-j}) - \sum_{j'=0}^{j-1}p(x_{q-1-j} y_{j'}z_{f(i,j')}) \geq p(x_{q-1-j}) - O_q(\kappa) \geq 1/q - O_q(\kappa).$$

Now, assume to the contrary that there is a $k$ such that $k \neq f(q-1-j,j)$ for any $j$. Thus,
$$p(z_k) \leq 1 - \sum_{j=0}^{q-1} p(x_{q-1-j}y_jz_{f(q-1-j,j)}) \leq 1 - \sum_{j=0}^{q-1} (1/q - O_q(\kappa)) = O_q(\kappa).$$
Picking a sufficiently small $\kappa>0$ contradicts Theorem~\ref{thm:probs}.
\end{proof}

\paragraph{Acknowledgments.} The authors are extremely grateful to JM Landsberg and Joshua A. Grochow for answering their many questions, and to Ryan Williams for his many useful suggestions.

\bibliographystyle{alpha}
\bibliography{papers}

\end{document}